\documentclass[12pt]{article}
\usepackage[utf8]{inputenc}
\usepackage{amsfonts,amssymb,amsmath,amsthm}
\usepackage{mathrsfs}
\usepackage{fullpage}
\usepackage[colorlinks=true,linkcolor=blue,citecolor=blue,breaklinks=true]{hyperref}
\usepackage{breakcites}
\usepackage{algorithm}
\usepackage{algorithmicx}
\usepackage[noend]{algpseudocode}
\usepackage{bm}
\usepackage{dsfont}
\usepackage{graphicx, float, tikz}
\usepackage{colonequals}
\usepackage{color}
\usepackage{multicol}
\usepackage{tikz}
\usepackage{fancybox}
\usepackage[breakable, theorems, skins]{tcolorbox}
\usetikzlibrary{matrix,arrows}
\usetikzlibrary{positioning}
\usetikzlibrary{decorations}
\usetikzlibrary{decorations.pathreplacing}
\usetikzlibrary{shapes.geometric, chains, calc, fit, matrix}
\tikzstyle{system}=[rectangle,draw,fill=lightgray,minimum height=0.8cm,minimum
            width=0.8cm,thick]
\tikzstyle{BC}=[system]
\tikzstyle{resource}=[system]
\tikzstyle{RO}=[resource, minimum width=1cm]
\tikzstyle{protocol}=[circle, inner sep=0.7mm, draw]
\tikzstyle{simulator}=[circle, inner sep=0.7mm, draw]
\tikzstyle{memory}=[resource]
\tikzstyle{distinguisher}=[resource,fill=white,minimum width=3.5cm,
minimum height=1.2cm]
\tikzstyle{link}=[]

\usepackage{cleveref}

\usepackage{paralist}
\usepackage{enumerate}
\usepackage{enumitem}
\usepackage[n,
advantage,
operators,
sets,
adversary,
landau ,
probability,
notions,
logic,
ff,
mm,
primitives,
events,
complexity,
asymptotics,
keys]{cryptocode}

\usetikzlibrary{arrows,automata}
\usepackage{quantikz}

\usepackage{braket}

\usepackage[framemethod=tikz]{mdframed}
\usepackage{subcaption}

\usepackage{algpseudocode}

\usepackage{cleveref}

\newcommand{\bbN}{\mathbb{N}}

\newcommand{\calA}{\mathcal{A}}
\newcommand{\calB}{\mathcal{B}}

\newcommand{\calD}{\mathcal{D}}
\newcommand{\calE}{\mathcal{E}}
\newcommand{\calF}{\mathcal{F}}

\newcommand{\calH}{\mathcal{H}}

\newcommand{\calP}{\mathcal{P}}
\newcommand{\calQ}{\mathcal{Q}}
\newcommand{\calR}{\mathcal{R}}
\newcommand{\calS}{\mathcal{S}}
\newcommand{\calT}{\mathcal{T}}

\newcommand{\calW}{\mathcal{W}}
\newcommand{\calX}{\mathcal{X}}
\newcommand{\calY}{\mathcal{Y}}
\newcommand{\calZ}{\mathcal{Z}}

\newtheorem{theorem}{Theorem}
\newtheorem{corollary}{Corollary}
\newtheorem{definition}{Definition}
\newtheorem{lemma}{Lemma}
\newtheorem{claim}{Claim}
\newtheorem{remark}{Remark}

\theoremstyle{definition}
\newtheorem{construction}{Construction}
\crefname{theorem}{Theorem}{Theorems}
\crefname{corollary}{Corollary}{Corollaries}
\crefname{definition}{Definition}{Definitions}
\crefname{lemma}{Lemma}{Lemmas}
\crefname{claim}{Claim}{Claims}
\crefname{construction}{Construction}{Constructions}

\newcommand{\Accept}{\mathsf{Acc}}
\newcommand{\Reject}{\mathsf{Rej}}

\newcommand{\lang}{\mathcal{L}}
\newcommand{\NP}{\mathsf{NP}}

\newcommand{\QPT}{\mathsf{QPT}}
\newcommand{\PPT}{\mathsf{PPT}}

\renewcommand{\secpar}{\lambda}

\newcommand{\Hyb}{\mathsf{Hyb}}
\newcommand{\Sim}{\mathsf{Sim}}

\newcommand{\Com}{\mathsf{Com}}
\newcommand{\Enc}{\mathsf{Enc}}
\newcommand{\Dec}{\mathsf{Dec}}

\newcommand{\KeyGen}{\mathsf{KeyGen}}

\DeclareMathOperator{\Tr}{\mathsf{Tr}}

\newcommand{\ketbra}[1]{\ket{#1}\!\!\bra{#1}}

\newcommand{\CZ}{\mathsf{CZ}}

\newcommand{\email}[1]{\href{mailto:#1}{#1}}

\newcommand{\Setup}{\mathsf{Setup}}
\newcommand{\pubparams}{\mathsf{pp}}
\newcommand{\td}{\mathsf{td}}
\newcommand{\mode}{\mathsf{mode}}
\newcommand{\lossy}{\mathsf{lossy}}
\newcommand{\injective}{\mathsf{injective}}
\newcommand{\rand}{\mathsf{rand}}

\newcommand{\Inv}{\mathsf{Inv}}
\newcommand{\Good}{\mathsf{Good}}

\newcommand{\aux}{\mathsf{aux}}
\newcommand{\ct}{\mathsf{ct}}

\newcommand{\csh}{\mathsf{csh}}
\newcommand{\qsh}{\mathsf{qsh}}
\newcommand{\com}{\mathsf{com}}
\newcommand{\decom}{\mathsf{d}}

\newcommand{\cert}{\mathsf{cert}}
\newcommand{\verkey}{\mathsf{vk}}

\newcommand{\Gen}{\mathsf{Gen}}
\newcommand{\Ver}{\mathsf{Ver}}
\newcommand{\Commit}{\mathsf{Commit}}
\newcommand{\Reveal}{\mathsf{Reveal}}
\newcommand{\Eval}{\mathsf{Eval}}
\newcommand{\Share}{\mathsf{Share}}
\newcommand{\Reconstruct}{\mathsf{Reconstruct}}
\newcommand{\Del}{\mathsf{Del}}
\newcommand{\DecDel}{\mathsf{Dec}\text{-}\mathsf{Del}}

\newcommand{\PKE}{\mathsf{PKE}}
\newcommand{\FHE}{\mathsf{FHE}}
\newcommand{\SQSS}{\mathsf{SQSS}}
\newcommand{\COM}{\mathsf{COM}}

\newcommand{\CDGame}{\mathsf{CD\text{-}Game}}
\newcommand{\CollapseGame}{\mathsf{Collapse}}

\newcommand{\SPNP}{\WPNP}
\newcommand{\WPNP}{\mathsf{WP\text{-}NP}}

\newcommand{\Advtg}{\mathsf{Advtg}}

\newcommand{\Ext}{\mathsf{Ext}}
\newcommand{\IndExt}{\mathsf{IndExt}}

\newcommand{\RSP}{\mathsf{RSP}}

\newcommand{\LWE}{\mathsf{LWE}}

\newcommand{\execution}[1]{\left\langle #1 \right\rangle}
\newcommand{\INDCPA}{\mathsf{IND}\text{-}\mathsf{CPA}}

\title{Semi-Quantum Cryptography with Certified Deletion}

\author{Yael Tauman Kalai\thanks{MIT. \email{tauman@mit.edu}} \and Justin Raizes\thanks{NTT Research. \email{justin.raizes@ntt-research.com}}}

\date{}
\begin{document}
\maketitle

\begin{abstract}
    Certified deletion allows a client to upload encrypted data to a server as a quantum state, then later request that the server delete their data and detect whether the server complies. If verification passes, then the data on the server will remain hidden even if the decryption key is later leaked. In publicly verifiable certified deletion, the verification key is published so that anyone may check for deletion compliance.

In this work, we give a general compiler to publicly verifiable certified deletion that allows the client to initially upload the \emph{quantum} ciphertext using purely \emph{classical} communication, assuming the post-quantum hardness of LWE in the plain model. Our approach can be applied to a wide variety of primitives, including commitments and public-key, attribute-based, identity-based, and fully-homomorphic encryption.

Moreover, our constructions allow the client to classically manage the encrypted data beyond just verifying its deletion. The classical client may non-destructively audit the ciphertext via a Proof of No Intrusion to check whether it has been leaked to a third party. 
The classical client may also retrieve the data while \emph{simultaneously} verifying its deletion from the server. Thus, they do not have to choose between retrieving the data and protecting it against future key leakage.

As our core technical contribution, we give a classical-communication protocol and simulation technique that allows adapting purification-based security arguments for BB84-style states which are prepared through classical interaction. 
It makes the required purification available in a hybrid experiment, despite the classical transcript uniquely determining the prepared state in a real execution.

\end{abstract}

\clearpage
\tableofcontents
\clearpage

\pagenumbering{arabic} 

\section{Introduction}

When encrypted data reaches the end of its lifespan, its owner may request that the server storing it erase it to protect against future exposure of the decryption key.
However, erasure is impossible to enforce for classical ciphertexts because a malicious server can always undetectably copy the ciphertext.
Broadbent and Islam~\cite{TCC:BroIsl20} proposed the use of quantum ciphertexts to combat this impossibility. In encryption with certified deletion, erasure of the ciphertext produces a verifiable certificate. A valid certificate proves that the holder of the ciphertext can no longer recover the data, \emph{even if the decryption key is subsequently revealed}. 

Certified deletion thus provides a cryptographic mechanism to enforce data-management requests. 
Subsequent works developed this approach across a range of settings, allowing clients to obtain deletion guarantees for software~\cite{EC:BGKMRR24}, proofs of NP statements~\cite{AC:AbbKat25}, and many other cryptographic objects~\cite{C:HMNY22,C:BarKhu23,EC:AKNYY23,EC:HKMNPY24,C:BarRai24}.
Certified deletion is also useful for management tasks that do not require discarding the stored data.
For example, ~\cite{TCC:CGLR24} uses publicly verifiable certified deletion to detect whether a ciphertext has been stolen.

\paragraph{Importance of a Classical Client.}
Unfortunately, exercising these management capabilities typically requires quantum operations at the client. Preparing and uploading the initial ciphertext requires quantum communication, while intrusion detection may require coherent access to the stored ciphertext itself. 
While a large company may find it worthwhile to invest in their own quantum infrastructure, it is cost-prohibitive for individual users to do so. 

\paragraph{Classical Management of Quantum Data.}
A few works have made specific certified deletion-based management tasks accessible to fully classical clients. 
For classical upload, \cite{AC:HMNY21} constructs public-key encryption (PKE) with certified deletion either in the quantum random oracle model or assuming one-shot signatures and extractable witness encryption. However, a plain-model construction from standard assumptions remains open.
For auditing, \cite{EC:GoyRai26} introduced proofs of no intrusion, which allow a client to check for theft through classical interaction without destroying the stored ciphertext. Their construction requires an initial quantum upload, however, and explicitly leaves classical upload as an open problem.

Classical clients storing data on quantum servers also raises another question: how does the client retrieve their data? Many existing certified deletion schemes support the conversion from a quantum ciphertext to a classical ciphertext, which can then be sent to the client for decryption. However, this leaves the converted ciphertext as a permanent security risk in case of future key leakage. Ideally, the client should be able to retrieve their data while also erasing it from the server.

\begin{quote}
    \begin{center}
        \emph{Can a classical client upload, audit, and erase or retrieve encrypted data from a quantum server while protecting against future key leakage?}
    \end{center}
\end{quote}

\paragraph{Our Results.}
We present a simple compiler for converting any semantically-secure cryptographic primitive into the same primitive with certified deletion where the initial upload is \emph{classical}, assuming only the post-quantum hardness of LWE. Concretely, for any $X$ in $\{$secret-sharing, public-key encryption, attribute-based or identity-based encryption, fully-homomorphic encryption, commitment$\}$, we obtain the corresponding primitive by plugging $X$ into our compiler. The deletion certificates are publicly verifiable, which enables non-destructive auditing via \cite{DBLP:journals/iacr/KalaiKR26}'s generic compiler.
Moreover, we extend our compiler to generically add classical retrieval of the encrypted data while simultaneously deleting it from the server.

\paragraph{New Techniques.} 
Our main technical contribution is a method for applying purification-based security arguments to BB84-style states which are prepared through classical interaction. 
These arguments replace preparation of random $\{\ket{0/1}, \ket{+/-}\}$ states by half of an EPR pair where the reduction retains the purifying half. We develop a classical preparation protocol together with a simulation technique that makes the required purification available in a hybrid experiment. The analysis can then carry out the entanglement-based argument despite the fact that the classical transcript \emph{uniquely determines} the prepared state in a real execution. 

Thus, we are able to lift \cite{C:BarKhu23}'s entanglement-based argument for certified deletion to the classical upload case.
Given the ubiquity of purification-based arguments for arguing security of quantum cryptographic constructions, we expect the new technique to be of independent interest.

\subsection{Results}

As our main result, we give a general compiler for certified deletion where the initial upload uses only classical interaction, assuming the post-quantum hardness of LWE.
As compared to \cite{AC:HMNY21}, our constructions use far weaker assumptions and span a wider range of primitives. 
In particular, our compiler does not rely on complex primitives such as receiver non-committing encryption, which is incompatible with fully-homomorphic encryption~\cite{PKC:KatThiZho13}. 
Moreover, the compiled ciphertexts may be decrypted using only the secret keys from the original scheme, thus ensuring compatibility with existing public-key infrastructure.

\begin{theorem}\label{thm:sq-cd-primitives-intro}
    Assuming the post-quantum hardness of LWE and the existence of $X\in \{$secret-sharing, PKE, ABE, IBE, commitments$\}$, there exist semi-quantum $X$ with publicly-verifiable certified deletion in the plain model.
\end{theorem}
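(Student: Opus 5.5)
We construct a single generic compiler and reduce each item of the theorem to one compiler theorem, instantiated with the relevant base primitive $X$. The compiler follows the template of \cite{C:BarKhu23}. To encrypt (or commit to, or share) a message $m$, sample $x,\theta\in\{0,1\}^\lambda$. The quantum part of the ciphertext is the BB84 state $\ket{x}_\theta$. The classical part is $X.\Enc(\theta, m\oplus \bigoplus_{i:\theta_i=0} x_i)$. A deletion certificate is a measurement of every qubit in the Hadamard basis; it is valid if it agrees with $x$ on the positions with $\theta_i=1$.

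The difference here is that the client is classical. We therefore replace quantum transmission of $\ket{x}_\theta$ with a classical remote-state-preparation protocol based on LWE, using NTCF or trapdoor claw-free families. In this protocol the client sends key material for each qubit. Keys are injective for computational-basis positions and 2-to-1 for Hadamard positions, with the basis choice hidden by LWE. The server evaluates the function in superposition, measures the image $y$, and measures the preimage register in the Hadamard basis to get $d$. It sends back $y,d$, which (up to a known phase) leaves a qubit in state $\ket{x_i}$ or $H\ket{x_i}$. Using the trapdoor, the client recovers $x_i$, or in the Hadamard case the phase bit determined by $d$ and the claw. For public verifiability, we publish as the verification key a classical description that checks Hadamard-basis outcomes, namely the claw-derived predicate for the $\theta_i=1$ positions. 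We can also publish this in obfuscated or committed form, or use the subspace variant that admits public checking.

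Correctness is immediate. The security proof for certified-deletion semantic security runs through hybrids. First, we use LWE to argue that the function keys do not reveal $\theta$. Second, and this is the key step, we move to a hybrid in which the client's prepared qubits are half of EPR pairs. The reduction keeps the purifying register and learns $x$ only by measuring it later, after the adversary has output its certificate, in the basis $\theta$. Third, we apply the \cite{C:BarKhu23} argument. Conditioned on a valid certificate, the $\theta_i=0$ positions of the purification are, by a monogamy/uncertainty argument, almost uniform to anyone holding the adversary's leftover state, even given $\theta$ and hence the key. So the pad $\bigoplus_{\theta_i=0}x_i$ hides $m$. Before the certificate is checked, the reduction only needs semantic security of $X$ to switch $\theta$ to $0$, and the general "certified everlasting lemma" style theorem then applies.

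The main obstacle is the second hybrid. In the real execution the classical transcript $(\text{keys}, y, d)$ uniquely determines the prepared state, so there is no purification for the reduction to hold. To create one, we simulate the client coherently. For Hadamard positions, the 2-to-1 structure allows the reduction to keep the two claw preimages in superposition. It then defers the computation of the phase bit, which depends on $d\cdot(x_0\oplus x_1)$, until the end of the experiment. For computational positions, we argue via lossiness (switching injective keys to lossy/2-to-1 keys under LWE after the certificate is fixed) that the simulated and real views are indistinguishable. We also have to check that this coherent simulation commutes with the public verification predicate, so that the "valid certificate" event is preserved across hybrids. With this step in place, the ABE, IBE, and FHE cases follow because the compiler only wraps $X.\Enc$ black-box. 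Commitments and secret sharing follow the same way, with the hiding or privacy property of $X$ playing the role of semantic security.
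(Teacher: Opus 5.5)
\textbf{Gap 1: public verifiability.} With BB84 states, the certificate is a Hadamard-basis string checked against $x$ on the positions with $\theta_i=1$. Publishing this predicate in the clear reveals $\theta$. A deleter holding the verification key then measures the $\theta_i=0$ qubits in the computational basis and learns the pad $\bigoplus_{\theta_i=0}x_i$. It still passes verification, and it unmasks $m$ once the key is released. None of your fixes rescues this:
\begin{itemize}
\item A committed predicate cannot be checked publicly.
\item Obfuscating the predicate, or switching to subspace states, goes beyond LWE. LWE-based evasive obfuscation does not apply here, because the adversary can itself produce accepting inputs.
\item In your RSP, even the private check at a 2-to-1 position needs $d\cdot(x_0\oplus x_1)$, i.e.\ the trapdoor.
\end{itemize}
The paper sidesteps this with the role flip of BKP23. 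The message sits in the phase of $\ket{0,x_0^{(A)}}+(-1)^{p_B}\ket{1,x_1^{(A)}}$, and the certificate is a computational-basis preimage $(b,x_b^{(A)})$, which anyone can check against the public image $y_A$.

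\textbf{Gap 2: the purification hybrid.} You rightly call this the crux, but you do not supply it. The reduction cannot ``keep the two claw preimages in superposition'': those registers belong to the server. The phase $d\cdot(x_0\oplus x_1)$ is a deterministic function of the classical transcript $(y,d)$ and the trapdoor. Deferring its computation therefore yields a classical mixture, not a register entangled with the server's qubit. Simulating the client coherently fails for the same reason, since the adversary receives, and can measure, a classical transcript.

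Worse, nothing in your protocol forces a malicious server to hold a claw superposition at all; it can evaluate classically and pick $d$ obliviously. So there is no prepared state for the BK23 uncertainty argument to act on. The BK23 certified-everlasting lemma also cannot be invoked as stated, because here the transcript information-theoretically determines $x$ and $\theta$.

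The paper's missing ingredients are:
\begin{enumerate}
\item[(i)] a witness-preserving argument of knowledge that the server holds preimages of $y_A$ and $y_B$ with the \emph{same} branch bit, which forces the entanglement and makes a coherent preimage superposition extractable;
\item[(ii)] $\epsilon$-indistinguishable extraction, an approximately unitary extractor that temporarily exposes this witness and then returns it; in between, a controlled-$Z$ from a fresh $\ket{+}$ register $\calP$ onto the branch bit injects a phase independent of the transcript;
\item[(iii)] the Phase Injection Lemma, a four-hybrid argument combining collapsing and adaptive hardcore bits, showing the injection is undetectable even though $p_B\oplus p'$ is later revealed.
\end{enumerate}
Only then does the BKP23-style step go through: claw-freeness of $\pubparams_A$, used before $\mathsf{td}_A$ is revealed, forces $\calP$ into $H\ket{b}$. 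This gives advantage below every inverse polynomial, not an everlasting bound.

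Your final step of wrapping the classical share under $X$ matches the paper's compiler.
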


Our constructions also support auditing of the ciphertexts after they have been (classically) uploaded to the quantum server, answering an open problem from~\cite{EC:GoyRai26}. Proofs of no-intrusion (PoNIs) allow a quantum server to convince a classical client that no third party could decrypt the message \emph{even given the decryption key}. Crucially, the ciphertext remains intact after the PoNI, preventing the need to store an additional copy of the data elsewhere.
\cite{DBLP:journals/iacr/KalaiKR26} show a generic compiler from any PKE with \emph{publicly verifiable} certified deletion to a PKE with PoNIs. To test a ciphertext, the verifier asks for a non-destructive argument of knowledge that the server knows a valid certificate. This approach can be combined with any of the primitives constructed in this paper.

\begin{corollary}
    Assuming the post-quantum hardness of LWE and the existence of $X\in \{$secret-sharing, PKE, ABE, IBE, commitments$\}$, there exist semi-quantum $X$ with PoNIs. Search security holds in the plain model and decisional security holds in the quantum random oracle model (QROM).
\end{corollary}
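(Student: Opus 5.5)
The corollary should follow by composing \Cref{thm:sq-cd-primitives-intro} with the generic compiler of \cite{DBLP:journals/iacr/KalaiKR26}; the only new work is checking that the compiler's hypotheses survive the semi-quantum setting. Fix $X$ and let $\Pi_X$ be the semi-quantum $X$ with publicly verifiable certified deletion given by \Cref{thm:sq-cd-primitives-intro}. In $\Pi_X$ the client uploads through classical interaction and keeps a classical verification key $\verkey$. Publicly, the server's quantum state $\rho$ supports a certificate $\cert$ with $\Ver(\verkey,\cert)=1$. The PoNI is then the protocol in which the classical client, acting as verifier, runs an interactive argument of knowledge with the server, acting as prover, for the NP relation $\{(\verkey,\cert):\Ver(\verkey,\cert)=1\}$.

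\textbf{Step 1: completeness and non-destructiveness.} The honest server computes $\cert$ coherently from its stored ciphertext and runs the argument on the resulting witness register. The argument is instantiated as a classical-verifier succinct argument from LWE (or a Kilian/Micali-style protocol in the QROM for the decisional version). Since the honest witness is accepted with probability $1-\mathsf{negl}$, a gentle-measurement argument lets the server uncompute $\cert$ and return to a state negligibly close to $\rho$. The ciphertext therefore survives, and decryption with the original secret key is unaffected. No quantum communication is used at any point, so the whole scheme is semi-quantum.

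\textbf{Step 2: search security in the plain model.} Suppose a third party $\calA$, holding whatever it stole, recovers the plaintext given the key, while the server still passes the PoNI. Run the post-quantum extractor of the argument of knowledge, which exists from LWE in the plain model, on the server. It yields a valid $\cert$ with non-negligible probability. With non-negligible probability, the extraction and $\calA$'s success can be arranged to hold jointly: the extraction acts on the server's register and $\calA$'s success on a disjoint register, so the joint event can be bounded by a standard rewinding/product argument. Now we have a valid deletion certificate together with a party that recovers the message after receiving the key. This contradicts the search (or, a fortiori, indistinguishability) certified-deletion security of $\Pi_X$.

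\textbf{Step 3: decisional security in the QROM.} For the decisional version we need an online or straight-line extractable argument, so that extraction does not disturb the distinguishing advantage of $\calA$. In the QROM a Fiat--Shamir/Micali-style compiled proof with measure-and-reprogram or compressed-oracle extraction provides this. The reduction extracts $\cert$ without rewinding and then invokes the decisional certified-deletion security of $\Pi_X$.

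The main obstacle is Step 2's joint extraction. Extracting from a quantum prover while the third party's success probability is preserved on an entangled external register is delicate. Here I would rely directly on the analysis in \cite{DBLP:journals/iacr/KalaiKR26}, which is stated for arbitrary publicly verifiable certified deletion schemes. The only new check is that their proof uses nothing about the upload phase beyond the certified-deletion security game, which $\Pi_X$ satisfies even with classical upload.
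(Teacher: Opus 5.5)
Your top-level plan is the paper's: plug the publicly verifiable scheme from \Cref{thm:sq-cd-primitives-intro} into the compiler of \cite{DBLP:journals/iacr/KalaiKR26}. The gap is in your Step~1, where you misdescribe that compiler's key ingredient.

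You instantiate the PoNI with an ordinary classical-verifier argument (succinct/Kilian, or Micali in the QROM). You then claim the ciphertext survives by gentle measurement, because the honest prover is accepted with probability $1-\negl$. Gentle measurement only controls the two-outcome accept/reject event. It says nothing about the transcript, which is a many-outcome measurement that depends on the witness.

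The honest certificate register is a superposition over two distinct witnesses, $\ket{0,x_0^{(A)}}+(-1)^p\ket{1,x_1^{(A)}}$, and the message lives in the relative phase $p$. In Kilian/Micali the prover sends a Merkle root of a PCP for its witness, which (by collision resistance) differs on the two branches. Sending it collapses the superposition. That is exactly $\Del$, and afterwards $\Reconstruct$ outputs a uniformly random bit.

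Hiding commitments (GMW-style) do not rescue this. The prover's commitment randomness and openings remain entangled with the witness, so its residual state has the form $\sum_w \alpha_w \ket{w}\ket{\mathrm{aux}_{w,T}}$. You cannot in general coherently uncompute this back to the original ciphertext. Your Fiat--Shamir/Micali proof in Step~3 has the same defect.

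The missing ingredient is the \emph{witness-preserving} argument of knowledge for NP from \cite{DBLP:journals/iacr/KalaiKR26} (\Cref{thm:witness-preserving-NP}). It is built from dual-mode claw-free trapdoor functions with state recovery. It guarantees that the interaction with a classical verifier leaves the prover's witness superposition intact up to $2^{-\secpar}$. This is what \Cref{constr:sq-poni-compiler} runs on the coherently computed certificate register.

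It is also what makes the security analysis of \cite{DBLP:journals/iacr/KalaiKR26} applicable, which you want to cite for Step~2. That analysis concerns their protocol, not an arbitrary argument of knowledge. So you cannot swap in a different proof system and still invoke it.

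With the right ingredient in place, the paper's proof is just the observation that the scheme's certificates are classical and publicly verifiable. Both the plain-model search conclusion and the QROM decisional conclusion are imported directly from \Cref{thm:generic-pvcd-to-poni}; the paper does not build a straight-line extractor of its own.

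A smaller mismatch concerns syntax. In the PoNI definition the verifier holds only a reusable key $\verkey_p$ from $\mathsf{VKGen}$, generated independently of any ciphertext. The compiler therefore has the sender sign the per-ciphertext deletion key $\verkey'$, and the prover presents $(\verkey',\sigma)$. Your version, where the client stores each ciphertext's $\verkey$, does not fit that syntax without this signature step.
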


Finally, we show an interactive classical decryption procedure for our constructions which allows the classical client to retrieve its data from the server while simultaneously deleting it.

\begin{theorem}[Informal]
    Assuming the post-quantum hardness of LWE, the encryption schemes from \Cref{thm:sq-cd-primitives-intro} can be equipped with an interactive classical protocol $\DecDel$ between a classical decryptor holding the decryption key and quantum server holding the ciphertext such that at the end of the protocol:
    \begin{itemize}
        \item The classical decryptor learns the encrypted message.
        \item The message is hidden from the server even given the decryption key.
    \end{itemize}
\end{theorem}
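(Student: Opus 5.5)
We plan to build $\DecDel$ on the structure of the compiled ciphertext. In our compiler the message $m$ is hidden behind a one-time pad derived from a BB84-style register. The server holds a state $\ket{x}_\theta$ that was prepared through the classical remote-state-preparation protocol, with $\theta$ encrypted under the underlying scheme $X$. The server also holds an $X$-ciphertext of $(\theta, m \oplus \bigoplus_{i:\theta_i=0} x_i)$. Deletion consists of measuring every qubit in the Hadamard basis. The certificate is checked against the positions with $\theta_i=1$, which is publicly verifiable because the verification key is a commitment/trapdoor-claw-free-function description. In $\DecDel$, the decryptor first recovers $\theta$ and the masked message from the classical part of the ciphertext, using its decryption key. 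It then interacts with the server to learn $\bigoplus_{i:\theta_i=0}x_i$ while forcing the server to destroy the Hadamard-basis information.

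The protocol runs as follows. The decryptor asks the server to measure all qubits in the Hadamard basis and send the outcome string $y$. The decryptor checks $y$ on the $\theta_i=1$ positions exactly as in the deletion verification, and aborts if the check fails. Once $y$ passes, the remaining problem is that the computational-basis values $x_i$ for $\theta_i=0$ have been destroyed on the server. The decryptor therefore learns them from the classical preparation transcript. In the RSP protocol built from LWE-based trapdoor claw-free functions, the client holding the trapdoor can compute $x_i$ for computational-basis positions directly from the server's classical first message. We will verify that the key material needed for this inversion is encrypted under the decryption key of $X$ and placed in the ciphertext, so that the decryptor holding $\mathsf{sk}$ can compute $x_i$ with no further interaction. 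Correctness then follows: the decryptor unmasks $m$. If the trapdoor cannot be packaged this way, our fallback is to have the server first measure the $\theta_i=0$ qubits through an interactive sub-protocol, with those positions hidden by an encrypted basis mask, before the Hadamard measurement.

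For security, we reduce to the certified-deletion theorem for the compiled scheme. The view of the server during $\DecDel$ consists of its own messages plus at most an accept/reject bit from the decryptor. An adversary against $\DecDel$ can thus be converted into an adversary against the certified-deletion game: it outputs $y$ as its certificate and guesses the decision bit, or we note that the bit is computable from public verification information. Publicly verifiable certified deletion then implies that $m$ stays hidden even after $\mathsf{sk}$ is revealed.

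The main obstacle is the correctness step: ensuring the decryptor can learn $\bigoplus_{\theta_i=0} x_i$ classically while the server is simultaneously forced to destroy the Hadamard-basis information. Sending that information to the server would break deletion. We will first try to extract it from the RSP transcript via the trapdoor, and fall back on the interleaved measurement protocol only if that approach fails.
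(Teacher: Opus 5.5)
\textbf{There is a genuine gap, and it is the obstacle you flagged: your protocol gives the decryptor no secret that survives leakage of $\sk$, so forward secrecy fails by a generic attack.}

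In your protocol the decryptor samples no fresh private randomness. It computes its output from $\sk$, $\verkey$, the classical ciphertext, the preparation transcript, and the server's string $y$, all of which the server has seen. An adversarial server can therefore run the protocol honestly, keep its transcript, and, once handed $\sk$, rerun the decryptor's computation to get $m$.

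Concretely, your correctness mechanism encrypts the RSP trapdoor under $\sk$ so the decryptor can recompute $\bigoplus_{i:\theta_i=0}x_i$ from the preparation transcript. You note that sending this information to the server would break deletion. But encrypting it under $\sk$ amounts to sending it, in a game that gives the server $\sk$, and the server generated that transcript itself. This breaks even the ordinary certified-deletion property, so your security reduction targets a scheme that is not certified-deletion secure.

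Note also that Hadamard outcomes on $\theta_i=0$ positions are uniform and independent of the mask. So $y$ contributes nothing to the decryptor's output. The fallback with an ``encrypted basis mask'' has the same defect: the server keeps whatever outcomes it measured, and anything under $\sk$ is exposed.

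Your picture of the ciphertext is also not the paper's. The paper's ciphertext is a phased claw $\ket{0,x_0^{(A)}}+(-1)^p\ket{1,x_1^{(A)}}$ together with $\PKE.\Enc(p\oplus m,\td_A)$. Deletion is a computational-basis measurement yielding a preimage of $y_A$. The mask $p=d_B\cdot(x_0^{(B)}\oplus x_1^{(B)})$ depends on $\td_B$, which is deliberately never placed under $\sk$. That is exactly the property your design gives up.

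\textbf{The missing idea is to rerandomize the ciphertext with fresh, session-local secret randomness before converting it to classical form.} In the paper the steps are as follows.
\begin{enumerate}
\item The decryptor samples a fresh $(\pubparams_C,\td_C)\gets\Setup(1^\secpar,\lossy)$, keeps $\td_C$ private, and erases it afterwards.
\item The server evaluates $f_{\pubparams_C}$ controlled on the branch bit of $\ket{\qsh}$ and sends $y_C$.
\item Using $\SPNP$, the server proves that $(y_A,y_C)$ have preimages sharing one bit $b$. Without this, the server could keep the $C$-claw unentangled and learn $p_C$ itself.
\item The server Hadamard-measures everything and sends $b'\concat d_A\concat d_C$, which satisfies $p=b'\oplus p_A\oplus p_C$.
\item The decryptor recovers $p_A$ using $\td_A$ (obtained via $\sk$) and $p_C$ using $\td_C$. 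The server, even given $\sk$, lacks $p_C$.
\end{enumerate}

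Because the server never sends a certificate, the security reduction must manufacture one. It applies $\epsilon$-indistinguishable extraction to the argument and measures the extracted witness. This measurement is undetectable by collapsing of $\pubparams_C$, since $\td_C$ and $p_C$ are never revealed. The reduction then submits $(b,x_b^{(A)})$ in the publicly verifiable certified-deletion game, losing only $O(\epsilon)$ for arbitrary inverse-polynomial $\epsilon$.
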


\section{Technical Overview}

We begin the overview of our techniques by discussing the leakage model and desired security guarantees in more detail.

\paragraph{Certified Deletion Security.} Security for certified deletion requires that after a ciphertext has been deleted, the message is hidden even given the secret decryption key. More explicitly, no quantum polynomial-time adversary should be able to distinguish the message $m\in \{0,1\}$ in the following game:
\begin{enumerate}
    \item Sample a fresh key pair $(\sk, \pk)$.
    \item Upload an encryption of $m$ to the adversarial server by interacting with it. The challenger obtains a verification key $\verkey$.  
    \item The adversarial server outputs a certificate $\cert$.
    \item If $\cert$ passes verification under $\verkey$, send the decryption key $\sk$ to the adversary and the adversary outputs a guess $m'$. Otherwise, the adversary does not get to guess.
\end{enumerate}

\paragraph{Leakage Modeling.} 
Several works have also considered a stronger notion of deletion called ``everlasting security'', wherein the message is hidden after deletion even against unbounded adversaries. This notion is unlikely to be achievable with classical upload because the classical transcript (which cannot be deleted) must information-theoretically determine the prepared state.
Nonetheless, the randomness used for encryption may be immediately discarded, leaving the decryption key $\sk$ as the only persistent piece of data that can be leaked.

\subsection{Ciphertext Design}\label{sec:overview:ciphertext-design}

Both our construction and \cite{AC:HMNY21}'s prior construction in the QROM start from a common trick for preparing states of the form
\[
    \ket{0, x_0} + \ket{1, x_1}.
\]
First, the sender samples public parameters and a trapdoor $(\pubparams, \td)$ for a trapdoor claw-free function (TCF). A TCF $f_\pubparams$ is a randomized 2-to-1 function where is it hard to find a collision (or a ``claw'') $f_\pubparams(0; x_0) = f_\pubparams(1; x_1) = y$ given only $\pubparams$, but any image $y$ can be efficiently inverted given $\td$. The receiver can prepare superpositions over collisions by evaluating $f_\pubparams$ on a uniform superposition then measuring the output $y$, thereby collapsing the state to a single collision. Afterwards, the sender can learn the prepared state by using $\td$ to recover both preimages of $y$.

Both of our constructions also make use of the adaptive hardcore bit property~\cite{BCMVV18}. Roughly, this property guarantees it is computationally hard to both produce a preimage $(b, x_b)$ and an inner product $\vec{d}\cdot (x_0 \oplus x_1)$ with a $\vec{d}$ of the adversary's choice. If the adversary does find a preimage, they should not be able to also guess an adaptive inner product with better than $1/2 + \negl$ probability.
Our construction additionally uses a collapsing property: a $\QPT$ adversary who provides a superposition over preimages $\ket{b, x_b}$ cannot detect if it is measured in the computational basis before being returned.

\paragraph{HMNY21's QROM Construction.} 
\cite{AC:HMNY21}'s construction begins by preparing the claw state, then using the two preimages to derive random masks $H(x_0)$ and $H(x_1)$ for $m$.
The resulting ciphertext looks like
\[
    \ket{0, x_0} + \ket{1, x_1},\qquad \Enc(H(x_0) \oplus m,\ H(x_1)\oplus m)
\]

Deletion works by measuring the claw superposition in the Hadamard basis to obtain a string $d$ such that $d\cdot (x_0 \oplus x_1) = 0$. If one produces such a string $d$, then the adaptive hardcore bit property of $f_\pubparams$ implies they cannot also find one of the preimages $x_0$ or $x_1$, except with probability $1/2$. To contradict the adaptive hardcore bit property, the crux of the argument is to show that any successful distinguisher must ``know'' one of $x_0$ or $x_1$ after deletion. This is precisely where a random oracle (or extractable witness encryption) becomes useful: any algorithm which knows $H(x_0)$ must also know $x_0$.

Their final construction uses a more sophisticated approach to amplify security while avoiding a blow-up in the ciphertext size due to the number of possible $x_b$ combinations, but its analysis follows from similar principles and is thus limited to random oracles or extractable witness encryption.

\paragraph{Moving $m$ to the Phase.} 
The need for knowledge assumptions stems from using preimages $x_0$ or $x_1$ to hide the message, while the Hadamard basis is used to encode the deletion certificate. 
We use an approach from \cite{C:BarKhuPor23} which flips this relation:
encode the message in the phase and use preimages as the certificate. As a welcome benefit, their approach is specifically designed for \emph{public verifiability} and does not need an amplification step.
More explicitly, our eventual ciphertexts will be of the form
\[
    \ket{0, x_0} + (-1)^p \ket{1, x_1},\quad \Enc_{\pk}((p\oplus m, \td))
\]
To decrypt using $\sk$, one measures the collision superposition in the Hadamard basis to obtain a string $(b', d)$ such that $b' \oplus d\cdot (x_0\oplus x_1) = p$. Then, $\Enc_{\pk}(\td)$ can be decrypted using $\sk$ and the trapdoor $\td$ subsequently used to recover the preimages $(x_0, x_1)$, and therefore $p$. Finally, $p$ is used to unmask $m$.

Deletion works by measuring the collision superposition to get either $x_0$ or $x_1$. These can be publicly checked using the published image $y = f_\pubparams(0,x_0) = f_\pubparams(1,x_1)$. 
Intuitively, measuring $\ket{x_0} + (-1)^p \ket{x_1}$ in the computational basis destroys the phase $p$, which prevents decrypting $p\oplus m$ forevermore. 
The actual analysis is more complicated and depends on the preparation protocol, which we will describe shortly hereafter.

\paragraph{Other Primitives (\Cref{sec:applications}).} 
The above construction also allows replacing $\Enc(\bullet)$ by any distribution which semantically hides its input against QPT adversaries. Thus we may implement \cite{C:BarKhu23}'s formula for achieving various primitives with certified deletion by encapsulating $(p\oplus m, \td)$ under the desired primitive.
For example, by encrypting $(p\oplus m, \td)$ under any post-quantum Attribute-Based Encryption (ABE), we achieve semi-quantum ABE with certified deletion; the message may only be decrypted if one possesses a key with the correct attributes. Another example is to commit to $(p\oplus m, \td)$, whereupon the sender becomes bound to the message. 

In the technical sections, we abstract out the base functionality as a secret-sharing with a quantum share $\ket{\qsh} = \ket{0, x_0} + (-1)^p \ket{1, x_1}$ and a classical share $\csh = (p\oplus m, \td)$. We detail our other applications in \Cref{sec:applications}, including Identy-Based Encryption (IBE) and Fully-Homomorphic Encryption (FHE).

\subsection{Abstract Framework for Preparation of Hadamard States}
\label{sec:overview:abstract-framework}

Since the ciphertext includes a quantum state, we still need a safe method to prepare it with only classical communication. The generation of collision superpositions $\ket{0,x_0} + \ket{1,x_1}$ is well-established, but the ciphertext described above \emph{also} requires a relative phase. 

In preparation for the full construction, this section abstracts out our core technique to focus on the preparation of the ``canonical'' relative phase states: $\ket{+}$ and $\ket{-}$.

\paragraph{Prior Approaches.}
Prior works~\cite{FOCS:GheVid19,AC:CCKW19,ICALP:GheMetPor23,FOCS:Zhang22,TCC:CheHerVu23,ITCS:Zhang25,C:BarKhu25} commonly build on the following basic approach to preparing Hadamard basis states:
\begin{enumerate}
    \item The server prepares $\secpar$-many claw states $\Ket{0, x_{0}^{(i)}} + \Ket{1, x_{1}^{(i)}}$ with images $y_i$.
    \item The client asks the server to reveal preimages for a random set $S \subset \{y_i\}_i$. The server measures those claw states and sends the corresponding preimages $\left(b_i, x_{b_i}^{(i)}\right)$.
    \item The client picks one of the remaining indices $i^*$ and has the server measure the $x$ part of the claw $\Ket{0, x_{0}^{(i^*)}} + \Ket{1, x_{1}^{(i^*)}}$ to obtain $d$ and $\ket{0} + (-1)^{p}\ket{1}$ where $p = d\cdot \left(x_0^{(i^*)} \oplus x^{(i^*)}_1\right)$, up to global phase.\footnote{Technically $d$ also needs to satisfy some basic requirements such as $d \neq 0$. We omit this from the overview for simplicity of exposition.}
    \item The client computes the phase $p = d\cdot (x_0 \oplus x_1)$ and the server outputs the prepared state $\ket{0} + (-1)^{p}\ket{1}$.
\end{enumerate}
The middle two cut \& choose steps are effectively an \emph{argument of knowledge} of preimages for each of the $y_i$.
In particular, it guarantees that the server knows a preimage for the chosen $y_{i^*}$, except with probability $1/\secpar$ (the ``knowledge error'').

The weakness of cut-and-choose is that it has high knowledge error scaling with the number of states tested, which directly translates to a $1/\poly$ loss in security. 
In the $1/\poly$-probability case where the server does not have to know a preimage, the adaptive hardcore bit property does not apply. Then it becomes difficult to rule out the server forcing preparation of, say, $\ket{+}$ by somehow obliviously sampling $d$ such that $d \cdot (x_0 \oplus x_1) = 0$, without ever knowing a preimage.

\paragraph{Non-Destructive Testing Allows Lower Knowledge Error.}
The limiting factor for knowledge error in prior arguments is that they \emph{destroy} the tested indices. In other words, the output can only be an \emph{untested} index. \cite{DBLP:journals/iacr/KalaiKR26}'s recent witness-preserving argument of knowledge allows us to bypass this limitation by preserving the tested state. Thus, it may achieve negligible knowledge error for the output state. Concretely, to prepare an $H\ket{p}$ state:
\begin{enumerate}
    \item The server prepares a \emph{single} claw state $\ket{0, x_0} + \ket{1, x_1}$ with image $y$.
    \item The server gives a witness-preserving argument of knowledge for a preimage such that $f_\pubparams(b, x_b) = y$. Crucially, they still hold $\ket{0, x_0} + \ket{1, x_1}$ at the end of the argument.
    \item The server measures the $x$ part of the claw to obtain $d$ and $\ket{0} + (-1)^p \ket{1}$.
    \item The client computes $p = d\cdot (x_0 \oplus x_1)$ to identify the prepared state.
\end{enumerate}

\paragraph{Peeking Ahead at Purification.}
One of the most common and useful techniques for analyzing protocols which send $\ket{+}$ or $\ket{-}$ states at random is to purify the choice of which state to send. The reduction instead sends one half of an EPR pair and keeps the other half, often for usage later in the protocol. It would be useful to adapt such arguments to states which are prepared via classical communication.

The difficulty of doing so is that the transcript \emph{uniquely determines} the prepared state -- sending the classical messages means deciding on the state. The key to adapting the purification technique is to 
allow the same classical transcript to prepare both possible states.
We can do so using the following high-level template:
\begin{enumerate}
    \item The adversary declares an image $y$.
    \item
    Coherently extract a witness $(b, x_b)$ during the argument of knowledge. \textbf{Change the prepared state by applying $Z$ to the first qubit $b$.}
    \item The adversary completes the preparation by declaring $d$.
\end{enumerate}
The $Z$ operation switches between preparing $\ket{+}$ and $\ket{-}$. Thus, the choice of which state is prepared may be purified by purifying the choice of whether to apply $Z$.

We discuss the mechanics of how to implement this template for the specific case of certified deletion in \Cref{sec:overview:analysis}. As part of that implementation, we will need the $Z$ phase injection to be indistinguishable \emph{even if the choice of state is later revealed}. Of course, if the challenger tells the adversary it prepared $\ket{+}$ after changing the state to $\ket{-}$, the adversary could immediately detect this by measuring the state. 
We show in the Phase Injection Lemma (\Cref{sec:plug-in-lemma}) that the change is indistinguishable so long as the challenger also updates the revealed phase to $p\oplus 1$.
This indistinguishability is not obvious, even from the TCF's collapsing or hardcore adaptive bit properties. We discuss more details in 
\Cref{sec:overview:analysis}.

\subsection{Classical Upload}

Next, we specialize the abstract approach to prepare our  specific ciphertext states
\[
    \Ket{0, x_0} + (-1)^{p}\Ket{1, x_1}.
\]
The abstract approach effectively adds a phase $p$ to the first qubit by entangling it with a TCF collision, then measuring the colliding preimages in the Hadamard basis. 
Adapting this to our ciphertext structure, we use two TCF instances $A$ and $B$.
$A$ provides the base structure of the ciphertext and its preimages $(b, x_b^{(A)})$ will serve as the deletion certificate. The other instance $B$ is used during preparation to add a phase to the $A$ claw.
This yields our preparation protocol:
\begin{enumerate}
    \item The \textbf{sender} samples two sets of public parameters and trapdoors $(\td_A, \pubparams_A)$ and $(\td_B, \pubparams_B)$.
    \item The \textbf{receiver} evaluates and measures $y^{(A)} = f_{\pubparams_A}\left(b, x^{(A)}\right)$ and $y^{(B)} = f_{\pubparams_B}\left(b, x^{(B)}\right)$ on a uniform superposition $\sum_{b, x^{(A)}, x^{(B)}} \Ket{b, x^{(A)}, x^{(B)}}$ to obtain an entangled collision superposition 
    \[
        \Ket{0, x_0^{(A)}, x_0^{(B)}} + \Ket{1, x_1^{(A)}, x_1^{(B)}}
    \]
    It then sends the images $y^{(A)}$ and $y^{(B)}$ to the sender.
    \item \label{step:overview-SPNP} The \textbf{receiver} uses a witness-preserving argument of knowledge for NP~\cite{DBLP:journals/iacr/KalaiKR26} to prove knowledge of a string $(b, x_b^{(A)}, x_b^{(B)})$ such that 
    \[
        y^{(A)} = f_{\pubparams_{A}}\left(b, x_b^{(A)}\right)
        \quad \text{and} \quad
        y^{(B)} = f_{\pubparams_{B}}\left(b, x_b^{(B)}\right).
    \]
    Crucially, the state-preservation property ensures that the superposition over witnesses is not disturbed.
    \item The \textbf{receiver} measures the register containing the $B$ preimage $x_b^{(B)}$ in the Hadamard basis to obtain a string $d_B$. After factoring out a global phase $(-1)^{d_B\cdot x_0^{(B)}}$, this leaves the state
    \begin{align*}
        (-1)^{d_B\cdot x_0^{(B)}}\left(\Ket{0, x_0^{(A)}} + (-1)^{d_B\cdot \left(x_0^{(B)}\oplus x_1^{(B)}\right)}\Ket{1, x_1^{(A)}}\right) 
    \end{align*}
    \item Finally, the \textbf{sender} inverts the $B$ preimages $(x_0^{(B)}, x_1^{(B)})$ and uses them to compute $p_B = d_B\cdot \left(x_0^{(B)}\oplus x_1^{(B)}\right)$, and sends back 
    \[
        \Enc((p_B \oplus m, \td_A))
    \]
\end{enumerate}

\paragraph{Importance of Entanglement.}
The argument of knowledge in step \ref{step:overview-SPNP} is even more important for this specific construction than alluded to in the abstract framework.
By asking that the receiver knows preimages $x_b^{(A)}$ and $x_b^{(B)}$ using \emph{the same bit $b$}, the sender forces the preimages to be entangled. 
Hadamard measuring the $B$ preimages therefore adds a phase $p$ to the $A$ preimages, which requires $\td_A$ to recover. 

There is even an explicit attack if the adversary were able to avoid entangling their superpositions. Suppose a malicious receiver instead prepared the collision superpositions in tensor:
\[
    \left(\Ket{0, x_0^{(A)}} + \Ket{1, x_1^{(A)}}\right) \otimes 
    \left(\Ket{0, x_0^{(B)}} + \Ket{1, x_1^{(B)}}\right)
\]
Now if the adversary measures their whole $B$ register, including the first bit, in the Hadamard basis, they get $(p_B, d_B)$ such that $d_B \cdot \left(x_0^{(B)}\oplus x_1^{(B)}\right) = p_B$. 
The adversarial receiver can simply send $d_B$ to the sender and receive an encryption of $p_B \oplus m$ back. 
Separately, the $A$ register can be measured and the resulting $(b, x_b^{(A)})$ exchanged for the decryption key. Finally, they can use this to decrypt $p_B \oplus m$ and unmask the message using the previously obtained $p_B$.
Forcing the $A$ and $B$ registers to be entangled is exactly what prevents $p_B$ from being directly revealed.

\subsection{Analysis: Entanglement Over Classical Channels}
\label{sec:overview:analysis}

The analysis of our construction adapts \cite{C:BarKhuPor23}'s original proof to the classical preparation case.
Unfortunately, the most difficult part to adapt is also the most critical part of their analysis: the ability to send a ``purified'' ciphertext which is in superposition between encryptions of $m = 0$ and $m=1$.
Since the message is masked by the phase $p$, this is achieved by purifying the phase:
\[
    \sum_{p\in \{0,1\}} \left(\Ket{0, x_{0}^{(A)}} + (-1)^{p_B}\Ket{1, x_{1}^{(A)}}\right)_{\calA} \otimes \ket{p_B}_{\calP}
\]
Here, the challenger sends register $\calA$ to the adversary and keeps register $\calP$. 
Unfortunately, the phase $p$ is completely determined by the classical transcript. How can we argue that the server's view is indistinguishable from the purified version?

The main technical contribution of this work is a general technique for applying BB84-style purification arguments when preparing states over classical channels.
In this section, we discuss why purification is so critical to BKP23's strategy and develop a general technique to adapt purification arguments for classically prepared states. 
Then, we show how to use the new technique to complete the adaptation of BKP23's proof for our construction.

\paragraph{Quantum-Communication Strategy.} To explain why the purification is so crucial to BKP23's strategy, we give a high-level overview of the whole strategy. Roughly, it proceeds in 5 steps:
\begin{enumerate}
    \item Delay the dependence on the message by encrypting a random $\tilde{m}$ instead of $m\oplus p_B$, then checking at the end of the experiment if $\tilde{m} = m\oplus p_B$ and aborting if not. Any adversary's distinguishing advantage in this experiment is precisely $1/2$ of its original advantage.
    \item Purify $p_B$ by instead preparing the state
    \[
        \sum_{p_B\in \{0,1\}} \left(\Ket{0, x_{0}^{(A)}} + (-1)^{p_B}\Ket{1, x_{1}^{(A)}}\right)_{\calA} \otimes \ket{p_B}_{\calP}
    \]
    and sending register $\calA$ to the adversary. The abort condition at the end of the experiment is decided by measuring register $\calP$ in the computational basis to obtain $p_B$ and checking whether $\widetilde{m} = m\oplus p_B$.
    
    \item Argue that if the adversary returns a valid certificate $\cert = (b, x_{b}^{(A)})$, then the phase purification register $\calP$ contains $H\ket{b}$ with overwhelming probability. This step relies on the hardness of finding both $x_{0}^{(A)}$ and $x_1^{(A)}$.
    
    \item Observe that now the abort step is \emph{independent} of $m$ because $p_B$ is the result of measuring a Hadamard basis state in the computational basis.The rest of the experiment is also independent of $m$, so $m$ cannot be guessed with any advantage. Returning to the main experiment, the original advantage can only be at most $2 \cdot \negl$ which is still negligible.
\end{enumerate}
The purification moves the dependence on $m$ to be coherently recorded in the phase purification register $\calP$.
$\calP$ is then forcefully decohered by measuring it in two maximally incompatible bases: the Hadamard basis (from the certificate) followed by the computational basis (from the abort check).

\paragraph{Separate Roles for $A$ and $B$.}
The two most load-bearing steps in BKP23's strategy are the phase purification in step 2 and the argument that $\calP$ contains $H\ket{b}$ in step 3. 
Similarly to their roles as phase ($B$) and structure ($A)$ in the classical upload protocol, $B$'s security will be used to argue phase purification and $A$'s security will be used to argue $\calP$'s property during deletion. We remark that it is particularly difficulty to use $A$'s security outside of the very careful usage in BKP23's original proof, because $\td_A$ is eventually revealed.

\paragraph{Dissociating the Phase and Transcript.}
The trouble with purifying the phase is that each classical transcript uniquely identifies a single phase $p_B$. We cannot purify $p_B$ by leaving the transcript in superposition because the adversary expects a classical transcript and will eventually measure it, collapsing the phase.

Instead, we will allow the same transcript to be associated with both $p_B=0$ and $p_B=1$ by ``injecting'' a phase without changing the transcript. Once the phase is independent of the transcript, the choice of $p_B$ can be purified.
The phase injection lemma, discussed previously in \Cref{sec:overview:abstract-framework}, provides a general template for switching the phase on the same transcript: coherently extract the preimage $\Ket{b, x_{b}^{(A)}}$, then apply $Z$ to the first qubit $b$.

\paragraph{Temporarily Exposing the Witness.}
The argument of knowledge for a witness $\left(b, x_{b}^{(A)}, x_b^{(B)}\right)$ provides the needed extraction property.
\cite{FOCS:LomMaSpo22} show a general strategy to extract a witness from a (coherent) accepting transcript with overwhelming probability. Moreover, their extractor is compatible with \cite{DBLP:journals/iacr/KalaiKR26}'s witness-preserving NP argument. Once the witness is exposed, a new phase $p'_B$ can be ``injected'' by applying $Z^{p'_B}$ to the first qubit $b$. 

However, there is a tension between extraction and witness preservation. 
Extracting the witness would disturb the prover's copy, but the prover expects their witness superposition to be undisturbed.
It is important to ``return'' the extracted witness before the end of the argument so that the prover regains it -- otherwise the prover's view is distinguishable.

If the extractor were a \emph{unitary} $U_\Ext$, then it could be inverted to return the witness. The following strategy thus resolves the extraction-versus-preservation tension:
\begin{enumerate}
    \item Coherently execute the argument.
    \item Run the unitary extractor $U_{\Ext}$.
    \item Check (but do not otherwise disturb) the exposed witness.
    \item Invert the unitary extractor by running $U_{\Ext}^\dagger$.
\end{enumerate}

We call this strategy ``indistinguishable-extraction'' because it both extracts the witness and returns the adversarial prover's view to an indistinguishable state. 
Unlike \cite{FOCS:LomMaSpo22}'s ``state-preserving extraction'',\footnote{Not to be confused with the witness-preserving property from \cite{DBLP:journals/iacr/KalaiKR26}.} which also extracts while generating an indistinguishable adversarial prover view, indistinguishable-extraction is compatible with witness preservation because it does not unnecessarily measure the extracted witness.

The caveat is that the unitary dilation of \cite{FOCS:LomMaSpo22}'s guaranteed extractor is arbitrarily large. 
Fortunately, they also observe that their guaranteed extractor can be $\epsilon$-approximated by a size $\poly[1/\epsilon]$ unitary $U_{\Ext, \epsilon}$ for any $\epsilon = 1/\poly$. Thus, the above strategy is only $\epsilon$-indistinguishable from a real argument.
Since $\epsilon$ can be scaled to arbitrarily low $1/\poly$ without changing the real protocol, the $\epsilon$-approximation is sufficient for proving falsifiable security properties such as certified deletion. 

\begin{em}
\paragraph{\emph{Remark: The Role of $\epsilon$.}} 
It is important to differentiate between the roles of $\epsilon$ in our proof and in existing remote state preparation (RSP) protocols.
Previously, we mentioned that existing RSP protocols are unsuitable for our purposes because they incur an $\epsilon = 1/\poly$ error. There, the protocol description is a function of $\epsilon$; once $\epsilon$ is fixed, $\RSP(\epsilon)$ will fail to have $\epsilon'$ security for $\epsilon'<\epsilon$. 
In contrast, the role of $\epsilon$ in our analysis is to determine the reduction runtime. The same protocol is secure for both $\epsilon$ and $\epsilon' < \epsilon$, but the reduction for $\epsilon'$-security is a longer computation. We show that the security error in our single protocol is $<1/\secpar^c$ for all polynomials $\secpar^c$, so it must be negligible.
\end{em}

\paragraph{Indistinguishability of Phase Injection.}
Temporarily extracting $(b, x_b^{(A)}, x_b^{(B)})$ gives a window of opportunity to inject a chosen phase $p'$, but it is not a-priori clear that injecting the phase is indistinguishable. After all, the adversary eventually sees both $\td_A$ and $p_B = d_B \cdot (x_0^{(B)}\oplus x_1^{(B)})$ in the deletion game. 
It is especially difficult to rely on any security of $A$ because its full trapdoor is later revealed. Revealing $p_B$ is safer, but still requires some care to rely on the security of $B$.

The phase injection lemma guarantees that changing the phase from $p_B$ to $p_B \oplus p'$ is indistinguishable even when $p_B \oplus p'$ is revealed:
\begin{enumerate}
    \item Sample fresh $\pubparams_B$, then the adversary sends an evaluation $y_B$ and an arbitrary superposition over its images $\left(b, x_b^{(B)}\right)$.
    \item If the preimages are valid, inject the phase $p'$ by applying $Z^{p'_B}$ to the first qubit $b$. Then, return the state.
    \item The adversary chooses $d_B$.
    \item Compute $p_B = d_B \cdot (x_0^{(B)}\oplus x_1^{(B)})$, then send $p_B \oplus p'$ to the adversary.
\end{enumerate}
In the broader context of the analysis, the phase is injected during the temporary extraction in between $U_\Ext$ and $U_\Ext^\dagger$. Then, $p_B \oplus p'$ is revealed and used at the end of the experiment to decide whether to abort.
The phase injection lemma thus allows indistinguishably changing the phase by $p'$. Then, we can achieve our purification goal by purifying the choice of $p'$.

\paragraph{Proof of the Phase Injection Lemma.}
Although revealing $p_B$ is less dangerous than revealing a full trapdoor, it too presents some difficulties. Hiding the injected phase $p'$ would be trivial if we could undetectably measure the extracted $x_b^{(B)}$, since computational basis measurement is equivalent to injecting a random phase. However, revealing $p_B$ disrupts $\pubparams_B$'s collapsing property. Moreover, we cannot even replace $p_B$ by a random bit using adaptive hardcore bits because the adversary never produces a preimage $x_b^{(B)}$ of $y_B$ \emph{at the same time} that it would see $p_B$. Measuring $x_b^{(B)}$ during extraction and keeping it for this purpose would return to the collapsing property -- a circularity.

To break the circularity between collapsing and adaptive hardcore bits, we consider 4 hybrid experiments. In experiment $\Hyb_{ab}$, we inject $a$, but return $p_B \oplus b$. The mixed distribution $\Hyb_{a*} = 1/2(\Hyb_{a0} + \Hyb_{a1})$ now inserts a fixed phase $a$, but returns a random bit at the end. This decouples the phase from the reported bit so that we may prove that the phase is random when $p_B$ is not reported, i.e. $\Hyb_{0*} \approx \Hyb_{1*}$. Similarly, we can show that the reported bit is indistinguishable from random when we insert an \emph{independently random} phase, i.e. $\Hyb_{*0} \approx \Hyb_{*1}$. Finally, some algebra allows us to derive the desired cross-term relation: $\Hyb_{00} \approx \Hyb_{11}$.

\paragraph{Entanglement Over Classical Channels.}
Putting everything together, we purify the phase over a classical channel as follows:
\begin{enumerate}
    \item Prepare the public parameters $(\pubparams_A, \pubparams_B)$ and receive two evaluations $(y_A, y_B)$ from the adversary.
    \item Use a unitary $U_\Ext$ to coherently extract a superposition over entangled preimages $(b, x_b^{(A)}, x_b^{(B)})$ of $y_A$ and $y_B$.
    \item Prepare a $\ket{+}$ state in register $\calP$, then perform a controlled $Z$ from register $\calP$ to the first qubit $b$ of the extracted preimages.
    \item ``Return'' the preimages by inverting $U_{\Ext}$.
    \item Encrypt a random bit $\tilde{m}$ as in step 1 of BK23's strategy.
    \item At the end of the deletion hybrid experiment, measure the purification register $\calP$ to obtain $p'$ and compare $p_B \oplus p'$ to $\tilde{m}$ for the abort check.
\end{enumerate}

\paragraph{Finishing the Deletion Proof.}
Armed with the purification technique, the rest of BKP23's strategy can be adapted smoothly. The primary remaining step is to argue that if the adversary returns a preimage $(b, x_{b}^{(A)})$, then the phase purification register contains $H\ket{b}$ with overwhelming probability. This uses a similar approach to BK23's original, with some careful modifications to account for inserting the phase to a pre-prepared state instead of preparing the whole state from scratch.

First, observe that the predicate ``$(b, x_{b}^{(A)})$ matches the contents of $\calP$'' can be efficiently checked \emph{before} $\sk$ is revealed. Thus, using the semantic security of $\Enc$ we may replace $\Enc((p_{B}\oplus m, \td_A))$ by $\Enc(0)$ without noticeably affecting whether the predicate holds. Without $\td_A$, the collapsing property of TCFs ensures it is now indistinguishable to \emph{measure} the extracted witness $\left(b', x_{b'}^{(A)}, x_{b'}^{(B)}\right)$ after injecting the phase, in between $U_\Ext$ and $U_\Ext^\dagger$. Then the certificate $\left(b, x_{b}^{(A)}\right)$ must match $\left(b', x_{b'}^{(A)}\right)$, i.e. $b = b'$, or else the adversary has found two preimages and violated the claw-free property of the TCF.

The final step is to argue that the purification register $\calP$ matches the preimage $(b', x_{b}^{(A)})$ which was extracted between $U_\Ext$ and $U_\Ext^\dagger$. Recall that the phase injection was done by applying a $Z$ operation to $b'$, controlled on the phase register $\calP$. Controlled $Z$ is symmetric between the control and the target, i.e. $\CZ_{1\rightarrow 2} = \CZ_{2\rightarrow 1}$, so it is also equivalent to apply $Z$ to $\calP$ controlled on $b'$ instead. Finally, $\calP$ was initialized to $\ket{+}$, so after the controlled $Z$ it contains $H\ket{b'}$.

\subsection{Proofs of No-Intrusion and Classical Decryption with Deletion}\label{sec:overview:poni-dec}

\paragraph{Proofs of No-Intrusion (\Cref{sec:poni}).} 
A proof of no-intrusion (PoNI)~\cite{EC:GoyRai26} allows a classical client to interactively test that a quantum ciphertext does not exist outside of the proving server, in the sense that no third party can decrypt the message even given the decryption key. Moreover, this test preserves the ciphertext so that it may be later decrypted or tested again.
Kalai, Khurana, and Raizes~\cite{DBLP:journals/iacr/KalaiKR26} gave a generic compiler to add PoNIs to any encryption scheme with \emph{publicly verifiable} deletion and classical certificates: the server coherently computes a superposition over certificates, then simply gives a witness-preserving argument of knowledge that it knows a valid certificate. Since the certificate superposition is not disturbed, the server can then uncompute the certificate superposition to return the ciphertext to its original state.
This compiler may be generically applied to our semi-quantum encryption scheme with certified deletion because the certificates are classical and publicly verifiable.

\paragraph{Decryption with Deletion (\Cref{sec:deldec}).} 
Finally, we describe how a classical client holding a decryption key $\sk$ may interact with a server holding a ciphertext $\ket{\ct}$ from the above scheme to learn the plaintext while also deleting $\ket{\ct}$ from the server. 
$\ket{\ct}$ can easily be decrypted over a classical channel by first measuring it in the Hadamard basis to convert it to a classical ciphertext $\ct'$, then sending $\ct'$ to the client for decryption. However, such an approach leaves the server also able to decrypt the converted $\ct'$ should it later learn the decryption key.

The trick is to rerandomize $\ket{\ct}$ \emph{before conversion}. The goal is that the converted $\ct'$ should instead decrypt to $m\oplus p'$ for a fresh $p'$ which can be discarded after decryption.
Thus, even if the server later learns the decryption key, $m$ is still protected by unknown randomness.

The client may remotely rerandomize the ciphertext using a familiar trick: entangle the quantum part of the ciphertext
\[
    \ket{0, x_0^{(A)}} + (-1)^{p} \ket{1, x_1^{(A)}}
\]
with a third pair of preimages $(x_0^{(C)}, x_1^{(C)})$, prove entanglement via a witness-preserving argument of knowledge, then measure the $C$ preimages in the Hadamard basis to add a new phase $p_C$ to the ciphertext. 

To show security, observe that neither $\td_C$ nor $p_C$ are ever revealed as a result of interactive decryption (unlike how $p_B$ can be obtained from $\Enc(m\oplus p_B)$ and $\sk$ in the basic deletion game). Thus we may rely on $\pubparams_C$'s collapsing property and the forced entanglement with $x_b^{(A)}$ to undetectably measure the $A$ preimage $(b, x_{b}^{(A)})$ when extracting the witness. Finally, $(b, x_{b}^{(A)})$ can be used as the certificate in the reduction to certified deletion.

\subsection{Discussion and Related Works}

\paragraph{Limitations of Our Preparation Technique.}
The gold standards for remote state preparation are simulation-security and verifiability, both of which this technique does not achieve. Simulation-security says that the adversary's view could be generated given the prepared state directly, whereas verifiability requires that the adversary's registers actually contain the generated state.\footnote{Up to a local isometry that depends only on the adversary, not on the prepared state.} 
Simulation-security is especially flexible because it would allow generically replacing quantum communication in cryptographic protocols~\cite{FOCS:GheVid19,ICALP:GheMetPor23}.

Unfortunately, with our techniques an adversary could always evaluate $f_\pubparams$ purely classically and pass the test. Adding a Hadamard test is also difficult because verifying a Hadamard basis measurement requires the use of the secret trapdoor, but \cite{DBLP:journals/iacr/KalaiKR26}'s witness-preserving argument only supports publicly-verifiable statements.
Nonetheless, our techniques provide a simple and efficient method for a wide class of protocols whose security can be based on BB84 purification arguments. 

\paragraph{Previously Mentioned Prior Work.} 
Hiroka, Morimae, Nishimaki, and Yamakawa previously constructed PKE with certified deletion where the ciphertexts may be classically uploaded, assuming either quantum random oracles or extractable witness encryption and one-shot signatures. Their extractable witness encryption construction is publicly verifiable. See \Cref{sec:overview:ciphertext-design}.

Goyal and Raizes proposed and constructed PoNIs, assuming the post-quantum hardness of LWE. They achieve search security in the plain model and decisional in the quantum random oracle model. Their construction has classical communication during the protocol, but requires quantum communication to upload the original ciphertext. Later, Kalai, Khurana, and Raizes gave a generic compiler to add PoNIs to any encryption scheme with publicly-verifiable certified deletion, assuming LWE (and QROM for decisional security). See \Cref{sec:overview:poni-dec}.

\paragraph{Classical Upload for Secure Key Leasing.}
Secure key leasing (SKL)~\cite{EC:AKNYY23,TCC:AnaPorVai23,EC:BGKMRR24,TCC:AnaHuHua24} allows a lessor to temporarily loan out a decryption key to a leasee as a quantum state. Later, the key can be verifiably revoked so that the leasee cannot decrypt future ciphertexts. Recently, \cite{EC:CGJL25,CRYPTO:KLYY26} showed constructions which allow the lessor to be entirely classical, ``uploading'' the quantum key using classical interaction.

SKL with classical upload also gives a natural approach to semi-quantum encryption with certified deletion, although it comes with some significant restrictions. 
To encrypt a message, the sender can lease a fresh key $(\ket{\sk_{\mathsf{SKL}}}, \pk_{\mathsf{SKL}})$, then encrypt their message under the SKL public key to obtain $\ct \gets \Enc(\pk_{\mathsf{SKL}}, m)$,{} and finally encrypt $\ct_{\mathsf{SKL}}$ under the recipient's public key to get $\ct'$. Intuitively, once $\ket{\sk_{\mathsf{SKL}}}$ is revoked, the encapsulated ciphertext $\ct_{\mathsf{SKL}}$ becomes un-decryptable and so the leakage of the recipient's decryption key does not matter.

Although we believe this is secure under certain circumstances, it comes with some limitations. First, the recipient \emph{must} use a PKE scheme which allows equivocation of ciphertexts (receiver non-committing). This is because SKL only provides security for ciphertexts generated \emph{after} revocation, whereas encryption with certified deletion generates the ciphertext \emph{before} deletion. Thus, the recipient's PKE must be able to equivocate the existing $\ct'$ to reveal an encryption of the challenge ciphertext which is generated later. Equivocation also precludes fully-homomorphic encryption~\cite{PKC:KatThiZho13}.
Second, this approach cannot achieve publicly verifiable certified deletion unless the SKL scheme supports publicly verifiable revocation. Any such SKL would immediately imply public-key quantum money, which is currently only known from strong assumptions such as indistinguishability obfuscation.

\paragraph{Classical Upload for Deletable NIZKs.}
In a non-interactive zero-knowledge argument (NIZK) with certified deletion, a prover can prepare a quantum NIZK which can later be deleted by the verifier. After producing a valid certificate, the verifier cannot produce another NIZK for the same statement without knowing the witness themselves. \cite{AC:AbbKat25} proposed this notion and also showed how a classical prover can ``send'' a quantum NIZK to the verifier using classical interaction. Certified deletion for NIZKs is a search-style security game, as opposed to the indistinguishability-style required for encryption.

\paragraph{Classical Upload for Copy Protected Programs.}
\cite{TCC:CheHerVu23} dequantized several copy-protection constructions which rely on coset states, assuming the sub-exponential hardness of LWE and the sub-exponential hardness of indistinguishability obfuscation. 
Their technical centerpiece is a remote state preparation protocol for coset states which ensures two search-style security notions for the constructed states: coset state monogamy of entanglement and direct product hardness.

\section{Preliminaries}

\paragraph{Cryptographic Notation.} We denote that two distributions $D_0$ and $D_1$ are \textbf{computationally indistinguishable} against $\QPT$ algorithms by $D_0 \approx_c D_1$. We denote \textbf{statistical indistinguishability} against unbounded quantum adversaries by $D_0\approx_s D_1$.
A distribution $D(\bullet)$ is \textbf{semantically hiding} if $D(z) \approx_c D(z')$ for all $z$ and $z'$.
We denote that two distributions are (computationally) distinguishable with at most $\epsilon$ advantage by $D_0 \overset{\epsilon}{\approx} D_1$.

\paragraph{Quantum Notation.} A \textbf{unitary dilation} of a quantum algorithm $Q$ acting on registers $\calQ$ is a unitary $U$ acting on register $\calQ$ and an ancilla register $\calA$ such that applying $U$ with $\calA$ initialized to $\ket{0}$ implements $Q$:
\[
    Q(\rho) = \Tr_\calA\left[U\left(\rho_{\calQ} \otimes \ketbra{0}_{\calA} \right)U^\dagger\right].
\]

\subsection{Trapdoor Claw-Free Functions}

We recall the definition of trapdoor claw-free functions using the syntax from \cite{DBLP:journals/iacr/KalaiKR26}.

\begin{definition}[Dual-Mode Trapdoor Function Family.]\label{def:dual-tcf}
    A \textbf{dual-mode trapdoor function family} is a family of (potentially randomized) functions
    \[
        \calF = \{f_{\pubparams} : \calX \times \calR \rightarrow \calY\}_{\pubparams}
    \]
    satisfying the following properties.\footnote{Formally, $\calX, \calR, \calY$ all depend on the security parameter with respect to which $\pubparams$ was generated.  This is omitted from the notation for the sake of simplicity.}
    \begin{itemize}
        \item \textbf{Efficient Generation.} There exists a probabilistic poly-time algorithm $\Setup$ distribution, which takes as input security parameter $1^\secpar$ along with a flag $\mode \in \{\lossy, \injective\}$, and outputs 
        \[
            (\pubparams, \td) \gets \Setup(1^\secpar, \mathsf{mode}).
        \]
        
        \item \textbf{Dual-Mode Indistinguishability.}
        \[
            \{\pubparams: (\pubparams, \td) \gets \Setup(1^\secpar, \injective)\}
            \approx_c
            \{\pubparams: (\pubparams, \td) \gets \Setup(1^\secpar, \lossy)\}
        \]

        \item {\textbf{Efficient Evaluation and Range Superposition.}}  There is a poly-time computable evaluation function that given any $\pubparams$, in the image of  $\Setup$ and given any pair $(x,r)\in(\calX\times \calR)$ outputs $f_\pubparams(x,r)$. 
        
        Furthermore, $f_\pubparams$ is associated with a distribution $\calD_{\pubparams}$ over randomness $\calR$ such that the state $\ket{\rand_\pubparams} = \sum_{r\in \calR}\sqrt{\calD_\pubparams(r)}\ket{r}$ is efficiently preparable.
        This state is a superposition over randomness used to evaluate $f_\pubparams$.

        \item \textbf{Trapdoor Inversion.}
        There exists a poly-time algorithm $\Inv: \calX\cup \{\bot\} \times \calY \mapsto \calR$ such that 
        for all $\mode \in \{\lossy, \injective\}$, all $(\pubparams, \td) \in \mathsf{SUPP}(\Setup(1^\secpar, \mode))${} and all input/randomness pairs $(x, r) \in \calX \times \calR$, $\Inv(\td, x, y)$ finds the $r$ which $y$ associates with $x$:
        \footnote{Note that this property implies that with overwhelming probability over $(\pubparams,\td) \gets \Setup(1^\secpar,\mode)$ and $r \gets \mathcal{D}_\pubparams$, $r$ has no siblings $r'$ such that $f_\pubparams(x;r) = f_\pubparams(x;r')$. That is, $f_\pubparams(x;\cdot)$ is effectively injective.}
        \[
            \Pr[(x, r) \gets \Inv(\td,x, f_\pubparams(x;r))] = 1
        \]
        Moreover, if $\mode = \injective$, then $\Inv(\td, \bot, y)$ also finds a unique $x$:
        \[
            \Pr[(x, r) \gets \Inv(\td,\bot, f_\pubparams(x;r))] = 1
        \]

    \end{itemize}
\end{definition}

\begin{definition}[Adaptive Hardcore Bits] \label{def:adapative-hcb}
    A dual-mode trapdoor function family for $\calX = \{0,1\}$ has \textbf{adaptive hardcore bits} if there exists a $\PPT$ algorithm $\Good((r_0, r_1), d)$ deciding membership in a set $\Good_{r_0, r_1}$ such that 
    \[
        \Pr_{d\gets \{0,1\}^n}[d \not\in \Good_{r_0, r_1}] = \negl
    \]
and for all $\QPT$ algorithms $\adv$,
    \[
        \left|
        \operatornamewithlimits{\Pr}_{\substack{(\pubparams, \td) \gets \Setup(1^\secpar, \lossy)  \\
            ((x, r), y, d, v) \gets \adv(\pubparams)}}\left[
        \begin{array}{cc}
             &f_{\pubparams}(x;r) = y  \\
             \land & d\in \Good_{r_0, r_1}\\
             \land & d\cdot (r_0 \oplus r_1) = v\\
        \end{array}
        \right]
        -
        \operatornamewithlimits{\Pr}_{\substack{(\pubparams, \td) \gets \Setup(1^\secpar, \lossy)  \\
            ((x, r), y, d, v) \gets \adv(\pubparams)}}\left[
        \begin{array}{cc}
             &f_{\pubparams}(x;r) = y  \\
             \land & d\in \Good_{r_0, r_1}\\
             \land & d\cdot (r_0 \oplus r_1) \neq v\\
        \end{array}
        \right]
        \right| \leq \negl
    \]
where $(0, r_0)$ and $(1, r_1)$ are the two preimages of $y$.
    
\end{definition}

It will be useful to deal with TCFs where the probability of getting output $y$ is roughly equivalent for all inputs $x$, over the choice of randomness. This ensures that evaluating in superposition and measuring the output $y$ yields roughly even amplitudes on the two preimages, for example $\frac{1}{\sqrt{2}}\left(\ket{x, r_x} + \ket{x', r_x'}\right)$, rather than having a higher amplitude on one of them.

\begin{definition}[Balance]\label{def:tcf-balance}
    A trapdoor-claw-free function is balanced if there exists a $\nu(\secpar) = \negl[\secpar]$ such that for any $x, x' \in \calX$, 
    \[
        \Pr\left[ 
            \begin{array}{c}
                 \exists !\ r' \text{ s.t. } f_\pubparams(x'; r') = y  \\
                 \wedge \ \Big|
                 \frac{\mathcal{D}_\pubparams(r)- \calD_\pubparams(r')}{\mathcal{D}_\pubparams(r)  + \mathcal{D}_\pubparams(r')} 
                 \Big| \leq \nu(\secpar) 
            \end{array}
            :
            \begin{array}{c}
                 (\pubparams,\td) \gets \Setup(1^\secpar, \lossy) 
                 \\ r \gets \mathcal{D}_\pubparams 
                 \\ y \coloneqq f_\pubparams(x;r)
            \end{array}
        \right] \geq 1-\negl[\secpar]
    \]
\end{definition}

\begin{theorem}\cite{BCMVV18,Mahadev18}\label{thm:dual-tcf}
    Assuming the quantum hardness of LWE, there exists a (balanced) dual-mode claw-free trapdoor function family with \textbf{adaptive hard-core bits}.
\end{theorem}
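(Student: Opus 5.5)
The claim is \Cref{thm:dual-tcf}: from quantum LWE there is a balanced dual-mode claw-free trapdoor family with adaptive hardcore bits. I will not build anything new. Instead I will instantiate the LWE-based noisy trapdoor claw-free family of \cite{BCMVV18} together with its injective-mode extension from \cite{Mahadev18}, and check each clause of \Cref{def:dual-tcf}, \Cref{def:adapative-hcb} and \Cref{def:tcf-balance} against it.

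\textbf{The construction.}
\begin{itemize}
\item In lossy mode, $\Setup$ samples $(A,\mathsf{td}_A)$ with a lattice trapdoor. It sets $\vec u = A\vec s + \vec e$ with a small error $\vec e$ and outputs $\pubparams=(A,\vec u)$.
\item In injective mode, $\vec u$ is uniform instead.
\item The function is $f_\pubparams(x;r)= A r + x\vec u + \vec e'$, with $\vec e'$ drawn from a Gaussian much wider than $\vec e$. I will model this exactly as in \cite{BCMVV18}: the randomness is $r$, and the output $y$ is taken as the distribution, or as a rounded or efficiently sampled image.
\item The randomness distribution $\calD_\pubparams$ is either uniform or a discrete Gaussian, so $\ket{\rand_\pubparams}$ is efficiently preparable.
\end{itemize}

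\textbf{Checking the properties.}
\begin{itemize}
\item \emph{Dual-mode indistinguishability.} This is exactly decisional LWE applied to $\vec u$.
\item \emph{Trapdoor inversion.} With $\mathsf{td}_A$ one solves the LWE-like instance $y - x\vec u = A r + \text{noise}$ for $r$, given $x$.
\item \emph{Injective mode.} Since $\vec u$ is uniform, with high probability it lies outside $\mathrm{span}(A)+\text{short}$. Hence $x$ is also uniquely recoverable by trying both $x\in\{0,1\}$.
\item \emph{Claw structure and balance.} In lossy mode the preimages satisfy $r_1 = r_0 - \vec s$. By noise flooding ($\|\vec e\| \ll$ width of $\vec e'$), the densities at the two preimages agree up to negligible relative error. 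This is the balance condition of \Cref{def:tcf-balance}. It also gives the unique sibling, using the footnoted injectivity of $f_\pubparams(x;\cdot)$.
\end{itemize}

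\textbf{Adaptive hardcore bit.} This is the main obstacle, and here I would cite \cite{BCMVV18} rather than reprove it. Their argument works with a binary encoding of $r$. $\Good$ excludes those $d$ that would make $d\cdot(r_0\oplus r_1)$ trivially computable, and such $d$ have negligible density.

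The reduction replaces $A$ by a lossy matrix whose LWE-samples hide $\vec s$. In that hybrid, $d\cdot(r_0\oplus r_1)$ is a function of $\vec s$ that is statistically close to uniform given the adversary's view, by a leftover-hash / extractor argument. The two probabilities in \Cref{def:adapative-hcb} therefore differ negligibly.

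The remaining subtlety is matching notation: their bit is computed on the bit decomposition of the preimages, which is what $r_0\oplus r_1$ denotes here. I would also note that the statement is only required in lossy mode, as \Cref{def:adapative-hcb} stipulates.
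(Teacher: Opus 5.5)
Your proposal is correct and takes the same route as the paper. The paper gives no argument of its own and imports this result from \cite{BCMVV18,Mahadev18}: the lossy mode is the BCMVV18 noisy claw-free family with key $(A, A\vec s+\vec e)$, the injective mode is Mahadev's uniform-$\vec u$ variant, balance comes from noise flooding, and the adaptive hardcore bit is cited directly.

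One detail is worth making explicit when you cite it. BCMVV18 take $\vec s$ binary, which is exactly what makes $d\cdot(r_0\oplus r_1)$ (on bit decompositions) collapse to a parity $\hat d\cdot\vec s$, with $\hat d$ computable from $d$ and the known preimage. Because the adversary chooses $\hat d$ adaptively, their statistical lemma bounds the bias of $\hat d\cdot\vec s$ simultaneously for all nonzero $\hat d$; a single leftover-hash argument for a fixed $\hat d$ would not suffice.
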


\begin{definition}[Collapsing]
    Let $\calF = \{f_{\pubparams}:\calX \times \calR \rightarrow \calY\}_\pubparams$ be a function family associated with a sample $\Setup$. We say that it is collapsing if every $\QPT$ adversary $\adv$ wins in the following game with probablity $\leq 1/2+\negl$:
    \begin{enumerate}
        \item \textbf{Challenger:} Sample $\pubparams \gets \Setup(1^\secpar)$ and sample a random bit $b\gets\{0,1\}$.
        
        Send $\pubparams$ to $\adv$.
        \item $\adv$ sends back a classical value $y\in \calY$ and a register $\calS$.
        \item \textbf{Challenger:} Measure $\calS$ with respect to the projective measurement $\{\Pi_y, I - \Pi_y\}$ where $\Pi_y$ projects onto $\ket{x, r} $ such that $f_\pubparams(x;r) = y$. If the result is $I - \Pi_y$ then $\calA$ needs to guess $b$ (without being given any information about~$b$), and the game is over.
        \item \textbf{Challenger:} If $b = 1$, measure register $\calS$ in the computational basis. Then, regardless of $b$, send $\calS$ back to $\adv$.
        \item $\adv$ outputs a bit $b'$ and wins if $b' = b$.
    \end{enumerate}
\end{definition}

\begin{lemma}\label{lem:dual-mode-tcfs-are-collapsing}
    Any dual-mode trapdoor function family $\calF = \{f_{\pubparams}\}_{\pubparams \in {\sf Supp}(\Setup(1^\secpar, \lossy)}${} is collapsing.
    Moreover, $\lossy$ mode is claw-free.
\end{lemma}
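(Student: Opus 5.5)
We prove the two claims of \Cref{lem:dual-mode-tcfs-are-collapsing} separately. In both, the key fact is that in $\injective$ mode the trapdoor inverts any image to a \emph{unique} preimage. Dual-mode indistinguishability then transfers what is easy in injective mode to lossy mode.

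\textbf{Claw-freeness in lossy mode.} Suppose a $\QPT$ adversary, given lossy $\pubparams$, outputs $y$ and two distinct preimages $(x,r)\neq(x',r')$ with $f_\pubparams(x;r)=f_\pubparams(x';r')=y$ with non-negligible probability. Whether a claimed claw is valid can be checked publicly by evaluating $f_\pubparams$. So a distinguisher for dual-mode indistinguishability can run the adversary and output $1$ exactly when the claw verifies. In $\injective$ mode, trapdoor inversion says $\Inv(\td,\bot,y)$ returns the unique $(x,r)$ with probability $1$, so no image has two distinct preimages. (If $x=x'$, the inversion guarantee with $x$ given already forces $r=r'$ in either mode.) Hence the success probability is $0$ in injective mode, and the distinguisher's advantage equals the adversary's lossy-mode success probability, which must therefore be negligible.

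\textbf{Collapsing in lossy mode.} Consider the collapsing game with $\pubparams\gets\Setup(1^\secpar,\lossy)$ and a $\QPT$ adversary $\adv$ with advantage $\delta$. I would use a hybrid argument.
\begin{enumerate}
\item Replace the lossy parameters with injective ones. The entire game is efficient without $\td$: the projection $\Pi_y$ is computed by coherently evaluating $f_\pubparams$ and comparing with $y$, and the challenger's optional computational-basis measurement needs no trapdoor. So the probability that $\adv$ wins changes by at most a negligible amount, by dual-mode indistinguishability.
\item In injective mode, conditioned on passing the $\Pi_y$ test, register $\calS$ is supported on the set $\{(x,r): f_\pubparams(x;r)=y\}$, which has at most one element (again by trapdoor inversion with $\bot$). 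The post-measurement state is therefore already the computational basis state $\ket{x,r}$, possibly tensored with an entangled ancilla.
\item Measuring in the computational basis is then the identity on this state, since it does not disturb a state supported on a single basis vector even in the presence of entanglement. So the $b=0$ and $b=1$ branches give identical states, and $\adv$ wins with probability exactly $1/2$.
\end{enumerate}
Combining the steps, the lossy-mode winning probability is at most $1/2+\negl$. The same argument also gives collapsing for injective mode directly.

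The main obstacle is being careful that the hybrid in step 1 is a valid reduction. The reduction must simulate the challenger's measurement $\{\Pi_y, I-\Pi_y\}$ coherently and efficiently, which requires $\calY$ membership and evaluation to be poly-time for every $\pubparams$ in the support of $\Setup$; this is guaranteed by the efficient evaluation property. The reduction must also handle the branch where the result is $I-\Pi_y$, in which $\adv$ guesses without any information about $b$. That branch contributes exactly $1/2$ winning probability in both modes, so it is easily accounted for. The reduction embeds its challenge $\pubparams$, samples $b$ itself, and outputs whether $\adv$ wins.
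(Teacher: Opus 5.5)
Your proposal is correct and follows the paper's proof. For collapsing, you switch to injective parameters, which is valid because the challenger never uses $\td$; there $\Pi_y$ projects onto a single computational basis vector, so the optional measurement is invisible and the win probability is exactly $1/2$. For claw-freeness, you observe that injective mode has no claws, so a lossy-mode claw finder, whose output is publicly checkable, distinguishes the two modes. Your version just spells out details the paper leaves implicit: why exact injectivity follows from $\Inv(\td,\bot,\cdot)$, and how the $I-\Pi_y$ branch is handled.

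One small wording fix: a state supported on a single basis vector is a product with the rest of the system, not entangled with an ancilla.
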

\begin{proof}
    Consider the hybrid game where $\pubparams$ is generated in injective mode instead. Note that the success probability in this hybrid game is negligibly close to its success probability in the actual game since since the challenger's role can be played using an externally sampled $\pubparams$, without the use of $\td$. 
    Furthermore, in the hybrid game the success probability is exactly $\frac12$ since $\Pi_{y}$ is a projection onto a computational basis vector when $\pubparams$ is in injective mode. 

    Moreover, in injective mode there do not exist claws, so any algorithm which finds one in $\lossy$ mode immediately distinguishes, violating the indistinguishability of the two modes.
\end{proof}

\subsection{Witness-Preserving Arguments for NP} \label{sec:witness-pres-np}

\cite{DBLP:journals/iacr/KalaiKR26} showed how a quantum prover holding a superposition $\sum_w \alpha_w \ket{w}$ of NP witnesses for $x\in \lang$ can convince a classical verifier that $x\in \lang$ \emph{without disturbing} their witness superposition. We recall the notion and their result in this section. Then, in the next section we observe that techniques from \cite{FOCS:LomMaSpo22} can make the protocol ``indistinguishably-extractable''; that is, the prover's view can be indistinguishably simulated in such a way that the witness is temporarily extracted, then returned to the prover to obtain the final view.

\begin{definition}\cite{DBLP:journals/iacr/KalaiKR26}\label{def:state-preserving-NP}
    A \emph{witness-preserving argument} for an $\NP$ language $\lang$ is an interactive protocol between a $\QPT$ prover $P$ and a ${\sf PPT}$ verifier $V$ that satisfies the following guarantees.

    \begin{itemize}
        \item \textbf{Completeness and Witness Preservation.} 
        
        For every $x\in \lang$ and every superposition $\ket{\psi} = \sum_{w} \alpha_{w} \ket{w}_{\calW}\otimes \ket{\phi_w}_{\calB}$ over witnesses for $x$ (encoded in the computational basis and potentially entangled with an external state $\ket{\phi_w}$), there exist negligible functions $\mu_1$ and $\mu_2$ such that for every $\secpar\in\mathbb{N}$,
        \[
\Pr\left[
                \left(\frac{1}{2}\big\|\ketbra{\psi} - \rho \big\|_1 \leq \mu_1(\secpar)\right)
                ~~\land~~ \left(b = \Accept\right):~
                (\rho, b) \gets \langle P(\ket{\psi}), V\rangle(1^\secpar, x)
            \right] \geq 1- \mu_2(\secpar),
\]
        where $P$ acts only on register $\calW$ and where the notation $(\rho, b) \gets \langle P(\ket{\psi}), V\rangle(1^\secpar, x)${} means that $V$ outputs $b\in\{\Accept, \Reject\}$ and $P$'s residual state is $\rho$.

        When we wish to be more precise about the repair guarantees, we say that the argument is $(1-\mu_1)$-witness-preserving.
       
        \item \textbf{Computational Soundness.} 
        For every polynomial $\ell=\ell(\secpar)$, there exists a negligible function $\mu$ such that for every $\secpar \in \bbN$, every $x\notin \lang$ of size $\leq \ell(\secpar)$ and every $\QPT$ cheating prover $P^*$ with auxiliary input $\ket{\psi}$ consisting of at most ${\sf poly}(\ell(\secpar))$ qubits
        \[
            \Pr[\langle P^*(\ket{\psi}, V\rangle(1^\secpar, x) = \Accept] \leq \mu(\secpar)
        \]

    \end{itemize}
\end{definition}

\begin{theorem}\label{thm:witness-preserving-NP}\cite{DBLP:journals/iacr/KalaiKR26}
    There exists a  $(1-2^{-\secpar})$-witness-preserving argument for $\NP$, assuming the existence of dual-mode claw-free trapdoor function family with state recovery (which in turn can be constructed from the post-quantum hardness of {\sf LWE}~\cite{BCMVV18}).
\end{theorem}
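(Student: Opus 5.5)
The plan is to build the argument from a classical public-coin zero-knowledge-style protocol for $\NP$, run coherently by the prover. Every classical message the prover sends is arranged to be (nearly) independent of which witness in the superposition $\sum_w \alpha_w\ket{w}$ is used, so measuring the transcript does not collapse the witness register.

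\textbf{Hiding commitment.} I would use the dual-mode trapdoor claw-free family of \Cref{def:dual-tcf} in $\lossy$ mode as a commitment with the following two features.
\begin{itemize}
\item To commit coherently to a witness bit $w_i$, the prover evaluates $f_\pubparams(\cdot\,;\cdot)$ on $w_i \oplus s_i$, where $s_i$ is in uniform superposition, and measures only the image $y_i$.
\item Since every image in lossy mode has a preimage for both values of the input bit (balance, \Cref{def:tcf-balance}), the measured $y_i$ is statistically independent of $w_i$. The post-measurement state is a claw superposition that still coherently carries $w_i$.
\end{itemize}

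\textbf{Coherent protocol and recovery.} Next, run a Blum/Kilian-type commit-and-open protocol in parallel repetition on these commitments.
\begin{itemize}
\item The verifier's challenges are uniformly random, so they are public-coin and classical.
\item The prover computes each response coherently. Openings are given in the ``masked'' form: Hadamard-basis information $d$ on the preimage registers, exactly as in the adaptive-hardcore-bit setting.
\item The distribution of the response is then again independent of the witness branch, up to a phase $(-1)^{d\cdot(r_0\oplus r_1)}$ on the witness register.
\item The \emph{state recovery} property of the TCF is then used to coherently undo these phases and uncompute the claw registers. I expect this to give witness preservation with error $2^{-\secpar}$ after choosing the repetition parameter appropriately. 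Completeness is immediate.
\end{itemize}

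\textbf{Soundness.} For soundness I would switch to injective mode by dual-mode indistinguishability. In that mode each $y_i$ determines the committed bit uniquely via $\Inv$, so the commitment is perfectly binding. A cheating prover's acceptance then reduces to the statistical soundness of the underlying parallel-repeated $\NP$ protocol against a fixed committed string. The only remaining question is whether the prover's Hadamard-form openings can be checked consistently with the committed values. I expect to handle this with the adaptive hardcore bit property (\Cref{def:adapative-hcb}): a prover that both fixes a preimage and answers a Hadamard-type test with advantage beyond $1/2$ contradicts it.

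\textbf{Main obstacle.} The hardest step is reconciling these two modes. Witness preservation needs openings that reveal nothing about the branch, i.e.\ phase-only (Hadamard) information, whereas soundness needs openings the classical verifier can actually check against the commitment. My first attempt would be to have the verifier randomly choose, per repetition, between a computational-basis opening test and a Hadamard test, as in Mahadev-style measurement protocols. I would then argue that only the Hadamard-tested positions touch the witness, and that these are repaired by state recovery. I would bound the disturbance from the computational-basis tests by applying them only to the fresh masking claws rather than to the witness itself.
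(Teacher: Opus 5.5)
\textbf{There is a genuine gap: your proposal does not yet give a working protocol.} The paper does not prove this statement; it imports it from \cite{DBLP:journals/iacr/KalaiKR26}. It describes that construction only as a parallel repetition of the GMW 3-coloring protocol with a specially chosen commitment, followed by a recovery procedure. Your ingredients overlap with this, but the step you call the main obstacle is the whole difficulty, and the resolution you sketch fails.

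A Hadamard-basis string $d$ reveals nothing about the committed bit. That is exactly why it spares the superposition, but it also means it is not an opening and gives the verifier nothing to check the $\NP$ relation against. Mahadev-style random-basis tests certify how a committed qubit was measured, not that committed classical values satisfy a relation. Your one-time pad $w_i\oplus s_i$ has the same problem: it hides $w$ but leaves the verifier nothing structured to test.

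The Hadamard tests also undercut your own soundness argument. Checking them means computing $d\cdot(r_0\oplus r_1)$ with the trapdoor of a 2-to-1 key. The acceptance predicate is then no longer trapdoor-free, so dual-mode indistinguishability does not let you switch to $\injective$ mode. Meanwhile, the adaptive hardcore bit property you invoke for that half is a $\lossy$-mode statement, and it is vacuous once images have unique preimages.

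\textbf{The missing idea is that no special opening is needed.} In GMW, sample the color permutation coherently, i.e.\ hold it in uniform superposition. Then the two opened colors on the challenged edge form a uniformly random distinct pair, independent of $w$. In $\lossy$ mode each image $y$ has exactly one preimage per branch, so the ordinary computational-basis opening $(c,r_c)$ is determined by the opened value and the public $y$. Measuring it therefore does not disturb the witness.

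What does get entangled with $w$ is the permutation register and the unopened commitment registers $\ket{c}\ket{r_c}$. Their contents are functions of $w$, the challenge, and the opened colors, and they must be repaired after the verifier decides. Erasing $r_c$ from a superposition over $c$ needs trapdoor-dependent information. Any efficient trapdoor-free map $\ket{c}\ket{r_c}\mapsto\ket{c}\ket{\phi}$ with branch-independent $\ket{\phi}$ could be inverted on both branches to produce a claw.

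There are two ways the verifier can supply this information. Your Hadamard-measure-then-correct-phase idea is one, with the verifier sending $d\cdot(r_0\oplus r_1)$. The other is releasing $\td$ so the prover can apply $\Inv(\td,c,y)$ coherently. Either is safe only once soundness no longer matters, i.e.\ after the decision. The prover then recomputes the committed colors and permutation from $w$ and erases them.

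With this design, the decision itself only checks $f_{\pubparams}(c;r_c)=y$ and the coloring constraint. Your switch to $\injective$ mode then becomes legitimate: the commitments are perfectly binding, and parallel-repeated GMW is statistically sound.
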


\subsection{Quantum Extraction for Arguments of Classical Knowledge}
\label{sec:prelim:extraction}

We next define the notion of indistinguishable-extraction, which is a variation on \cite{FOCS:LomMaSpo22}'s ``state-preserving extraction''.\footnote{We use the ``indistinguishably-extractable'' to differentiate from \cite{DBLP:journals/iacr/KalaiKR26}'s ``witness-preserving'' property which refers to preserving the prover's state in a \emph{real} execution.} State-preserving extraction allows generating an indistinguishable transcript and adversarial state while simultaneously extracting a witness. However, obtaining the extracted witness and transcript simultaneously necessarily means collapsing the prover's witness, which is immediately distinguishable in a witness-preserving argument.

In indistinguishable-extraction, we deliberately avoid collapsing the witness in this way. Instead, we build an interface which temporarily exposes a coherent witness, then ``returns'' the witness to the adversary without collapsing it. The resulting interface is quite similar to \cite{FOCS:LomMaSpo22}'s guaranteed extraction: there is a unitary extractor $\Ext$ which uses a superposition over accepting transcripts to extract a valid witness with overwhelming probability. The difference from guaranteed extraction is that we require $\Ext$ to be \emph{unitary}. Thus, it may be inverted afterwards, returning the witness.

In what follows, we denote by  
\[
{\sf View}_{P^*}\langle {P^*}(\ket{\psi}, V\rangle(1^\secpar, x)
\]
the view of an adversarial prover $P^*$ after interacting in the protocol $\langle {P^*}(\ket{\psi}), V\rangle(1^\secpar, x)$,{} which includes the transcript in register $\calT$ as well as the residual state of $P^*$ in register $\calA$.

\begin{definition}[Indistinguishably-Extractable Arguments]\label{def:indist-ext}
    An NP argument is said to be \emph{indistinguishably-extractable} if there exist a QPT algorithm $\Sim$ and uniform family of $\poly[\secpar]$-sized unitaries $\{\Ext_\secpar\}_{\secpar\in \bbN}$ satisfying the following for every sufficiently large security parameter $\secpar\in \bbN$, every adversarial QPT prover $P^*$ and every $\NP$ statement $x$:
    \begin{itemize}
        \item \textbf{Syntax.} $\Sim(P^*)$ outputs three registers $(\calT, \calA, \calS)$ and a decision bit $b$ denoting whether $\calT$ contains an accepting transcript. $\calT$ contains a superposition over classical transcripts $T$, $\calA$ contains the adversary's state, and $\calS$ is a work register for the simulator.\footnote{$\Sim$ does not receive restricted information like in a zero-knowledge argument. For example, $\Sim$ could simply run the argument coherently and measure whether the verifier would accept. We also allow other strategies for generating transcripts.}

        $\Ext$ operates on registers $(\calT, \calA, \calS)$ and reorganizes them into registers $(\calW, \calE)$. $\calW$ contains the extracted witness and $\calE$ is a work register.
        
        \item \textbf{Simulation.} $\Sim$ produces indistinguishable adversarial views given oracle access to the prover $P^*$:
        \[
            \Tr_{\calS}\left[\ket{\phi}_{\calT,\calA, \calS}\right] \approx {\sf View}_{P^*}\langle {P^*}(\ket{\psi}), V\rangle(1^\secpar, x)
        \]
        where $(\ket{\phi}_{\calT, \calA, \calS}, b)\gets \Sim^{P^*} (1^\secpar,x,\ket{\psi})$. If the indistinguishability is computational (resp. statistical), we say the argument is computationally (resp. statistically) indistinguishably-extractable.
        
        \item \textbf{Extraction.} Let $\Pi_{\Ext, \secpar} = \Ext_\secpar^\dagger \Pi_\Ver \Ext_{\secpar}$ be the projector onto $\Ext_\secpar$ successfully extracting an accepting witness for $x$ to register $\calW$.
        Consider the following experiment $\IndExt_{\secpar}$:
        \begin{enumerate}
            \item Run $(\ket{\phi}_{\calT, \calA, \calS}, b_{\Sim}) \gets \Sim^{P^*}$.
            \item If $b_\Sim = \Accept$: measure $\ket{\phi}_{\calT, \calA, \calS}$ according to $\{\Pi_{\Ext, \secpar}, I - \Pi_{\Ext, \secpar}\}$. Call the measurement result (representing successful extraction) $b_\Ext$.
        \end{enumerate}
        Then the probability of generating an accepting transcript but failing to extract a valid witness is negligible:
        \[
            \Pr_{\IndExt_{\secpar}}[b_\Sim = \Accept \land b_\Ext = \Reject] = \negl
        \]
    \end{itemize}
\end{definition}

It will also be useful to define an approximate version, where the extractor is allowed to take $\poly[1/\epsilon]$ time to extract a witness with probability $1-\epsilon$, for every $\epsilon$.

\begin{definition}[$\epsilon$-Indistinguishable-Extractability]\label{def:eps-indist-ext}
    We say an argument is $\epsilon$-Indistinguishably-Extractable if for every uniform sequence of precisions $\{\epsilon(n) \in (0, 1]\}_{n\in \bbN}$,
    there exists a QPT $\Sim$ and uniform family of $\poly[\secpar, 1/\epsilon(n)]$-sized unitaries $\{\{\Ext_{\lambda, \epsilon(n)}\}_{n\in \bbN}\}_{\secpar\in \bbN}$ such that
    \begin{itemize}
        \item \textbf{Simulation.} $\Sim^{P^*}$ produces indistinguishable adversarial views.
        \item \textbf{$\epsilon$-Extraction.} Let $\IndExt_{\secpar, \epsilon(n)}$ the be experiment from \Cref{def:indist-ext} using $\Ext_{\secpar, \epsilon(n)}$. For all $n\in \bbN$ and sufficiently large $\secpar\in \bbN$, the probability of a successful transcript but unsuccessful extraction is at most $\epsilon$:
        \[
            \Pr_{\IndExt_{\secpar, \epsilon(n)}}[b_\Sim = \Accept \land b_\Ext = \Reject] \leq \epsilon(n)
        \]
    \end{itemize}
\end{definition}

The almost-as-good-as-new lemma~\cite{TOC:Aar05} (closely related to the gentle measurement lemma~\cite{TIT:Win99}) limits the damage that indistinguishable extraction can do to a simulated transcript.

\begin{lemma}\label{lem:iext-close}
    For all adversarial provers $P^*$,
    \[
        \frac{1}{2}\left\| \Sim(P^*) - \IndExt_{\secpar, \epsilon(n)} \right\|_1 \leq \sqrt{\epsilon(n)}
    \]
    Combining this with the indistinguishability of simulation,
    \[
        \Tr_{\calS}\left[\IndExt_{\secpar, \epsilon(n)^2}\right] \overset{\epsilon(n)}{\approx} {\sf View}_{P^*}\langle {P^*}(\ket{\psi}), V\rangle(1^\secpar, x)
    \]
    Moreover, if $\IndExt_{\secpar, \epsilon(n)^2}$ behaves arbitrarily in the case of $b_\Sim = \Accept \land b_\Ext = \Reject$, the distance is at most $\epsilon(n) + \epsilon(n)^2$.
\end{lemma}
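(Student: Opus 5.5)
The first claim is a direct application of the almost-as-good-as-new (gentle measurement) lemma~\cite{TOC:Aar05,TIT:Win99}, used on the single binary measurement that separates $\IndExt_{\secpar,\epsilon(n)}$ from $\Sim(P^*)$. The two experiments agree up to the moment $\Sim^{P^*}$ outputs $(\ket{\phi}_{\calT,\calA,\calS}, b_\Sim)$. When $b_\Sim = \Reject$ they remain identical. When $b_\Sim = \Accept$, $\IndExt$ additionally measures $\{\Pi_{\Ext,\secpar}, I-\Pi_{\Ext,\secpar}\}$ with $\Pi_{\Ext} = \Ext^\dagger \Pi_\Ver \Ext$. Because $\Ext$ is unitary, this is a genuine projective measurement on $(\calT,\calA,\calS)$. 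Concretely, it is implemented by applying $\Ext$, measuring $\Pi_\Ver$ coherently on $\calW$ without collapsing anything else, and applying $\Ext^\dagger$. The post-measurement state on the registers is then either $\Pi_\Ext\ket{\phi}$ or $(I-\Pi_\Ext)\ket{\phi}$, suitably normalized.

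I would condition on $b_\Sim = \Accept$, writing $\rho_{\Accept}$ for the conditional simulated state. By $\epsilon$-extraction (\Cref{def:eps-indist-ext}), $\Pr[b_\Sim=\Accept]\cdot \Tr[(I-\Pi_\Ext)\rho_\Accept] \le \epsilon(n)$. Gentle measurement gives that the post-measurement mixture $\Pi\rho\Pi + (I-\Pi)\rho(I-\Pi)$ lies within trace distance $\sqrt{\Tr[(I-\Pi)\rho]}$ of $\rho$. This is the standard pinching bound, or two uses of the $\sqrt{\eta}$ bound. Averaging over $b_\Sim$ and applying Jensen's inequality (concavity of $\sqrt{\cdot}$) gives a total distance of at most $\sqrt{\epsilon(n)}$. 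One point needs care: whether the outcome bit $b_\Ext$ counts as part of the compared output. If it does, I would argue instead through the unmeasured state. Either way the bound comes out as $\sqrt{\epsilon}$, possibly up to a constant factor that I would absorb by being careful with the pinching form.

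The second claim follows by substituting $\epsilon(n)^2$ for $\epsilon(n)$, which is allowed because \Cref{def:eps-indist-ext} holds for every uniform precision sequence. The first part then yields distance $\epsilon(n)$ between $\Sim$ and $\IndExt_{\secpar,\epsilon(n)^2}$. Tracing out $\calS$ cannot increase trace distance. Combining this with the simulation property through the triangle inequality gives the claimed $\overset{\epsilon(n)}{\approx}$, modulo the negligible or computational simulation error, which is absorbed for sufficiently large $\secpar$.

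For the final claim, I would take a hybrid in which the branch $b_\Sim=\Accept \land b_\Ext=\Reject$ is replaced by arbitrary behavior. The two experiments differ only on an event of probability at most $\epsilon(n)^2$. Total variation on that event therefore adds at most $\epsilon(n)^2$, giving $\epsilon(n)+\epsilon(n)^2$.

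I expect the main obstacle to be getting the exact constant in the gentle measurement step, since naive versions give $2\sqrt{\epsilon}$ and the sharp $\sqrt{\epsilon}$ requires the pure-state form. My first attempt would be to purify $\Sim$'s output and use the fidelity $\|\Pi\ket{\phi}\|^2 \ge 1-\epsilon$ directly.
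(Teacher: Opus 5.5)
Your proposal is correct and follows the paper's route. The paper gives no separate proof: it cites Aaronson's almost-as-good-as-new lemma, whose proof is exactly the dilate--measure--undo structure of $\Ext$, $\Pi_\Ver$, $\Ext^\dagger$. Then, as you do, it substitutes $\epsilon \mapsto \epsilon^2$, applies a triangle inequality with the simulation guarantee, and charges the bad event its probability $\epsilon^2$. The constant-factor worry in your last paragraph is unnecessary: your pinching computation already gives $\sqrt{\eta(1-\eta)} \le \sqrt{\eta}$ directly, provided $b_\Ext$ is not part of the compared output.
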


Lombardi, Ma, and Spooner show a general class of protocols which have guaranteed extraction (i.e. $\Ext$ is allowed to be an expected QPT algorithm instead of a unitary).
Although the exact conditions are somewhat complex, they specifically list \cite{FOCS:GolMicWig86}'s 3-coloring protocol as satisfying these properties, with a small tweak: commit to the witness first, then run a GMW argument which is amplified in parallel~\cite[Section 5.3 and Corollary 8.1]{FOCS:LomMaSpo22}. 
Moreover, Lombardi, Ma, and Spooner show that there exists series of unitary dilations $\{U_{\epsilon(n)}\}_{n\in \bbN}$ which $\epsilon$-approximate the guaranteed extractor~\cite[Claim 9.4]{FOCS:LomMaSpo22}.\footnote{We briefly mention that Lai, Spooner, and Tromanhauser~\cite{eprint:LST26} recently pointed out a bug in \cite{FOCS:LomMaSpo22}'s definition of coherent-runtime simulation. However, the other results are intact, including guaranteed extraction and the specific unitary dilation used to instantiate the faulty coherent-runtime simulators.}

\begin{theorem}
    For every (computationally or statistically) collapsing commitment $\Com$, the following protocol is statistically $\epsilon$-indistinguishably-extractable.
    \begin{enumerate}
        \item Commit to the witness $w$ using $\Com$.
        \item Prove knowledge of an opening to a witness $w$ using \cite{FOCS:GolMicWig86}'s 3-coloring protocol, instantiated with $\Com$.
    \end{enumerate}
\end{theorem}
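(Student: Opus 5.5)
We obtain the statement by instantiating \Cref{def:eps-indist-ext} with the machinery of \cite{FOCS:LomMaSpo22}, verifying simulation and $\epsilon$-extraction separately. For a precision $\epsilon=\epsilon(n)$, the simulator $\Sim^{P^*}$ will simply run the protocol coherently. It plays the verifier by preparing a uniform superposition over the verifier's public coins (the parallel-repeated GMW challenges) in a register inside $\calS$. It then runs $P^*$'s unitary dilation, copies the resulting classical messages into $\calT$ via CNOT, and coherently evaluates the verifier's decision into a qubit, which it measures to obtain $b$. The only measurement performed is of this decision bit. This bit is also measured in a real execution, and every other register is traced out or held by the adversary, so $\Tr_\calS[\ket{\phi}]$ equals the real view exactly once we account for the coin register being decohered by the copy into $\calT$. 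Simulation therefore holds statistically, in fact perfectly.

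For extraction, we invoke the guaranteed extractor of \cite[Section 5.3, Corollary 8.1]{FOCS:LomMaSpo22} for the commit-then-parallel-GMW protocol. It holds whenever $\Com$ is collapsing, which is exactly the hypothesis: collapsing lets the extractor rewind by measuring only the committed openings without disturbing the prover. Given a prover state that produced an accepting transcript, this expected-QPT procedure outputs a valid $3$-coloring, and hence a witness $w$ via the commitment openings, except with negligible probability. Next, we apply \cite[Claim 9.4]{FOCS:LomMaSpo22}, which supplies unitary dilations $U_{\epsilon}$ of size $\poly[\secpar,1/\epsilon]$ (obtained by truncating the expected runtime at $\poly/\epsilon$ via Markov) that $\epsilon$-approximate the guaranteed extractor. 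We set $\Ext_{\secpar,\epsilon}:=U_\epsilon$, relabeling its output so that the extracted witness sits in $\calW$ and everything else in $\calE$. The probability that $\Pi_{\Ext}=\Ext^\dagger\Pi_\Ver\Ext$ rejects on an accepting simulated state is then at most the guaranteed-extraction failure probability plus the truncation error, which is $\epsilon+\negl\le \epsilon$ after rescaling $\epsilon$ by a constant.

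The step I expect to be the main obstacle is matching the two interfaces. The LMS extractor is defined on the prover's post-interaction state together with the \emph{classical} transcript, whereas $\Ext$ must act on our coherent $(\calT,\calA,\calS)$. I would resolve this in two stages. First, because $\calT$ was produced by CNOT copies, the coherent state is block diagonal in the transcript basis, so conditioning on each transcript value gives exactly the LMS input. Second, the dilation $U_\epsilon$ can be applied controlled on $\calT$, which yields a unitary whose success projector decomposes across the transcript blocks, and the failure bound then follows by averaging. A secondary subtlety is that \cite{eprint:LST26} flags a bug in LMS's coherent-runtime simulation. We must check that we use only guaranteed extraction and the explicit dilation of Claim 9.4, which that note states remain intact, and never the faulty simulator, since our simulator is the trivial coherent execution described above.
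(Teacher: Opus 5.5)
Your overall route is the paper's. The paper supports this theorem only by citing the guaranteed extractor of \cite[Section 5.3 and Corollary 8.1]{FOCS:LomMaSpo22} for commit-then-parallel-GMW with a collapsing commitment, together with the $\epsilon$-approximate unitary dilations of \cite[Claim 9.4]{FOCS:LomMaSpo22}. Its $\Sim$ is the coherent execution followed by a measurement of the verifier's decision (footnote to \Cref{def:indist-ext}).

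The step that does not go through as written is your interface matching. The extraction guarantee of \cite{FOCS:LomMaSpo22}, as I understand it, is proved for a specific state: measure the acceptance projector $\Pi_A$ on the prover's post-first-message state tensored with the uniform superposition of challenges, and leave the challenge register coherent. This matches the paper's description of $\calT$ after $\Sim$ as a superposition over transcripts.

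The analysis uses Jordan's lemma for $\Pi_A$ and the projector $\Pi_U$ onto the uniform challenge superposition. Suppose the initial state has weight $|\alpha_j|^2$ on the $j$-th Jordan block, with singular value $p_j$. After the coherent $\Pi_A$ measurement, that block carries weight proportional to $|\alpha_j|^2p_j$. This is what pays for the roughly $1/p_j$ cost of rewinding from that block and makes failure negligible.

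Conditioning on the transcript copied into $\calT$ dephases the challenge and destroys this structure. For example, if the prover answers every challenge correctly, then $\Pi_A=I$. The conditional state $\ket{r}\otimes\ket{\psi}$ then has weight $1-1/|R|$ on singular value $0$. Rewinding by alternating $\Pi_A$ and $\Pi_U$ therefore essentially never reaches a fresh uniform challenge, even though the prover is perfect. So applying $U_\epsilon$ controlled on $\calT$ to the per-transcript conditional states is not covered by the cited result.

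The fix is cheap, because $\Ext$ acts on all of $(\calT,\calA,\calS)$. First, keep in $\calS$ the originals of every message: the verifier's coins and also the prover's responses. You need the latter anyway so that $\Tr_{\calS}$ decoheres the prover's messages; your simulation argument only handles the coins. Second, let $\Ext$ uncompute the CNOT copies in $\calT$ from these originals. This restores exactly the coherent post-acceptance state on which \cite{FOCS:LomMaSpo22}'s extractor operates. Then apply the Claim 9.4 dilation, and let $\Ext^\dagger$ recompute the copies. Your perfect simulation is unaffected, and the $\epsilon+\negl$ failure bound follows directly from the cited guarantee.
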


Finally, we observe that this theorem can be combined with \cite{DBLP:journals/iacr/KalaiKR26}'s construction of \emph{witness}-preserving arguments. Their construction is a parallel repetition of \cite{FOCS:GolMicWig86}'s 3-coloring protocol using a specially chosen commitment, followed by a recovery procedure. An additional commitment and recovery step can be added at the start and end of their protocol. 
The following corollary is obtained by applying the above theorem to the portion of the protocol before recovery and observing that the recovery step afterwards cannot increase the \emph{statistical} distance.\footnote{If the simulation indistinguishability were computational, it is possible that revealing additional information during recovery allows breaking the indistinguishability.}

\begin{corollary}\label{coro:wp-np-eps-iext}
    There exists a  witness-preserving argument for $\NP$ with approximate indistinguishable-extraction, assuming the existence of dual-mode claw-free trapdoor function family with state recovery (which in turn can be constructed from the post-quantum hardness of {\sf LWE}~\cite{BCMVV18}).
\end{corollary}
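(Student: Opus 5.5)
The plan is to build the protocol by composition. First, the prover commits to its witness using a collapsing commitment. Then it runs the \cite{DBLP:journals/iacr/KalaiKR26} witness-preserving argument, which is a parallel-repeated \cite{FOCS:GolMicWig86} 3-coloring proof with a special commitment, followed by a recovery step. Finally, an extra recovery step decommits and uncomputes the initial commitment so that the prover's witness register is returned.

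Concretely, I would instantiate the initial commitment with the dual-mode TCF. Its lossy-mode collapsing property is given by \Cref{lem:dual-mode-tcfs-are-collapsing}, and its state-recovery property (from \cite{BCMVV18}) lets the honest prover coherently un-commit at the end. Witness preservation and completeness would then follow as in \Cref{thm:witness-preserving-NP}. The protocol before recovery is a witness-preserving GMW argument, and the commit and un-commit steps act on the witness register only through reversible operations plus a state-recovery procedure. The extra repair error is negligible, so the overall preservation parameter remains $1-\negl$.

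For approximate indistinguishable-extraction, I would apply the preceding theorem to the prefix of the protocol consisting of the commitment and the GMW phase. This yields, for every precision $\epsilon$, a QPT $\Sim$ that coherently runs the prefix and records acceptance, together with the \cite[Claim 9.4]{FOCS:LomMaSpo22} unitary dilation $\Ext_{\secpar,\epsilon}$. The simulation is statistically close to the real prefix view, and extraction fails on accepting transcripts with probability at most $\epsilon$. To handle the full protocol, I would extend $\Sim$ by running the real recovery messages on top of the simulated prefix state. The recovery step is a CPTP map applied identically in the real and simulated worlds, so by the data-processing inequality for trace distance the statistical closeness is preserved. The extraction projector is unchanged, since it is defined on the registers produced at the end of the prefix, and extraction can be carried out at that point.

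I expect the main obstacle to be verifying that the \cite{DBLP:journals/iacr/KalaiKR26} protocol meets the structural conditions of \cite{FOCS:LomMaSpo22}'s guaranteed-extraction theorem. Those conditions are stated for a collapsing commitment followed by a parallel GMW argument, whereas the KKR commitment is a special dual-mode scheme, and its collapsing and binding must hold in the mode used here. I would first check that the KKR commitment is collapsing in the relevant mode, via \Cref{lem:dual-mode-tcfs-are-collapsing}. If that fails, my fallback is to rely solely on the added outer collapsing commitment for extraction: the GMW phase is then treated as proving knowledge of an opening of it, and only its special-soundness structure is needed. Since all indistinguishabilities involved are statistical, composing with recovery stays safe.
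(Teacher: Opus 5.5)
Your proposal is correct and follows essentially the same route as the paper: wrap the \cite{DBLP:journals/iacr/KalaiKR26} protocol with an additional commitment at the start and a recovery step at the end, apply the \cite{FOCS:LomMaSpo22}-based extraction theorem to the portion before recovery, and use the fact that recovery cannot increase statistical distance, which is exactly why the paper stresses that the simulation must be statistical rather than computational. Your fallback plan is unnecessary, and it would not suffice on its own, because the guaranteed extractor of \cite{FOCS:LomMaSpo22} also needs the commitments opened inside the \cite{FOCS:GolMicWig86} phase to be collapsing, not just the special-soundness structure.
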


\section{Phase Injection Lemma}\label{sec:plug-in-lemma}

In this section, we prove a lemma that will be useful for proving semi-quantum certified deletion.
At a high level, the lemma considers a scenario where an adversary evaluates a trapdoor claw-free function $f_\pubparams(b,x_b)$ while knowing a superposition over preimages $(b, x_b)$. Later, they are allowed to ask for an inner product $d\cdot (x_0 \oplus x_b)$ with the XOR of the two preimages. The following lemma shows that when $f_\pubparams$ has adaptive hardcore bits, the adversary cannot detect the insertion of a phase $p$ into its initial superposition over preimages,
so long as the inner product revealed at the end is also adjusted by $p$.

\begin{lemma}[Phase Injection]\label{lem:phase-injection}
    Let $\calF$ be a dual-mode trapdoor claw-free function with adaptive hardcore bits. Then every  $\QPT$ adversary $\adv$ wins in the following game with probability at most $1/2+\negl$.
    \begin{enumerate}
        \item \textbf{Challenger:} Sample $(\pubparams, \td) \gets \KeyGen(1^\secpar, \lossy)$ and sample a random bit $p\gets\{0,1\}$. Send $\pubparams$ to $\adv$.
        \item $\adv$ sends back a classical value $y$ and a register $\calR$ (which should store a preimage of $y$).
        \item \textbf{Challenger:} Perform the projective measurement $\{\Pi_y, I - \Pi_y\}$ on $\calR$ where $\Pi_y$ projects onto $(b, r_b)$ such that $f_{\pubparams}(b;r_b) = y$. If the result is $I - \Pi_y$, then $\calA$ needs to guess $p$ (without being given any information about~$p$), and the game is over.
        \item \textbf{Challenger:} Otherwise, if $p=1$ then apply a $Z$ gate to the first qubit of $\calR$.  
        
        Send $\calR$ back to $\adv$.
        \label{step:phase-inj-phase}
        \item $\adv$ sends a string $d$ to the challenger. \label{step:phase-inj-get-d}
        \item \textbf{Challenger:} Compute $(0, r_0) \gets \Inv(\td,0, y)$ and $(1, r_1) \gets \Inv(\td,1, y)$. 
        If $d\in \Good_{r_0, r_1}$, send $(d\cdot (r_0 \oplus r_1)) \oplus p$ to $\adv$. Otherwise send a random bit $b'$.

        \label{step:phase-inj-inner-product}
        \item $\adv$ outputs a bit $p'$.
    \end{enumerate}
   $\adv$ wins if and only if $p'=p$.
\end{lemma}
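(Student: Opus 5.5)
The plan is a four-hybrid argument that separates ``which phase is injected'' from ``which bit is reported''. For $a,c\in\{0,1\}$, let $\Hyb_{ac}$ be the game of \Cref{lem:phase-injection} in which the challenger applies $Z^{a}$ to the first qubit of $\calR$ in step~\ref{step:phase-inj-phase}. In step~\ref{step:phase-inj-inner-product} it sends $(d\cdot(r_0\oplus r_1))\oplus c$ when $d\in\Good_{r_0,r_1}$, and a uniformly random bit otherwise. The rejection branch $I-\Pi_y$ is identical in every hybrid. Write $P_{ac}$ for the probability that $\adv$ outputs $1$ in $\Hyb_{ac}$. The real game runs $\Hyb_{00}$ when $p=0$ and $\Hyb_{11}$ when $p=1$, so $\adv$'s winning probability is $\frac12+\frac12(P_{11}-P_{00})$. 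It therefore suffices to show $|P_{00}-P_{11}|\le\negl$. We have
\[
P_{00}-P_{11}=\tfrac12\big[(P_{00}+P_{01})-(P_{10}+P_{11})\big]+\tfrac12\big[(P_{00}+P_{10})-(P_{01}+P_{11})\big].
\]
So the goal reduces to two claims: $\Hyb_{0*}\approx_c\Hyb_{1*}$, where the injected phase is fixed and the reported bit is uniformly random, and $\Hyb_{*0}\approx_c\Hyb_{*1}$, where the injected phase is uniformly random and the reported bit is fixed.

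For $\Hyb_{0*}$ versus $\Hyb_{1*}$, observe that the bit sent in step~\ref{step:phase-inj-inner-product} is uniform and independent of everything else, whether or not $d\in\Good$. The challenger therefore never needs $\td$ in these hybrids, and the claim reduces to collapsing of $\calF$ (\Cref{lem:dual-mode-tcfs-are-collapsing}). Introduce an intermediate hybrid in which, after the $\Pi_y$ check, the challenger measures $\calR$ in the computational basis and then applies $Z^{a}$. By collapsing, this is indistinguishable from $\Hyb_{a*}$ for each $a$. The reduction forwards $\pubparams$, $y$ and $\calR$, applies $Z^a$ itself to the returned register, and answers step~\ref{step:phase-inj-inner-product} with a random bit. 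Once $\calR$ has been measured, $Z^a$ contributes only a global sign, so the two intermediate hybrids for $a=0,1$ coincide. This gives $\Hyb_{0*}\approx_c\Hyb_{1*}$.

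For $\Hyb_{*0}$ versus $\Hyb_{*1}$, first note that applying $Z^a$ for uniform $a$ is the same as dephasing the first qubit $b$. On the image of $\Pi_y$, the preimage $r_b$ is determined by $b$ except with negligible probability, by the injectivity footnote to trapdoor inversion in \Cref{def:dual-tcf}. So dephasing $b$ equals measuring $(b,r_b)$ in the computational basis, and the challenger can hold a valid preimage $(b,r_b)$ of $y$ at the moment it receives $d$. Now reduce to adaptive hardcore bits (\Cref{def:adapative-hcb}). The reduction runs $\adv$, measures $\calR$ to get $(b,r_b)$, receives $d$, samples a uniform $v$, and sends $v$ to $\adv$. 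If $\adv$ guesses ``$c=0$'' the reduction outputs $(b,r_b),y,d,v$; otherwise it outputs $v\oplus1$. When $d\notin\Good$, both hybrids send a uniform bit, which matches the random $v$, so these cases contribute equally and drop out. When $d\in\Good$, $v$ is the true inner product conditioned on one event and the flipped one conditioned on the other. Hence the reduction's advantage in the hardcore-bit game equals, up to a factor of $2$, $\adv$'s advantage in distinguishing $\Hyb_{*0}$ from $\Hyb_{*1}$, and the latter must be negligible.

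I expect the main obstacle to be the bookkeeping in the second claim. One must check that the dephasing-equals-measurement identification holds up to negligible error, using uniqueness of $r_b$ together with the balance and injectivity guarantees. One must also verify that the $\Good$ and $\Pi_y$ failure branches are simulated identically in both hybrids without $\td$, so that the random-$v$ reduction really maps onto the two probabilities in \Cref{def:adapative-hcb}. Once both claims hold, the displayed identity gives $|P_{00}-P_{11}|\le\negl$, and hence the $1/2+\negl$ bound.
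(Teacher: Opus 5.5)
Your proposal is correct and follows essentially the same route as the paper. It uses the same four hybrids $\Hyb_{ac}$ and the same averaging identity, proves $\Hyb_{0*}\approx_c\Hyb_{1*}$ by collapsing (since $\td$ is unused there), and proves $\Hyb_{*0}\approx_c\Hyb_{*1}$ by noting that a uniformly random injected phase equals measuring $\calR$, followed by a random-bit reduction to adaptive hardcore bits.
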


\begin{proof}
    We define four hybrid experiments $\Hyb_{ab}$ for $a,b\in \{0,1\}$. $\Hyb_{ab}$ is the same experiment described in the lemma statement, except for two modifications. First, apply $Z^a$ in step \ref{step:phase-inj-phase}. Second, send $d\cdot (r_0\oplus r_1) \oplus b$ in \ref{step:phase-inj-inner-product}. Observe that the original game with bit $p$ is exactly $\Hyb_{pp}$. 

    Consider the distributions $\Hyb_{a*} = 1/2(\Hyb_{a0} + \Hyb_{a1})$ for $a\in \{0,1\}$.
    These are the distributions resulting from inserting a fixed phase $a$ in step \ref{step:phase-inj-phase}, but sending a uniformly random bit in step \ref{step:phase-inj-inner-product} (\emph{regardless of whether $d\in \Good_{(0,r_0), (1, r_1)}$}).
    \begin{claim}
         \begin{equation}\label{eq:phase-inj-hyb-phase}
        \left|\Pr_{\Hyb_{0*}}[p' = 1] - \Pr_{\Hyb_{1*}}[p' = 1]\right| = \negl
        \end{equation}
    \end{claim}
    \begin{proof}
        Define the sub-hybrid $\Hyb'_{a*}$ where register $\calR$ is \emph{measured} in the computational basis in step \ref{step:phase-inj-phase}. Neither $\Hyb_{a*}$ needs the trapdoor since step \ref{step:phase-inj-inner-product} always reveals a random bit regardless of whether $d\in \Good_{(0,r_0), (1, r_1)}$, so $\Hyb_{a*}$ is indistinguishable from $\Hyb'_{a*}$ by the collapsing property of dual-mode trapdoor functions (\Cref{lem:dual-mode-tcfs-are-collapsing}). Furthermore, $\Hyb'_{0*} = \Hyb'_{1*}$ since the phase is destroyed by measuring in the computational basis.
    \end{proof}
   
    Now consider the distributions $\Hyb_{*b} = 1/2(\Hyb_{0b} + \Hyb_{1b})$ for $b\in \{0,1\}$.
    These are the distributions resulting from inserting a random phase in step \ref{step:phase-inj-phase}, but sending $d\cdot (r_0 \oplus r_1) \oplus b$ for a fixed $b$ in step $\ref{step:phase-inj-inner-product}$.
    \begin{claim}
        \begin{equation}\label{eq:phase-inj-hyb-inner-prod}
            \left|\Pr_{\Hyb_{*0}}[p' = 1] - \Pr_{\Hyb_{*1}}[p' = 1]\right| = \negl
        \end{equation}
    \end{claim}
    \begin{proof}
    Conditioned on the $\{\Pi_y, I- \Pi_y\}$ measurement returning $\Pi_y$, the adversary's view after receiving register $\calR$ following step \ref{step:phase-inj-phase} is an equal mixed state over
    \begin{align*}
        &\alpha \ket{\phi_0} \otimes \ket{0, r_0}_{\calR} + \beta \ket{\phi_1} \otimes \ket{1, r_1}_{\calR}
        \\
        \text{and}\quad &\alpha \ket{\phi_0} \otimes \ket{0, r_0}_{\calR} - \beta \ket{\phi_1} \otimes \ket{1, r_1}_{\calR}
    \end{align*}
    This is precisely the mixed state 
    \[
        |\alpha|^2 \ketbra{\phi_0} \otimes \ketbra{0, r_0}_{\calR} + |\beta|^2 \ketbra{\phi_1} \otimes \ketbra{1, r_1}_{\calR} 
    \]
    Thus, it is equivalent in the adversary's view to \emph{measure} register $\calR$ in the computational basis in step \ref{step:phase-inj-phase}. 
    
    Now suppose $\Hyb_{*0}$ were distinguishable from $\Hyb_{*1}$, with $\calR$ measurement, with advantage $\epsilon$.We construct an adversary for the adaptive hardcore bit property with advantage $\epsilon$. Takes as input public parameters $\pubparams$, then use them to run $b'\gets \Hyb_{*0}$, except report a random bit $b$ instead instead of the inner product $d\cdot(r_0 \oplus r_1)$. As part of this, obtain $x$ from measuring $\calR$ and $d$ from $\calA$ in step \ref{step:phase-inj-get-d} and obtain $(x, y)$ such that $f_\pubparams(x) = y$ from the measurement of $\calR$. Output $((x,y), (d, b\oplus b'))$.
    
    Observe that if $b = d\cdot (r_0 \oplus r_1)$ and $d$ is good, then this corresponds to $H_{*0}$; if $b \neq d\cdot (r_0 \oplus r_1)$ and $d$ is good then it corresponds to $H_{*,1}$. If $d$ is not good, it is identically distributed to both.
    Since the adversary successfully guesses $b' = b\oplus (d\cdot (r_0 \oplus r_1))$ with $\epsilon$ advantage and their advantage conditioned on invalid $x$ is $0$, the reduction has advantage $\epsilon$ in guessing a valid $x$ and correct inner product $(x, d, d\cdot (r_0 \oplus r_1))$ versus guessing a valid $x$ and incorrect inner product $(x, d, 1\oplus (d\cdot (r_0 \oplus r_1)))$, violating the adaptive hardcore bit property.
    \end{proof}

    Finally, we show that $\adv$ guesses $1$ with almost the same probability in both the $p=0$ and $p=1$ case by combining \Cref{eq:phase-inj-hyb-phase} with \Cref{eq:phase-inj-hyb-inner-prod}.
\begin{align*}
        &\left|\Pr_{\Hyb_{00}}[p' = 1] - \Pr_{\Hyb_{11}}[p' = 1]\right|
        \\
        &= \left|\begin{array}{cc}
            &1/2\left(\Pr_{\Hyb_{00}}[p' = 1] + \Pr_{\Hyb_{01}}[p' = 1]\right) + 1/2\left(\Pr_{\Hyb_{00}}[p' = 1]) + \Pr_{\Hyb_{10}}[p' = 1]\right) 
            \\
            - &1/2\left(\Pr_{\Hyb_{11}}[p' = 1] + \Pr_{\Hyb_{01}}[p' = 1]\right) - 1/2\left(\Pr_{\Hyb_{11}}[p' = 1] + \Pr_{\Hyb_{10}}[p' = 1]\right) 
            \end{array}
            \right|
        \\
        &\leq \left|\Pr_{\Hyb_{0*}}[p' = 1]- \Pr_{\Hyb_{1*}}[p' = 1])\right| + \left|\Pr_{\Hyb_{*0}}[p' = 1] - \Pr_{\Hyb_{*1}}[p' = 1]\right|
        \\
       & =\negl
    \end{align*}
\end{proof}

\section{Semi-Quantum Secret Sharing with Certified Deletion}\label{sec:sqss}

In this section, we give the core construction for semi-quantum cryptography with certified deniability. At its core, the construction generates a classical ``share'' $\csh$ and a quantum ``share'' $\ket{\qsh}$ of a random bit $b$, so that the quantum share may be certifiably deleted, or both shares can be combined to recover $b$.
This already closely matches the syntax of secret sharing, so we abstract it as such.

\subsection{Definition}
\begin{definition}\label{def:SQSS}
A 2-out-of-2 semi-quantum secret sharing scheme with (publicly verifiable) certified deletion consists of the following ingredients:
\begin{itemize}
    \item 
    $\Share\langle S(m), R\rangle(1^\secpar) \rightarrow ((\verkey, \csh), \ket{\qsh})$ is an interactive protocol between a classical ${\sf PPT}$ sender $S$ and a quantum $\QPT$ receiver $R$. The sender inputs a secret message $m \in \{0,1\}$ and both parties input the security parameter $1^\secpar$.
    At the end of the protocol, the sender outputs a tuple $(\verkey, \csh)$ consisting of the deletion verification key $\verkey$ and a classical share $\csh$. The receiver outputs a quantum share $\ket{\qsh}$.

    \item 
    $\Reconstruct(\csh, \ket{\qsh}) \rightarrow m$ is a $\QPT$ algorithm which takes as input a classical share $\csh$ and a quantum share $\ket{\qsh}$, then outputs a classical message $m$.

    \item 
    $\Del(\ket{\qsh}) \rightarrow \cert$ is a $\QPT$ algorithm which takes as input a quantum share $\ket{\qsh}$ then outputs a classical certificate.\footnote{A quantum certificate is less desirable for semi-quantum schemes because the other communication is classical.} 

    \item 
    $\Ver(\verkey, \cert)\rightarrow \Accept/\Reject$ is a $\PPT$ algorithm which takes as input a verification key $\verkey$ and a certificate $\cert$, then outputs accept or reject.
\end{itemize}
It satisfies the following properties:
\begin{itemize}
    \item {\bf Correctness.}  
    There exists a negligible function $\mu$ such that for every $m\in\{0,1\}$ and every $\secpar\in\mathbb{N}$, reconstruction outputs the shared message with overwhelming probability:
    \[
\Pr\left[m \gets \Reconstruct(\csh, \ket{\qsh}): ((\verkey, \csh), \ket{\qsh}) \gets \Share\langle S(m), R\rangle(1^\secpar)\right]
        =1-\mu(\secpar)
\]

    Moreover, there exists a negligible function $\mu$ such that for every $\secpar\in\mathbb{N}$, $\Ver$ accepts the certificates generated by $\Del$:
    \[
        \Pr\left[\Accept \gets \Ver(\verkey, \cert) : \begin{array}{c}
             ((\verkey, \csh), \ket{\qsh}) \gets \Share\langle S(m), R\rangle(1^\secpar) \\
             \cert \gets \Del(\ket{\qsh}) 
        \end{array}
        \right]
        = 1-\mu(\secpar)
    \]

    \item \textbf{Privacy.} 
    For every $\QPT$ adversarial receiver $R^*$ outputting a register $\calR_{\qsh}$,
    \begin{gather*}
        \left\{(\calR_{\qsh}, \verkey) : ((\verkey, \csh), \calR_{\qsh}) \gets \Share\langle S(0), R^*\rangle (1^\secpar) \right\}
        \\ \approx_c \\
        \left\{(\calR_{\qsh}, \verkey): ((\verkey, \csh), \calR_{\qsh}) \gets \Share\langle S(1), R^*\rangle (1^\secpar) \right\}
    \end{gather*}
    where $\calR$ is a register containing the adversary's state after the sharing protocol.\footnote{One could also ask that the classical share $\csh$ separately does not leak information about the secret $m$. We omit this property here because it is implied by the certified deletion property.}

     \item \textbf{(Publicly Verifiable) Deletion Security.} For every $\QPT$ adversary $(R^*, \Del^*)$,
     \[
        \CDGame(1^\secpar, 0, (R^*, \Del^*)) \approx_c \CDGame(1^\secpar, 1, (R^*, \Del^*))
     \]
     where the game $\CDGame(1^\secpar, m, (R^*, \Del^*))$ is played as follows:
     \begin{enumerate}
        \item Share $((\verkey, \csh), \calR_{\qsh}) \gets \Share\langle S(m), R^*\rangle(1^\secpar)$.
        \item Compute $(\calR_\Del, \calR_{\cert}) \gets \Del^*(\calR_{\qsh})$. In the publicly verifiable version, $\Del^*$ is also given $\verkey$.
        \item Output
        \[
            \begin{cases}
                (\calR_\Del, \csh) &\text{if } \Ver(\verkey, \calR_{\cert}) \text{ accepts}
                \\
                \bot
                &\text{otherwise}
            \end{cases}
        \]
     \end{enumerate}
  \end{itemize}
\end{definition}

\cite{AC:HMNY21} gave a definition which combines privacy with certified deletion by outputting $(\calR_\Del, \bot)$ when the certificate is invalid. This allows the adversary to guess without the secret key (similar to the privacy case) when they fail verification. Their definition is equivalent to the split definition above. The implication from combined to split is easy to see. For the other direction, observe that the adversary's advantage in the combined game is $\leq$ its advantage conditioned on passing verification plus its advantage conditioned on failing verification, weighted by the probability of each. Its advantage conditioned on passing directly translates to advantage in the split certified deletion game since the rejection case contributes $0$ advantage in the split game; its advantage conditioned on failing directly translates to advantage in the privacy game.

\paragraph{Auxiliary Information.}
It is also useful to be able to give out some information depending on $\csh$ before deletion. For example, imagine encrypting $\csh$ and giving the resulting ciphertext $
\widetilde{\csh}$ to the adversary. Since $\csh$ is still hidden without the decryption key, we might hope that $m$ remains protected even if the decryption key for $\widetilde{\csh}$ is revealed after deletion. 
The following definition formalizes this property.

\begin{definition}[$\calZ$-Deletion Security]
    Let $\calZ = \left(\calZ_1(1^\secpar),\ \calZ_2(1^\secpar,\ \bullet)\right)$ be a two-stage distribution using shared randomness.
    A semi-quantum secret sharing scheme has \textbf{(publicly verifiable) $\calZ$-deletion security} if for every $\QPT$ adversary $(R^*, \Del^*)$,
     \[
        \CDGame_\calZ(1^\secpar, 0, (R^*, \Del^*)) \approx_c \CDGame_\calZ(1^\secpar, 1, (R^*, \Del^*))
     \]
     where the game $\CDGame_\calZ(1^\secpar, m, (R^*, \Del^*))$ is played as follows:
     \begin{enumerate}
        \item Sample $\aux_1 \gets \calZ_1(1^\secpar; r)$ using randomness $r$.
        \item Share $((\verkey, \csh), \calR_{\qsh}) \gets \Share\langle S(m), R^*(\aux_1)\rangle(1^\secpar)$.
        \item Sample $\aux_2 \gets \calZ_2(1^\secpar, \csh; r)$ using the same randomness $r$ as in $\calZ_1$. Let $\aux = (\aux_1, \aux_2)$.
        \item Compute $(\calR_\Del, \calR_{\cert}) \gets \Del^*(\calR_{\Gen}, \aux)$. In the publicly verifiable version, $\Del^*$ is also given $\verkey$.
        \item Output
        \[
            \begin{cases}
                (\calR_\Del, \csh, r) &\text{if } \Ver(\verkey, \calR_{\cert}) \text{ accepts}
                \\
                \bot
                &\text{otherwise}
            \end{cases}
        \]
     \end{enumerate}
     For simplicity of notation, we often omit the security parameter input $1^\secpar$ from $\calZ_1$ and $\calZ_2$ when it is clearly implied.
\end{definition}
The above definition separates the auxiliary information into two stages so that an adversarial receiver $R^*$ may use the first stage $\aux_1$ during $\Share$. This allows more flexibility in applying the definition. For example, consider $\aux_1 = \pk$ to be a public key for a PKE scheme, and $\aux_2 = \Enc_\pk(\csh)$ to be an encryption of $\csh$ under that public key. This matches PKE security closely because there, the adversary is allowed to see $\pk$ before even choosing the message to be encrypted, let alone participating in the encryption protocol.

\begin{remark}
    A single-bit protocol can be extended to sharing $\ell$ bits by simply running $\Share$ on each bit of the message sequentially. In this case, privacy and certified deletion now quantify over all $m_0, m_1 \in \{0,1\}^{\ell}$ and replace $0$ by $m_0$ and $1$ by $m_1$ in the security games. 
    Security of the $\ell$-bit protocol follows from a standard hybrid argument which changes each bit of the message one at a time.
\end{remark}

\subsection{Construction}
The construction uses the following tools:
\begin{itemize}
    \item $\calF$ is a dual-mode trapdoor claw-free function family with adaptive hardcore bits (\Cref{def:adapative-hcb}). 
    \item $\SPNP$ is a witness-preserving argument of knowledge for $\NP$ with $\epsilon$-indistinguishable-extraction (\Cref{def:eps-indist-ext,def:state-preserving-NP}).
\end{itemize}
Both primitives exist assuming the post-quantum hardness of LWE (\Cref{thm:dual-tcf,coro:wp-np-eps-iext}).

We give the key generation algorithm in \Cref{constr:sq-cd-keygen}, the reconstruction algorithm in \Cref{constr:sq-cd-recon}, and the deletion and verification algorithms in \Cref{constr:sq-cd-del-ver}.

\begin{construction}[Sharing $\Share\langle S(m), R\rangle(1^\secpar)$]\label{constr:sq-cd-keygen}
    The classical sender and classical receiver interact as follows.
    \begin{enumerate}
        \item \textbf{Classical Sender.}
        Sample two sets of public parameters
        \begin{align*}
            (\pubparams_{A}, \td_{A}) &\gets \Setup_{\calF}(1^\secpar, \lossy)
            \\
            (\pubparams_{B}, \td_{B}) &\gets \Setup_{\calF}(1^\secpar, \lossy)
        \end{align*}
        Send $(\pubparams_{A}, \pubparams_{B})$ to the quantum receiver.

        \item \textbf{Quantum Receiver.} 
        Prepare the state 
        \[
\frac{1}{\sqrt{2}} \sum_{b\in \{0,1\}} \sum_{x^{(A)}, x^{(B)}} \alpha_{x^{(A)}} \alpha_{x^{(B)}}\ket{b}\otimes \Ket{x^{(A)}}_{\calA}\otimes \Ket{x^{(B)}}_{\calB} \otimes \Ket{f_{\pubparams_{A}}\left(b; x^{(A)}\right), f_{\pubparams_{B}}\left(b; x^{(B)}\right)}
\]
        by evaluating $f_{\pubparams_{A}}$ and $f_{\pubparams_{B}}$ on a $\ket{+}$ state together with a superposition over randomness for $\calF$. Then, measure the output register to get $y_{A}$, $y_{B}$ and the state
        \[
            \frac{1}{\sqrt{2}} \left(\ket{0, x^{(A)}_0, x^{(B)}_0}  + \ket{1, x^{(A)}_1, x^{(B)}_1}\right)
        \]
        where $f_{\pubparams_{A}}(0; x^{(A)}_0) = f_{\pubparams_{A}}(1; x^{(A)}_1) = y_{A}$ and $f_{\pubparams_{B}}(0; x^{(B)}_0) = f_{\pubparams_{B}}(1; x^{(B)}_1) = y_{B}$.

        Send $(y_A, y_B)$ to the classical party.

        \item \textbf{Both.} Define the language
        \[
            \lang_{\pubparams_A, \pubparams_B} 
            = 
            \left\{(y_A, y_B): 
                \exists (b, x^{(A)}, x^{(B)}) \text{ such that } \begin{array}{cc}
                     &f_{\pubparams_{A}}\left(b; x^{(A)}\right) = y_{A} \\
                    \land &f_{\pubparams_{B}}\left(b; x^{(B)}\right) = y_{B} 
                \end{array}
            \right\}
        \]

        Perform a witness-preserving NP argument of knowledge for the language $\lang_{\pubparams_A, \pubparams_B} $ and statement $(y_A, y_B)$, where the classical sender acts as the verifier and the quantum receiver acts as the prover. 
        If the verifier rejects, abort the protocol.
        \label{step:SQEnc-wp-np}

        \item \textbf{Quantum Receiver.} 
        Measure register $\calB$ in the Hadamard basis to obtain a string $d_{B}$ and the leftover state
        \[
            \ket{\qsh} 
            \coloneqq
            (-1)^{d_{B}\cdot x_{0}^{(B)}} \left(\ket{0, x_{0}^{(A)}} + (-1)^{d_{B}\cdot s_{B}} \ket{1, x_{1}^{(A)}} \right)
        \]
        where $s_B = x_{0}^{(B)}\oplus x_1^{(B)}$.
        Send $d_B$ to the receiver and output $\ket{\qsh}$.

        \item \textbf{Classical Sender.}
        Verify that $d_B \in \Good_{x_0^{(B)}, x_1^{(B)}}$. If not, set $p_B$ to be a random bit, and if so set $p_B$ as follows.
        Compute $s_{B} = x_{0}^{(B)}\oplus x_{1}^{(B)}$ by using $\td_{B}$ to invert $x_{b}^{(B)} = \Inv(\td_{B},b, y_B)$ for both $b=0$ and $b=1$. 
        Compute
        \[
            p_B = d_{B} \cdot s_{B}.
        \]
        Then output the classical share and verification key
        \[
            \csh = (p_B\oplus m, \td_A)
            \qquad \text{and}\qquad
            \verkey = (\pubparams_A, y_A)
        \]
    \end{enumerate}
\end{construction}

\begin{construction}[Reconstruction $\Reconstruct(\csh, \ket{\qsh})$]\label{constr:sq-cd-recon}
    To reconstruct the secret, do the following.
    \begin{enumerate}
        \item Parse $\csh = (\tilde{m}, \td_A)$ and coherently evaluate $f_{\pubparams_A}\left(b, x_{b}^{(A)}\right)$ on $\ket{\qsh}$ to obtain $y_A$.\footnote{Alternatively, $y_A$ can be included in either share.}
        \item Compute $x_0^{(A)}\gets \Inv(y_A, \td_A, 0)$ and $x_1^{(A)} \gets \Inv(y_A, \td_A, 1)$ . Compute $s_A = \left(x_0^{(A)} \oplus x_1^{(A)}\right)$.
    \item Measure $\ket{\qsh}$ in the Hadamard basis to obtain a string $d_A$.
    \item Output $(d_A \cdot (1\concat s_A)) \oplus \tilde{m}$.
    \end{enumerate}
\end{construction}

\begin{construction}[Deletion and Verification]\label{constr:sq-cd-del-ver}
    Deletion and verification are as follows.
    \begin{itemize}
        \item \underline{$\Del(\ket{\psi})$.} To delete $\ket{\psi}$, measure it in the computational basis to obtain $(b, x_b^A)$. Output $\cert = (b, x_b^A)$.
        \item \underline{$\Ver(\verkey, \cert)$.} Parse $\verkey = (\pubparams_A, y_A)$ and $\cert = (b, x_b^A)$. Accept if $f_{\pubparams_A}(b; x_b^A) = y_A$.
    \end{itemize}
\end{construction}

\noindent We show that so long as $\calZ$ semantically hides an adaptively-chosen $\csh$, the above scheme satisfies $\calZ$-deletion. That is, 
the following experiment should be computationally indistinguishable between $b=0$ and $b= 1$ for every QPT adversary $A$.
\begin{enumerate}
    \item Sample $\aux_1 \gets \calZ_{1}(1^\secpar; r)$ using fresh randomness $r$.
    \item The adversary $A(\aux_1)$ chooses a string $x \in \{0,1\}^n$.
    \item Using the same randomness $r$ as $\calZ$, sample
    \[
        \aux_2 \gets 
        \begin{cases}
            \calZ_2(1^\secpar, 0^n; r) &\text{if } b=0
            \\
            \calZ_2(1^\secpar, x; r) &\text{if } b=1
        \end{cases}
    \]
    \item Output $(\aux_1, \aux_2)$.
\end{enumerate}
We refer to this notion as \textbf{adaptive semantic hiding}.

\begin{theorem}\label{thm:sq-cd}
    The scheme given in \Cref{constr:sq-cd-keygen,constr:sq-cd-recon,constr:sq-cd-del-ver} is a semi-quantum encryption scheme with publicly verifiable $\calZ$-deletion for every 
    $\calZ$ satisfying adaptive semantic hiding.
\end{theorem}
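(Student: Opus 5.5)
Correctness is routine. Honest sharing yields $\ket{0,x_0^{(A)}}+(-1)^{p_B}\ket{1,x_1^{(A)}}$, up to balance error, and the witness-preserving argument leaves this state negligibly disturbed (\Cref{def:state-preserving-NP}). A Hadamard measurement then gives $d_A$ with $d_A\cdot(1\|s_A)=p_B$, so $\Reconstruct$ outputs $m$. Deletion correctness is immediate from the claw structure. Privacy follows from deletion security together with adaptive semantic hiding, via the split argument after \Cref{def:SQSS}. The real work is $\calZ$-deletion security, and I would follow the five-step template of \Cref{sec:overview:analysis}.

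The first step fixes an arbitrary polynomial $\secpar^c$ and shows the advantage is below $O(\secpar^{-c})$. I would do this through a chain of hybrids.

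\begin{enumerate}
\item Replace the $\SPNP$ execution by $\Sim$ followed by $\Ext_{\secpar,\epsilon^2}$, an identity on the exposed register, and $\Ext^\dagger_{\secpar,\epsilon^2}$, with $\epsilon=\secpar^{-c}$. By \Cref{lem:iext-close} this costs at most $\epsilon+\epsilon^2$.
\item Replace $\csh=(p_B\oplus m,\td_A)$ by $(\tilde m,\td_A)$ for a uniform $\tilde m$, and abort at the end unless $\tilde m=p_B\oplus m$. This halves the advantage.
\item Between $\Ext$ and $\Ext^\dagger$, apply $Z^{p'}$ to the bit $b$ of the extracted witness $(b,x_b^{(A)},x_b^{(B)})$, and use $p_B\oplus p'$ in the abort check. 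The hybrid is indistinguishable by the Phase Injection Lemma (\Cref{lem:phase-injection}) applied to instance $B$. The reduction embeds $\pubparams_B$, generates $\pubparams_A,\td_A$ itself, runs $\Sim$ and $\Ext$, and passes the extracted register $\calR$ to the phase-injection challenger. It then runs $\Ext^\dagger$, forwards $d_B$, and uses the returned bit as $p_B\oplus p'$. This requires no use of $\td_B$, and the $\aux$ and $\td_A$ pieces are all simulatable.
\item Purify $p'$ by preparing $\calP$ in $\ket{+}$, applying a controlled-$Z$ from $\calP$ to $b$, and measuring $\calP$ only at the abort check.
\end{enumerate}

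The second step is the key claim: conditioned on $\Ver$ accepting $\cert=(b,x_b^{(A)})$, register $\calP$ is $H\ket{b}$ up to negligible error. Given this, measuring $\calP$ in the computational basis yields a uniform $p'$ independent of everything else, so the abort event is independent of $m$ and the advantage is $0$.

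To prove the claim, consider the predicate ``cert valid and $\calP$ is not $H\ket{b}$'', checked by $H$-measuring $\calP$ before $\csh$ is released. Since it is decided before $\td_A$ and $\tilde m$ are used, adaptive semantic hiding lets me replace $\aux_2$ by $\calZ_2(0^n)$; here $\tilde m$ is independent of $p_B$ at that point. Without $\td_A$, collapsing of $\calF$ (\Cref{lem:dual-mode-tcfs-are-collapsing}) for instance $A$ lets me measure the extracted $(b',x_{b'}^{(A)})$ between $\Ext$ and $\Ext^\dagger$. Claw-freeness of lossy $A$ then forces $b=b'$, and the symmetry of CZ makes $\calP=H\ket{b'}$.

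Collecting the losses gives advantage at most $O(\epsilon)+\negl$ for every $\epsilon=1/\poly$, so the advantage is negligible.

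I expect the main obstacle to be the hybrid that invokes phase injection. Instance $B$ must be isolated inside the extract/return sandwich even though $\Ext$ acts on registers entangled with both witnesses. Applying $\Pi_y$ to only the $B$ part while the $A$ part rides along has to match the lemma's interface; I would treat the $A$ and ancilla parts as the adversary's side register $\ket{\phi_b}$. A further difficulty is that this step needs the abort check to use only the revealed bit, without $\td_B$.
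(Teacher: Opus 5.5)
Your deletion-security argument is correct and is essentially the paper's proof. It uses the same extract/return sandwich $\Share'_\epsilon$, phase injection on instance $B$ with $\td_A$ simulated by the reduction, purification of $p'$ into $\calP$, and the random-$\tilde m$-with-abort trick. It then proves that an accepted certificate forces $\calP=H\ket{b}$: you switch $\aux_2$ to $\calZ_2(0^n)$ by adaptive semantic hiding, then use CZ symmetry and claw-freeness of $A$.

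The differences from the paper are minor:
\begin{itemize}
\item You randomize $\tilde m$ before injecting the phase. The paper injects first, with $\csh=(p_B\oplus m\oplus p',\td_A)$, and randomizes afterwards.
\item You measure the extracted $A$-witness using collapsing of $A$. The paper instead notes this measurement is free in the truncated experiment, because $\calP$ already holds $H\ket{b}$ and is only measured in the Hadamard basis there.
\item A precision point: your sandwich should contain the witness-validity measurement with abort, as in the paper's $\Share'_\epsilon$, not a bare identity. The phase-injection challenger performs a $\Pi_{y_B}$ measurement, so only then does the reduction's $p=0$ world match your hybrid. That measured experiment is also exactly what the $\epsilon+\epsilon^2$ bound of \Cref{lem:iext-close} covers.
\end{itemize}

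The step that does not go through is privacy. Deletion security does not imply privacy. The split argument after \Cref{def:SQSS} shows that privacy plus deletion security implies the combined notion, not the converse. Moreover, the deletion game outputs $\bot$ whenever the certificate fails, so it says nothing about a receiver that never produces a valid certificate. Adaptive semantic hiding of $\calZ$ plays no role here either.

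For this construction, privacy is immediate. Nothing the sender sends during $\Share$ depends on $m$: the masked bit $p_B\oplus m$ stays inside $\csh$, and the verifier messages of the argument are independent of $m$. Also $\verkey=(\pubparams_A,y_A)$ does not depend on $m$. So the two privacy distributions are identical. The paper instead proves privacy by a reduction to the adaptive hardcore bit of $B$, using extraction and collapsing of $A$, which is more than the definition as stated requires.
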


\subsection{Analysis}

\begin{proof}
    We first show correctness. 
    Deletion correctness is obvious from the construction of $\ket{\qsh}$ as a superposition over preimages of $y_A$.
    For decryption correctness, observe that $\ket{\qsh}$ is negligibly far from the claimed state $\frac{1}{2}\left(\ket{0,x_0} + (-1)^{p_B}\ket{1,x_1}\right)$,{} except with negligible probability, by the balance property of $\calF$ (\Cref{def:tcf-balance}). Moreover the witness-preserving argument affects this state negligibly.
    The claimed state can be 
    interpreted as a coset state for the subspace $S = \{0, 1\concat s_A\}$ with computational basis offset $x_{0}^{(A)}$ and Hadamard basis offset $\Delta$ such that $\Delta \cdot (1\concat s_A) = p_B$. Therefore it can be written in the Hadamard basis as
    \[
        H \ket{\qsh}
        =
        \sum_{d' \in S^\perp} (-1)^{\left(0\concat x_{0}^{(A)}\right) \cdot d} \ket{d \oplus \Delta}
    \]
    Measuring this gives an element $d_A = d \oplus \Delta$ where $d \cdot (1\concat s_A) = 0$ and $\Delta \cdot (1\concat s_A) = p_B$. 
    Therefore $d_A \cdot (1\concat s_A) = p_B$. Since $\widetilde{m} = p_B \oplus m$, XORing it with $d_A \cdot (1\concat s_A)$ recovers the message.

    We prove security next. For each of the security proofs, we will consider a modified sharing protocol $\Share'_{\epsilon}$ which uses the indistinguishable simulator-extractor of $\SPNP$. Explicitly, let ($\Sim$, $\Ext_\epsilon$) be the QPT simulator and size $\poly[1/\epsilon]$ unitary guaranteed by $\epsilon$-indistinguishable-extraction. Then $\Share'_{\epsilon}$ does the following:
    \begin{enumerate}
        \item Run $\Share$ until the witness-preserving argument in step \cref{step:SQEnc-wp-np}.
        \item Run $\Sim$.
        \item Set $\gamma = \epsilon^2$. Run $\Ext_\gamma$ to extract a witness register $\calW$, measure whether the witness register contains a valid witness, then run $\Ext_\gamma^\dagger$. If the extraction fails, abort.
        \item Complete $\Share$.
    \end{enumerate}
    The modified protocol $\Share'_\epsilon$ is $\sqrt{\gamma} = \epsilon$-indistinguishable from the output of $\Share$ by \Cref{lem:iext-close}.

    \paragraph{Privacy.}
    Suppose some receiver violated privacy with probability $\epsilon$.
    We reduce to the adaptive hardcore bit property of $\pubparams_B$ using $\epsilon$-indistinguishable extraction and the collapsing property of $\pubparams_A$.
    Consider a hybrid experiment where $\Share$ is replaced by $\Share'_{\epsilon/2}$. This is $\epsilon/2$-indistinguishable and thus reduces the adversary's advantage by at most $\epsilon/2$. Next, consider the hybrid experiment where the extracted witness $(b, x_b^{(A)}, x_b^{(B)})$ is measured in the computational basis to obtain $(b, x_b^{(B)})$. This is indistinguishable by the collapsing property of $\pubparams_A$ because $\td_A$ is never revealed. Finally, the adversary's advantage in guessing $m$, and therefore the adaptive hardcore bit $d_B \cdot s_B$, is at least $\epsilon/2$. Since this hybrid also extracts $(b, x_b^{(B)})$ before the guess, the adversary therefore contradicts the adaptive hardcore bit property of $\pubparams_B$.

    \paragraph{Deletion Security.}
    We show that for every $\epsilon = \poly[1/\secpar]$, no QPT adversary can have $\epsilon$ advantage in distinguishing the message during the deletion game.
    Fix $\epsilon$ and consider the following sequence of hybrid experiments:
    \begin{itemize}
        \item $\Hyb_0(m) = \CDGame_\calZ(1^\secpar, m, (R^*, \Del^*))$ is the certified $\calZ$-deletion game which shares $m\in \{0,1\}$.
        
        \item $\Hyb_1(m)$ changes $\Share$ to $\Share'_{\epsilon/4}$, which runs the indistinguishable-extractor for $\SPNP$. This exposes the extracted witness in between running $\Ext_{\gamma}$ and $\Ext_{\gamma}^\dagger$ and aborts if extraction fails.
        
        \item $\Hyb_2(m)$ changes $\Hyb_1(m)$ by inserting a random phase $p'\in \{0,1\}$ when the witness is extracted and setting the masked message to $p_B\oplus m\oplus p'$ instead of $p_B \oplus m$.
        Explicitly, to insert the phase $p'$, after extracting a witness register supported on $\ket{b, x^{(A)}_b, x^{(B)}_b)}$, apply $Z^{p'}$ to the first qubit $b$.
        
        \item $\Hyb_3(m)$ purifies hybrid $\Hyb_2$ by adding a register $\calP$ containing the inserted phase $p'$. Explicitly, prepare $\ket{+}$ in $\calP$, then apply a controlled $Z$ from $\calP$ to the extracted witness register. To compute the masked message $p_B \oplus m\oplus p'$ for $\calZ$, measure $\calP$ in the computational basis to get $p'$.
        
        \item $\Hyb_4(m)$ changes $\Hyb_3(m)$ by substituting a random bit $\widetilde{m}$ for $p_B \oplus m \oplus p'$ in $\csh$. At the end of the game, it measures register $\calP$ in the computational basis to get $p'$ and checks whether $\tilde{m} = p_B\oplus m \oplus p'$. If not, the experiment outputs $\bot$.
        
        \item $\Hyb_5(m)$ performs an additional measurement on the phase register $\calP$ after checking the certificate $\cert$. It parses $\cert = \left(b, x_b^{(A)}\right)$, then measures $\calP$ in the Hadamard basis. If the result is \emph{not} $H\ketbra{b}H$, the experiment immediately outputs $\bot$.
        Otherwise, it continues as before: measure $\calP$ in the \emph{computational} basis to obtain $p'$ and check whether $\tilde{m} = p_B \oplus m \oplus p'$.
        
        \item $\Hyb_6$ is a restatement of $\Hyb_5$ without using $m$. Run $\Share'$ and insert a random phase $p'$. Send a random ciphertext $\tilde{m}$, then receive and check $\cert$. If $\cert$ is invalid or $\calP$ does not match it, output $\bot$. Otherwise, output $\bot$ with probability $1/2$ anyway.
    \end{itemize}

    Define $\Advtg(\Hyb_i)$ to be the maximum advantage of any $\QPT$ distinguisher $D$ in $\Hyb_i$:
    \begin{equation}
\Advtg(\Hyb_i) \coloneqq
        \max_{\QPT\ D} \left|\Pr\big[D(\Hyb_i(1^\secpar, 0, (R^*, \Del^*))\big] - \Pr\big[D(\Hyb_i(1^\secpar, 1, (R^*, \Del^*))\big]\right|.
\end{equation}

    $\epsilon$-indistinguishable-extraction and \Cref{lem:iext-close} implies that the $\Hyb_{0}(b)$ and $\Hyb_{1}(b)$ are indistinguishable up to $\epsilon/4 + (\epsilon/4)^2< \epsilon/3$ probability. This holds for both $b \in \{0,1\}$ and all $\epsilon \in (0,1]$.
    Thus, the adversary's advantage in distinguishing $\Hyb_0(0)$ from $\Hyb_0(1)$ changes by at most $2\epsilon/3$ in $\Hyb_1$:
    \begin{equation}\label{eq:cd-proof-hyb-0-1}
        |\Advtg(\Hyb_1) - \Advtg(\Hyb_0)| \leq 2\epsilon/3.
    \end{equation}
    The phase injection lemma (\Cref{lem:phase-injection}) applied to $\pubparams_B$ implies that the advantage also changes negligibly going to $\Hyb_2$:\footnote{The phase injection lemma cannot be applied to $\pubparams_A$ because $\td_A$ is later revealed. $\td_B$ is never revealed, only $d_B \cdot s_B$, which is precisely the scenario which \Cref{lem:phase-injection} deals with.}
    \begin{equation}\label{eq:cd-proof-hyb-1-2}
        |\Advtg(\Hyb_2) - \Advtg(\Hyb_1)| = \negl.
    \end{equation}
    $\Hyb_3$ is equivalent to $\Hyb_2$, so
    \begin{equation}\label{eq:cd-proof-hyb-2-3}
        \Advtg(\Hyb_3) = \Advtg(\Hyb_2).
    \end{equation}
    Since $\Hyb_4(m)$ is identically distributed to the distribution which outputs $\Hyb_3(m)$ with probability $1/2$ and otherwise outputs $\bot$,
    \begin{equation}\label{eq:cd-proof-hyb-3-4}
        \Advtg(\Hyb_4) \geq \frac{1}{2}\Advtg(\Hyb_3).
    \end{equation}
    We show that the adversary's distinguishing advantage changes negligibly from $\Hyb_4$ to $\Hyb_5$ in \Cref{claim:cd-proof-hyb4-hyb5}:
    \begin{equation}\label{eq:cd-proof-hyb-4-5}
        |\Advtg(\Hyb_5) - \Advtg(\Hyb_4)| = \negl.
    \end{equation}
    In $\Hyb_5$, $p'$ is the result of measuring a Hadamard basis state in the computational basis, and so is uniformly random. Therefore the $\widetilde{m} = p_B\oplus m \oplus p'$ check causes the experiment to abort with probability $1/2$, independently of $m$. This is precisely $\Hyb_6$, so
    \begin{equation}\label{eq:cd-proof-hyb-5-6}
        \Advtg(\Hyb_6) = \Advtg(\Hyb_5).
    \end{equation}
    Finally, $\Hyb_6$ is independent of $m$, so
    \begin{equation}\label{eq:cd-proof-hyb-6}
        \Advtg(\Hyb_6) = 0.
    \end{equation}

    Putting all these together,
    \[
        \Advtg(\Hyb_0) \leq (2\epsilon/3 + \negl) < \epsilon
    \]

    It remains to be shown that the adversary's distinguishing advantage changes negligibly from $\Hyb_4$ to $\Hyb_5$.

    \begin{claim}\label{claim:cd-proof-hyb4-hyb5}
        For all $\QPT$ adversaries $(R^*, \Del^*)$,
        \[
             |\Advtg(\Hyb_5) - \Advtg(\Hyb_4)| = \negl
        \]
    \end{claim}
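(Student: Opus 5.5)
Since $\Hyb_5$ differs from $\Hyb_4$ only by the extra Hadamard-basis check on $\calP$ (made after a valid certificate), it suffices to show that in $\Hyb_4$ the probability that $\Ver$ accepts $\cert = (b, x_b^{(A)})$ \emph{and} the Hadamard measurement of $\calP$ yields $H\ket{1-b}$ is negligible. If so, the gentle-measurement / almost-as-good-as-new lemma bounds the trace distance between the two output distributions (for each $m$) by a negligible quantity, giving the claim. The key feature is that this ``mismatch'' predicate is decided entirely \emph{before} $\csh$ and $r$ are handed out, so the argument can be carried out in a world where $\td_A$ is never revealed.

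The steps are as follows.

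First, define the mismatch event $E$ in $\Hyb_4$: a valid certificate $(b, x_b^{(A)})$ is accepted, yet $\calP$, measured in the Hadamard basis, is $H\ket{1-b}$. In $\Hyb_4$ the only dependence on $\csh = (\tilde m, \td_A)$ before $E$ is decided is through $\aux_2 \gets \calZ_2(\csh; r)$. Also, $\tilde m$ is a uniform bit independent of everything else at that point. By adaptive semantic hiding of $\calZ$, I replace $\aux_2$ by $\calZ_2(0^n; r)$. The reduction must run the whole experiment up to the check of $E$, including $\Ext_\gamma$ and its inverse (polynomial size for fixed $\epsilon$), and then output the indicator of $E$. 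The probability of $E$ changes only negligibly. Now $\td_A$ is used nowhere; $\td_B$ is used only to compute $p_B$, which enters only the abort check after $E$, so it can be dropped too.

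Second, with $\td_A$ gone, I apply the collapsing property of $\pubparams_A$ (\Cref{lem:dual-mode-tcfs-are-collapsing}). Between $\Ext_\gamma$ and $\Ext_\gamma^\dagger$, after the controlled $Z$ from $\calP$, I measure the $A$-part $(b', x_{b'}^{(A)})$ of the extracted witness in the computational basis. This is indistinguishable for any efficient predicate, in particular $E$. I also measure $b'$; since $f_{\pubparams_A}(b'; \cdot)$ determines $b'$ given the $A$-preimage, measuring $x^{(A)}$ effectively measures $b'$, so collapsing covers it. The reduction plays the collapsing challenger's role on the extracted register, using $y_A$ as the declared image.

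Third, in this last hybrid I bound $\Pr[E]$. Because $\CZ$ is symmetric and $\calP$ started in $\ket{+}$, once $b'$ is measured the register $\calP$ holds exactly $H\ket{b'}$ and is untouched afterward. Hence $E$ forces $b \neq b'$. Then $(b', x_{b'}^{(A)})$, recorded during extraction, and the certificate $(b, x_b^{(A)})$ form a claw for $y_A$. This contradicts the claw-freeness of lossy mode in \Cref{lem:dual-mode-tcfs-are-collapsing}, since the whole experiment is efficient without $\td_A$. So $\Pr[E]$ is negligible throughout.

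I expect the main obstacle to be the first step: making sure $E$ is efficiently checkable and is decided before $\csh$ is released, so that the hiding reduction is valid. One subtlety is that the Hadamard measurement of $\calP$ must come before revealing $(\csh, r)$, or at least commute with it. Since $\calP$ is a challenger-held register and the revealing is classical, I would reorder these operations. A second subtlety is the extraction-failure abort, which I treat identically across hybrids.
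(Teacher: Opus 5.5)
Your proposal is correct and follows the same structure as the paper's proof. Both use gentle measurement to reduce the claim to bounding the mismatch probability before $\csh$ and $r$ are revealed, swap $\calZ_2(\csh;r)$ for $\calZ_2(0;r)$ by adaptive semantic hiding, measure the extracted witness, and finish with claw-freeness of $\pubparams_A$.

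The only difference is how you justify measuring the extracted $A$-witness: you invoke the collapsing property of $\pubparams_A$, as in the paper's overview, whereas the formal proof shows this step is exactly free. By the symmetry of $\CZ$, the later Hadamard measurement of $\calP$ commutes back to just after the $\CZ$, where it already measures the branch bit $b$, and $x_b^{(A)}$ is then fixed by $b$, $y_A$, and $\pubparams_A$.
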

    \begin{proof}
        By the gentle measurement lemma, it suffices to show that the measurement on $\calP$ outputs $H\ketbra{b}H$ with overwhelming probability.

        Consider the following truncated hybrid experiments:
        \begin{itemize}
            \item $\Hyb_5'$ which runs $\Hyb_5$ until the measurement on $\calP$. The probability of the measurement outputting $H\ketbra{b}H$ is identical to in $\Hyb_5$. Note that $\csh$ and $r$ are not revealed in this hybrid.
            \item $\Hyb_5''$ is the same as $\Hyb_5'$, except it replaces $\aux_2\gets \calZ_2(\csh; r)$ by $\aux_2 \gets \calZ_{2}(0; r)$.
            \item $\Hyb_5'''$ is the same as $\Hyb_5''$, except when the witness is extracted, it applies $\CZ$ in the other direction from the witness's $b$ register to register $\calP$, resulting in $\calP$ containing $H\ket{b}$. Moreover, it measures the witness to obtain a value $(b, x_b^{(A)}, x_b^{(B)})$.
        \end{itemize}
        In $\Hyb_5'''$, the probability that $\cert$ is accepting and $\cert \neq x_b^{(A)}$ is negligible. Otherwise, we have contradicted the claw-free property of $\calF$ by finding two preimages for $y_A$ under $\pubparams_A$ with noticeable probability. 
        Note that this relies on the fact that $\csh$, and therefore $\td_A$, is not revealed in $\Hyb_5'''$ (either directly or indirectly via $\calZ$ and $r$).
        Moreover, whenever $\cert = x_b^{(A)}$, the copied witness is $x_b^{(A)}$, so $\calP$ contains $H\ket{b}$ as we just mentioned.

        Next, we show that $\Hyb_5''' = \Hyb_5'' \approx_c \Hyb_5'$. Since it is efficient to check whether $\calP$ contains $H\ket{b}$ such that $\cert = x_b^{(A)}$, this implies that $\Hyb_5'$, and therefore $\Hyb_5$ satisfies this predicate with overwhelming probability.

        $\Hyb_5'''$ is identically distributed to $\Hyb_5''$ because controlled $Z$ is direction invariant: $\CZ_{1\rightarrow 2} = \CZ_{2\rightarrow 1}$. Since register $\calP$ initially contains $\ket{+}$, after inserting the phase it contains $H\ket{b}$ controlled on the witness register containing $b$. The only other operation on $\calP$ is to subsequently measure it in the Hadamard basis, which can be commuted to this step and results in a single $H\ket{b}$.
        Moreover, $x_b^{(A)}$ is uniquely determined by $b$, $y_A$, and $\pubparams_A$, so the value $x_b^{(A)}$ can also be copied into the challenger's registers from the witness register at the same time without affecting the witness register any further. The registers containing the copies are disjoint from the rest of the experiment, so measuring them in the computational basis does not affect the distribution of the rest of the experiment.

        Finally, $\Hyb_5''$ is computationally indistinguishable from $\Hyb_5'$ by the adaptive semantic hiding of $\calZ$. This relies on fact that the randomness $r$ for $\calZ$ is not revealed in $\Hyb_5'$.
    \end{proof}
\end{proof}

\section{Applications}
\label[appendix]{sec:applications}

In this section, we show how to achieve several other primitives with semi-quantum certified deletion by building on the secret sharing scheme from \Cref{sec:sqss}.
Each of these primitives uses a similar approach:
\begin{enumerate}
    \item Start with a semantically-hiding primitive without certified deletion (for example, public-key encryption).
    \item Share the message  $m$ into classical share $\csh$ and quantum share $\ket{\qsh}$.
    \item Encode $\csh$ using the starting primitive (for example, encrypt it under the public key).
\end{enumerate}
This general framework applies to commitments, public-key encryption, and advanced encryption schemes such as attribute-based, identity-based, and fully-homomorphic encryption.

\subsection{Public-Key Encryption}\label{sec:sqenc}

In this section, we show how to generically add semi-quantum, publicly-verifiable certified deletion to any public-key encryption scheme. The compiler can also be generically applied to advanced encryption schemes such as Attribute-Based Encryption or Identity-Based Encryption.

\begin{definition}\label{def:sq-encryption-cd}
A semi-quantum public-key encryption scheme with (publicly verifiable) certified deletion consists of the following algorithms and protocol:
\begin{itemize}
    \item \textbf{$\KeyGen(1^\secpar)\rightarrow(\pk,\sk)$} is a classical $\PPT$ key-generation algorithm that takes as input the security parameter $1^\secpar$, then outputs a classical public key $\pk$ and a classical secret key $\sk$.

    \item \textbf{$\Enc\langle S_{\mathsf C}(\pk,m),R_{\mathsf Q}\rangle(1^\secpar)\rightarrow(\verkey,\ket{\ct})$} is an interactive encryption protocol between a classical $\PPT$ sender $S_{\mathsf C}$ and a quantum $\QPT$ receiver $R_{\mathsf Q}$, using only classical communication.
    The sender inputs the public key $\pk$ and a message $m$.
    At the end of the protocol, the sender outputs a classical deletion verification key $\verkey$, and the receiver outputs a quantum ciphertext $\ket{\ct}$.

    \item \textbf{$\Dec(\sk,\ket{\ct})\rightarrow m$} is a $\QPT$ decryption algorithm that takes as input the secret key $\sk$ and a quantum ciphertext $\ket{\ct}$, then outputs a classical message $m$.  

    \item \textbf{$\Del(\ket{\ct})\rightarrow\cert$} is a $\QPT$ deletion algorithm that takes as input a quantum ciphertext $\ket{\ct}$, then outputs a classical deletion certificate $\cert$.

    \item \textbf{$\Ver(\verkey,\cert)\rightarrow\Accept/\Reject$} is a classical $\PPT$ verification algorithm that takes as input the verification key $\verkey$ and a certificate $\cert$, then outputs accept or reject.
\end{itemize}
It satisfies the following properties:
\begin{itemize}
    \item \textbf{Correctness.}
    There exists a negligible function $\mu$ such that for every $m\in\{0,1\}$ and every $\secpar\in\mathbb{N}$,
    \[
        \Pr\left[m\gets\Dec(\sk,\ket{\ct}):\begin{array}{c}
            (\pk,\sk)\gets\KeyGen(1^\secpar)\\
            (\verkey,\ket{\ct})\gets\Enc\langle S_{\mathsf C}(\pk,m),R_{\mathsf Q}\rangle(1^\secpar)
        \end{array}\right]
        =1-\mu(\secpar).
    \]
    Moreover, honestly generated deletion certificates are accepted with overwhelming probability:
    \[
        \Pr\left[\Accept\gets\Ver(\verkey,\cert):\begin{array}{c}
            (\pk,\sk)\gets\KeyGen(1^\secpar)\\
            (\verkey,\ket{\ct})\gets\Enc\langle S_{\mathsf C}(\pk,m),R_{\mathsf Q}\rangle(1^\secpar)\\
            \cert\gets\Del(\ket{\ct})
        \end{array}\right]
        =1-\mu(\secpar).
    \]

    \item \textbf{Ciphertext Privacy.}
    For every $\QPT$ message-chooser $A$ and every $\QPT$ adversarial receiver $R^*$ outputting a register $\calR_{\ct}$,  
    \[
        \INDCPA(0, (A, R^*)) \approx_c \INDCPA(1, (A, R^*))
    \]
    where $\INDCPA(b, (A, R^*))$ is the following game.
    \begin{enumerate}
        \item Sample $(\pk,\sk)\gets\KeyGen(1^\secpar)$.
        \item The adversary receives $\pk$ and chooses messages $(m_0, m_1) \gets A(\pk)$.
        \item Encrypt message $m_b$ as $(\verkey,\calR_{\ct})\gets\Enc\execution{S_{\mathsf C}(\pk,m_b),\ R^*(\pk)}(1^\secpar)$.
        \item Output $(\pk, \verkey, \calR_{\ct})$.
    \end{enumerate}

    \item \textbf{Publicly Verifiable Deletion Security.}{}
    For every $\QPT$ adversary $(A, R^*,\Del^*)$,
    \[
        \CDGame_{\mathrm{enc}}(1^\secpar,0,(R^*,\Del^*))
        \approx_c
        \CDGame_{\mathrm{enc}}(1^\secpar,1,(R^*,\Del^*)),
    \]
    where the game $\CDGame_{\mathrm{enc}}(1^\secpar, b,(R^*,\Del^*))$ is played as follows:
    \begin{enumerate}
        \item Sample $(\pk,\sk)\gets\KeyGen(1^\secpar)$.
        \item $A(\pk)$ gets $\pk$ and chooses two messages $m_0$ and $m_1$. It also passes a register $\calR_{\adv}$ to $R^*$.
        \item Run $(\verkey,\calR_{\ct})\gets\Enc\langle S_{\mathsf C}(\pk,m_b),R^*(\calR_{\adv})\rangle(1^\secpar)$.
        \item Compute $(\calR_{\Del},\calR_{\cert})\gets\Del^*(\calR_{\ct},\pk,\verkey)$.\footnote{If $\Del^*$ does not receive $\verkey$ during the game, we omit ``publicly verifiable'' and instead call the property ``Certified Deletion Security''.}
        \item If $\Ver(\verkey,\calR_{\cert})$ accepts, output $(\calR_{\Del},\sk)$. Otherwise, output $\bot$.
    \end{enumerate}
\end{itemize}
\end{definition}

We remark that since the encryption process is interactive, the folklore transformation from bitwise selective security to multi-bit adaptive security requires some additional care. This is an issue of timing. The folklore reduction switches the bits one at a time by receiving $\Enc(b)$, then giving $\pk$ to the adversary and inserting the true challenge into the adaptive challenge at index $i$. However, here encryption requires the cooperation of the adversary and cannot be done at all before receiving $\pk$. Thus, we directly consider and prove multi-bit adaptive security.

The construction uses the following two primitives:
\begin{itemize}
    \item $\SQSS=(\SQSS.\Share,\SQSS.\Reconstruct,\SQSS.\Del,\SQSS.\Ver)$ is a 2-out-of-2 semi-quantum secret sharing scheme with publicly verifiable $\calZ$-deletion security for every 2-stage distribution $\calZ$ which has adaptive semantical hiding.
    \item $\PKE=(\PKE.\KeyGen,\PKE.\Enc,\PKE.\Dec)$ is a post-quantum semantically secure public-key encryption scheme. We abuse notation by considering $\PKE$'s message space to be variable since this is implied by indistinguishable chosen-plaintext attack security for PKE.
\end{itemize}

\begin{construction}[Semi-Quantum Public-Key Encryption with Certified Deletion]\label{constr:sq-pke-cd}
The message space $\{0,1\}^n$ is the space of all $n$-bit messages.
The algorithms and interactive encryption protocol are defined as follows.
\begin{itemize}
    \item \underline{$\KeyGen(1^\secpar)$.}
    Compute
    \[
        (\pk,\sk)\gets\PKE.\KeyGen(1^\secpar)
    \]
    and output $(\pk,\sk)$.

    \item \underline{$\Enc\langle S_{\mathsf C}(\pk,m),R_{\mathsf Q}\rangle(1^\secpar)$.}
    The classical sender and quantum receiver proceed as follows.
    \begin{enumerate}
        \item For each bit $m_i$ of $m$, run the sharing protocol
        \[
            ((\verkey_i,\csh_i),\ket{\qsh})
            \gets
            \SQSS.\Share\langle S_{\mathsf C}(m_i),R_{\mathsf Q}\rangle(1^\secpar).
        \]
        
        \item The sender computes
        \[
            \widetilde{\csh}\gets\PKE.\Enc(\pk,(\csh)_{i\in [n]})
        \]
        and sends $\widetilde{\csh}$ to the receiver.
        
        \item The sender outputs $\verkey = (\verkey_i)_{i\in [n]}$, and the receiver outputs the quantum ciphertext
        \[
            \ket{\ct}\coloneqq\left(\bigotimes_{i\in [n]}\ket{\qsh_i},\ \widetilde{\csh}\right).
        \]
    \end{enumerate}

    \item \underline{$\Dec(\sk,\ket{\ct})$.}
    Parse $\ket{\ct}=(\ket{\qsh},\widetilde{\csh})$, compute
    \[
        (\csh_i)_{i\in [n]} \gets\PKE.\Dec(\sk,\widetilde{\csh}).
    \]
    If $\PKE.\Dec$ outputs $\bot$, output $\bot$. Otherwise, reconstruct $m_i \gets \SQSS.\Reconstruct(\csh_i,\ket{\qsh_i})${} for each $i\in [n]$ and output $m = (m_i)_{i\in [n]}$.

    \item \underline{$\Del(\ket{\ct})$.}
    Parse $\ket{\ct}=(\ket{\qsh},\widetilde{\csh})$, compute
    \[
        \cert_i \gets\SQSS.\Del(\ket{\qsh_i}),
    \]
    for each $i\in [n]$
    and output $\cert = (\cert_i)_{i\in [n]}$.

    \item \underline{$\Ver(\verkey,\cert)$.}
    Output $\SQSS.\Ver(\verkey,\cert)$.
\end{itemize}
\end{construction}

\begin{remark}
The same generic compiler can also be applied to advanced encryption schemes such as Attribute-Based Encryption or Identity-Based Encryption. By encrypting the classical share under the ABE or IBE scheme, the share, and therefore the message, is only revealed when the attributes or identity conditions are satisfied.
This yields the corresponding semi-quantum ABE or IBE primitive with (publicly verifiable) certified deletion.
\end{remark}

\begin{theorem}\label{thm:sq-pke-cd}
Assume there exists a post-quantum semantically secure public-key encryption scheme and a 2-out-of-2 semi-quantum secret sharing scheme with publicly verifiable $\calZ$-deletion security for every $\calZ$ which has adaptive semantic hiding.

Then there exists a semi-quantum public-key encryption scheme with publicly verifiable certified deletion.
In particular, the scheme in \Cref{constr:sq-pke-cd} satisfies \Cref{def:sq-encryption-cd}.
\end{theorem}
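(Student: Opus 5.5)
Correctness of \Cref{constr:sq-pke-cd} follows directly from the correctness of $\PKE$ and of $\SQSS$. Decrypting $\widetilde{\csh}$ recovers each $\csh_i$ except with negligible probability, and $\SQSS.\Reconstruct(\csh_i,\ket{\qsh_i})$ then returns $m_i$. A union bound over the $n=\poly$ bits finishes this part. Deletion correctness is the same union bound applied to $\SQSS.\Ver$ on each $\cert_i$.

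For both security properties, the plan is a hybrid argument over the bits, changing $m_0$ into $m_1$ one position at a time. Because encryption is interactive and the adversary must see $\pk$ before the sharing begins, the hybrids will embed all of the $\PKE$ material into $\calZ$. Concretely, for the hybrid that switches bit $j$ I will define a two-stage distribution $\calZ^{(j)}$ as follows.
\begin{itemize}
\item $\calZ^{(j)}_1(r)$ samples $(\pk,\sk)\gets\PKE.\KeyGen$ using $r$ and outputs $\aux_1=\pk$.
\item $\calZ^{(j)}_2(\csh;r)$ outputs $\PKE.\Enc(\pk,(\csh_i)_{i\in[n]})$, where $\csh_j=\csh$ is the challenge share and the other $\csh_i$ are supplied by the reduction.
\end{itemize}
The reduction is the $\calZ^{(j)}$-deletion adversary. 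It receives $\aux_1=\pk$ and runs $A(\pk)$ to obtain $(m_0,m_1)$. It plays the sender itself in the sharing sessions for bits $i<j$ (using $m_1$) and $i>j$ (using $m_0$), and routes session $j$ to the external $\SQSS$ challenger. The remaining shares enter $\calZ_2$ through its shared randomness and input, which I will formalize by letting $\calZ_2$ take them as an additional adaptively chosen input. At the end, the game reveals $r$, and hence $\sk$. The reduction also holds every $\verkey_i$, so it can check the other certificates itself and output $\bot$ exactly as in $\CDGame_{\mathrm{enc}}$.

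The step that needs care is checking that $\calZ^{(j)}$ satisfies \emph{adaptive semantic hiding}. This is where the order of events matters: the adversary chooses the plaintext after seeing $\pk=\aux_1$, and we must argue that $\PKE.\Enc(\pk,x)$ is indistinguishable from $\PKE.\Enc(\pk,0^n)$. That is exactly IND-CPA security with the challenge messages chosen after seeing $\pk$, so it holds. The subtle point is that the other shares are fixed by the reduction, so the hiding statement is needed for a plaintext vector that is partly adversarial. I will handle this by letting the adaptively chosen $x$ in the hiding game be the entire vector $(\csh_i)_i$, which IND-CPA still covers. I will also make sure that the sessions for $i>j$ happen after the challenge session only in the sense of sequential composition, so the reduction's simulation is still efficient.

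Privacy (IND-CPA of the quantum ciphertext) then follows by the same bit-by-bit hybrid, using $\SQSS$ privacy for the switched bit. Alternatively, it follows directly: first replace $\widetilde{\csh}$ by an encryption of zeros using $\PKE$ security, and then apply $\SQSS$ privacy per bit, since $\verkey$ and the quantum registers are all that remain. The hybrid argument loses a factor of $n$, which is polynomial, so the total advantage is negligible.
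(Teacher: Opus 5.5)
Your proposal is essentially the paper's proof. It uses the same bit-by-bit hybrid with $\calZ_1$ outputting $\pk$ and $\calZ_2$ outputting $\PKE.\Enc(\pk,(\csh_i)_{i\in[n]})$, adaptive semantic hiding from IND-CPA, and a reduction that simulates the other sessions, checks their certificates itself, and recovers $\sk$ from the revealed $r$. Privacy goes through your second route; the first would need the same $\PKE$ swap beforehand, since an $\SQSS$-privacy reduction never learns $\csh_j$. Your remark that $\calZ_2$ must also be handed the reduction's other shares is correct, and the paper's $\calZ_2(\csh;r)$ encrypts all of $(\csh_i)_{i\in[n]}$ without addressing it. Giving $\calZ_2$ this extra adversarially chosen input strengthens the $\calZ$-deletion hypothesis as defined, so you should state that stronger hypothesis explicitly. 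It does hold for the scheme of \Cref{sec:sqss}: that proof invokes the hiding of $\calZ$ only to replace $\calZ_2(\csh;r)$ by $\calZ_2(0;r)$ before $\csh$ and $r$ are revealed, and this step works verbatim with an extra input.
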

\begin{proof}
Correctness follows from the correctness of $\PKE$ and $\SQSS$, while deletion correctness follows directly from that of $\SQSS$.
For ciphertext privacy, post-quantum semantic security of $\PKE$ lets us replace $\widetilde{\csh}=\PKE.\Enc(\pk,\csh)$ by an encryption of $0^{|\csh|}$.
The encrypted component is then independent of $m$, so privacy of $\SQSS$ for the receiver's full view, including $\verkey$, allows us to change $m$ from $m_0$ to $m_1$; reversing the first replacement proves ciphertext privacy.

Next we show deletion security by reducing to the general $\calZ$-deletion security of $\SQSS$. Consider the series of $n$ hybrid experiment where we switch from encrypting $m_0$ to $m_1$ one bit at a time (in the $i$'th hybrid, the first $i$ bits are according to $m_1$ and the rest are according to $m_0$).
Suppose some hybrid $i-1$ were distinguishable from hybrid $i$ by a QPT adversary $(A, R^*, \Del^*)$ and QPT distinguisher $D$. 

We describe the reduction together with the 2-stage $\calZ$ distribution.
\begin{enumerate}
    \item $\pk \gets \calZ_1(r)$ samples a key pair $(\pk, \sk) \gets \PKE.\KeyGen(1^\secpar)$ using $r$, then outputs the public key $\pk$.
    
    \item \textbf{Reduction Sharing Step.} 
    The reduction computes the adversary's message choices $(m^{(0)}, m^{(1)}, \calR_\adv) \gets A(\pk)$.
    
    It participates in sharing as follows. It internally acts as the sender to execute $\Share$ with $R^*$ for the first $i-1$ bits $m_j^{(1)}$ of $m^{(1)}$. 
    Then, it runs $\Share$ with the challenger, acting as the receiver and using $R^*$'s code.
    Finally, it internally acts as the sender to execute $\Share$ with $R^*$ for the last $n-i$ bits $m_j^{(0)}$ of $m^{(0)}$.

    \item $\ct \gets \calZ_2(\csh;r)$ uses $r$ to sample the same key pair $(\pk, \sk)$ as $\calZ_1$. Then, it encrypts and outputs $\ct \gets \PKE.\Enc((\csh_j)_{j\in [n]})$ using as randomness a disjoint part of $r$ from the part used to sample the key pair.

    \item \textbf{Reduction Output Guess.} The reduction receives $r$.
    If $m^{(0)}_i = m^{(1)}_i$, the reduction guesses which hybrid it is at random. 

    Otherwise, it does the following. First, obtain $\cert$ from $\Del^*$. If $\cert$ is valid (which may be checked using the public $\verkey$), send $\sk$ to $D$ along with the residual view of the adversary, and output $D$'s guess (flipped if $m^{(0)}_i = 1$ and $m^{(0)}_i = 0$).
\end{enumerate}
First, observe that $\calZ$ has adaptive semantic hiding by the IND-CPA security of $\PKE$. Second, observe that when $m^{(0)}_i = m^{(1)}_i$, the two hybrids are identical and there can be no advantage. Finally, observe that when they are not identical, the ending distribution of hybrid $i-1$ is identically distributed to the $\calZ$ deletion game for $\SQSS$ with bit $b = m^{(0)}_i$. Similarly, the ending distribution of hybrid $i$ is identically distributed to the $\calZ$ deletion game with $b = m^{(1)}_i$.
Therefore if hybrids $i-1$ and $i$ were noticeably distinguishable by a QPT adversary and distinguisher, the reduction would noticeably distinguish in the $\calZ$ deletion game for $\SQSS$, contradicting it.

\end{proof}

An almost identical proof also shows security when replacing $\PKE$ with an attribute-based or identity-based scheme. The only two steps that change are $\calZ_1$ and $\calZ_2$. Both also incorporate the adversary so that it can query for secret keys which do not match the attributes or identity of the challenge ciphertext, and output the adversary's residual state at the end of this phase. Since ABE and IBE guarantee the semantic hiding of the challenge ciphertext even given keys not matching its attributes or identity, $\calZ$ is still semantically hiding.

Our construction in \Cref{thm:sq-cd} gives the required semi-quantum secret sharing scheme under the post-quantum hardness of LWE.
Thus, \Cref{thm:sq-pke-cd} gives the following corollary.
\begin{corollary}\label{cor:sq-pke-cd-lwe}
Assuming the post-quantum hardness of $\LWE$ and PKE, ABE, or IBE, there exists a semi-quantum PKE, ABE, or IBE scheme with publicly verifiable certified deletion.
\end{corollary}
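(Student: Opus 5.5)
The corollary follows by instantiating \Cref{thm:sq-pke-cd} with the secret sharing scheme from \Cref{sec:sqss}. Concretely, I would take $\SQSS$ to be the scheme of \Cref{constr:sq-cd-keygen,constr:sq-cd-recon,constr:sq-cd-del-ver}. Its ingredients exist under post-quantum LWE:
\begin{itemize}
    \item the dual-mode claw-free trapdoor function family with adaptive hardcore bits comes from \Cref{thm:dual-tcf};
    \item the witness-preserving argument for $\NP$ with approximate indistinguishable-extraction comes from \Cref{coro:wp-np-eps-iext}.
\end{itemize}
\Cref{thm:sq-cd} then says this $\SQSS$ has publicly verifiable $\calZ$-deletion security for every adaptively semantically hiding $\calZ$. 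That is exactly the hypothesis on $\SQSS$ in \Cref{thm:sq-pke-cd}. Combined with the assumed post-quantum semantically secure PKE, \Cref{constr:sq-pke-cd} yields semi-quantum PKE with publicly verifiable certified deletion.

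For ABE and IBE, I would repeat the proof of \Cref{thm:sq-pke-cd} with one change: the two-stage distribution $\calZ$ is modified as described after that proof.
\begin{itemize}
    \item $\calZ_1$ samples the master keys and runs the adversary's pre-challenge key-query phase for non-matching attributes or identities. It outputs the public parameters together with the adversary's residual state.
    \item $\calZ_2(\csh; r)$ encrypts $(\csh_j)_j$ under the challenge attribute or identity, with the adversary's further non-matching queries answered.
\end{itemize}
Adaptive semantic hiding of this $\calZ$ reduces directly to selective or adaptive payload-hiding security of the ABE or IBE scheme. The reduction embeds the scheme's challenge and answers key queries through its own oracle. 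The share-by-share hybrid argument then goes through verbatim: the reduction can check the deletion certificate using only the public $\verkey$, and it releases the master secret key (or the challenge-matching key) only after verification passes.

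The main thing to check is that the key-query oracles fit inside the ``shared randomness'' formalism of $\calZ$. In particular, the reduction must simulate the interleaving of the adversary's key queries with the interactive sharing protocol. I would handle this by letting $\calZ_1$ output an oracle-equipped adversary state, with randomness $r$ fixed up front, so the reduction's behaviour is unchanged from the PKE case. Everything else is a direct application of previously stated results.
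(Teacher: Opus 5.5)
Your proposal is correct and follows the paper's own route. The paper also plugs the LWE-based secret sharing scheme of \Cref{thm:sq-cd} into \Cref{thm:sq-pke-cd}, and handles ABE and IBE by folding the non-matching key-query phases into $\calZ_1$ and $\calZ_2$, as sketched after the proof of \Cref{thm:sq-pke-cd}. Your extra care about oracle access during the interactive sharing goes slightly beyond the paper's informal treatment but does not change the argument, since the proof of \Cref{thm:sq-cd} uses $\calZ$ only through adaptive semantic hiding and the fact that $\td_A$ is not revealed before deletion.
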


\subsection{Commitments}

We adapt the commitment syntax and security requirements of \cite{C:BarKhu23} to the semi-quantum setting.
As in the encryption definition above, ordinary hiding and hiding after an accepted deletion certificate are required separately.

\begin{definition}\label{def:sq-commitment-cd}
A semi-quantum bit commitment scheme with (publicly verifiable) certified deletion consists of the following algorithms and protocol:
\begin{itemize}
    \item \textbf{$\Com\langle C_{\mathsf C}(b),R_{\mathsf Q}\rangle(1^\secpar)\rightarrow((\verkey,\decom),\ket{\com})$} is an interactive commitment protocol between a classical $\PPT$ committer $C_{\mathsf C}$ and a quantum $\QPT$ receiver $R_{\mathsf Q}$, using only classical communication.
    The committer inputs a bit $b\in\{0,1\}$, and the receiver has no private input.
    At the end of the protocol, the committer outputs a classical deletion verification key $\verkey$ and classical decommitment information $\decom$, while the receiver outputs a quantum commitment $\ket{\com}$.

    \item \textbf{$\Reveal(\decom,\ket{\com})\rightarrow\mu\in\{0,1,\bot\}$} is a $\QPT$ algorithm that takes as input the decommitment information and quantum commitment, then outputs a bit or $\bot$.

    \item \textbf{$\Del(\ket{\com})\rightarrow\cert$} is a $\QPT$ deletion algorithm that takes as input the quantum commitment, then outputs a classical deletion certificate $\cert$.

    \item \textbf{$\Ver(\verkey,\cert)\rightarrow\Accept/\Reject$} is a classical $\PPT$ verification algorithm that takes as input the verification key and certificate, then outputs accept or reject.
\end{itemize}
It satisfies the following properties:
\begin{itemize}
    \item \textbf{Correctness.}
    There exists a negligible function $\mu$ such that for every $b\in\{0,1\}$ and every $\secpar\in\mathbb{N}$, honest decommitments recover the committed bit:
    \[
        \Pr\left[b\gets\Reveal(\decom,\ket{\com}):((\verkey,\decom),\ket{\com})\gets\Com\langle C_{\mathsf C}(b),R_{\mathsf Q}\rangle(1^\secpar)\right]
        =1-\mu(\secpar).
    \]
Moreover, honestly generated deletion certificates are accepted with overwhelming probability:
    \[
        \Pr\left[\Accept\gets\Ver(\verkey,\cert):\begin{array}{c}
            ((\verkey,\decom),\ket{\com})\gets\Com\langle C_{\mathsf C}(b),R_{\mathsf Q}\rangle(1^\secpar)\\
            \cert\gets\Del(\ket{\com})
        \end{array}\right]
        =1-\mu(\secpar).
    \]

    \item \textbf{Computational Hiding.}
    For every $\QPT$ adversarial receiver $R^*$ outputting a register $\calR_{\com}$,
    \begin{gather*}
        \left\{(\verkey,\calR_{\com}):((\verkey,\decom),\calR_{\com})\gets\Com\langle C_{\mathsf C}(0),R^*\rangle(1^\secpar)\right\}
        \\
        \approx_c
        \\
        \left\{(\verkey,\calR_{\com}):((\verkey,\decom),\calR_{\com})\gets\Com\langle C_{\mathsf C}(1),R^*\rangle(1^\secpar)\right\}.
    \end{gather*}

    \item \textbf{Binding.}
    Following \cite{AC:Unruh16,EPRINT:DalSpo22}, define $\CollapseGame_{C^*}(1^\secpar)$ for a three-stage malicious committer $C^*=(C^*_{\com},C^*_{\mathsf{rev}},C^*_{\mathsf{guess}})$ as follows:
    \begin{enumerate}
        \item Sample $b\gets\{0,1\}$ uniformly.
        \item Run $C^*_{\com}$ in the commitment phase with the honest receiver, and let $(\calR_C,\calR_{\com})$ denote their resulting states.
        Then run $C^*_{\mathsf{rev}}$ coherently to obtain a decommitment register $\calR_{\decom}$ and residual state $\calR'_C$.
        \item Coherently apply $\Reveal$ to $(\calR_{\decom},\calR_{\com})$, writing its output into a register $\calR_M$, and measure whether $\calR_M\neq\bot$.
        If the measurement rejects, sample $b'$ uniformly and proceed to the final step.
        \item If $b=1$, measure $\calR_M$ in the computational basis; if $b=0$, do nothing.
        \item Give $(\calR'_C,\calR_{\decom},\calR_M)$ to $C^*_{\mathsf{guess}}$, which outputs a bit $b'$.
        \item Output $1$ if $b'=b$, and output $0$ otherwise.
    \end{enumerate}
    The scheme is \emph{computational collapse binding} (respectively, \emph{statistical collapse binding}) if for every $\QPT$ (respectively, possibly unbounded) three-stage malicious committer $C^*$, there exists a negligible function $\mu$ such that
    \[
        \Pr\left[\CollapseGame_{C^*}(1^\secpar)=1\right]
        \leq \frac{1}{2}+\mu(\secpar).
    \]

    \item \textbf{Publicly Verifiable Deletion Hiding.}{}
    For every $\QPT$ adversary $(R^*,\Del^*)$,
    \[
        \CDGame_{\mathrm{com}}(1^\secpar,0,(R^*,\Del^*))
        \approx_c
        \CDGame_{\mathrm{com}}(1^\secpar,1,(R^*,\Del^*)),
    \]
    where the game $\CDGame_{\mathrm{com}}(1^\secpar,b,(R^*,\Del^*))$ is played as follows:
    \begin{enumerate}
        \item Run $((\verkey,\decom),\calR_{\com})\gets\Com\langle C_{\mathsf C}(b),R^*\rangle(1^\secpar)$.
        \item Compute $(\calR_{\Del},\calR_{\cert})\gets\Del^*(\calR_{\com},\verkey)$.
        \item If $\Ver(\verkey,\calR_{\cert})$ accepts, output $(\calR_{\Del},\decom)$. Otherwise, output $\bot$.
    \end{enumerate}
    If $\Del^*$ does not receive $\verkey$, we omit ``publicly verifiable.''
\end{itemize}
Thus, after an accepted deletion certificate, hiding continues to hold even when the decommitment information is subsequently exposed.
\end{definition}

Let $\COM$ be a non-interactive commitment scheme for classical strings.
We require post-quantum computational hiding and either computational or statistical collapse binding.
Its commitment algorithm $\COM.\Commit(m;r)$ commits to $m$ using randomness $r$, and $\COM.\Ver$ verifies an opening.
Let $\SQSS$ be a 2-out-of-2 semi-quantum secret sharing scheme with publicly verifiable $\calZ$-deletion security for every semantically hiding distribution $\calZ$.

\begin{construction}[Semi-Quantum Commitment with Certified Deletion]\label{constr:sq-com-cd}
The commitment protocol and algorithms are defined as follows.
\begin{itemize}
    \item \underline{$\Com\langle C_{\mathsf C}(b),R_{\mathsf Q}\rangle(1^\secpar)$.}
    The committer and receiver proceed as follows:
    \begin{enumerate}
        \item Run the sharing protocol
        \[
            ((\verkey,\csh),\ket{\qsh})
            \gets
            \SQSS.\Share\langle C_{\mathsf C}(b),R_{\mathsf Q}\rangle(1^\secpar).
        \]
        \item The committer samples randomness $r$, computes
        \[
            \gamma\gets\COM.\Commit(\csh;r),
        \]
        and sends $\gamma$ to the receiver.
        \item The committer outputs $(\verkey,\decom)$, where $\decom\coloneqq(\csh,r)$, and the receiver outputs
        \[
            \ket{\com}\coloneqq(\ket{\qsh},\gamma).
        \]
    \end{enumerate}

    \item \underline{$\Reveal(\decom,\ket{\com})$.}{}
    Parse $\decom=(\csh,r)$ and $\ket{\com}=(\ket{\qsh},\gamma)$.
    If $\COM.\Ver(\gamma,\csh,r)$ rejects, output $\bot$; otherwise, output $\SQSS.\Reconstruct(\csh,\ket{\qsh})$.

    \item \underline{$\Del(\ket{\com})$.}
    Parse $\ket{\com}=(\ket{\qsh},\gamma)$ and output $\SQSS.\Del(\ket{\qsh})$.

    \item \underline{$\Ver(\verkey,\cert)$.}
    Output $\SQSS.\Ver(\verkey,\cert)$.
\end{itemize}
\end{construction}

\begin{theorem}\label{thm:sq-com-cd}
Assume there exist $\COM$ and $\SQSS$ satisfying the properties above.
Then the scheme in \Cref{constr:sq-com-cd} is a computationally hiding semi-quantum commitment scheme with publicly verifiable certified deletion as in \Cref{def:sq-commitment-cd}.
It has computational collapse binding if $\COM$ has computational collapse binding, and statistical collapse binding if $\COM$ has statistical collapse binding.
\end{theorem}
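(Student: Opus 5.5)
The plan mirrors the proof of \Cref{thm:sq-pke-cd}: reduce each property either to the corresponding property of $\SQSS$ (with a suitable two-stage distribution $\calZ$) or to a property of $\COM$.

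\textbf{Correctness.} This is immediate. Honest $\gamma$ verifies under $(\csh,r)$ by correctness of $\COM$, so $\Reveal$ outputs $\SQSS.\Reconstruct(\csh,\ket{\qsh})$, which equals $b$ with overwhelming probability by $\SQSS$ correctness. Deletion correctness is inherited directly from $\SQSS$.

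\textbf{Computational hiding.} I would argue by two hybrids. First, use post-quantum hiding of $\COM$ to replace $\gamma$ by a commitment to $0^{|\csh|}$. The reduction runs $\SQSS.\Share$ itself as the sender, so it knows $\csh$ and never needs $r$. Second, the view $(\verkey,\calR_{\com})$ now depends on $b$ only through the sharing protocol, and privacy of $\SQSS$ (which covers $\verkey$ and the receiver's register) switches $b$ from $0$ to $1$. Finally, reverse the first hybrid.

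\textbf{Deletion hiding.} I would take $\calZ_1$ to output nothing (the empty string) and $\calZ_2(\csh;r)=\COM.\Commit(\csh;r)$. Adaptive semantic hiding of this $\calZ$ is exactly computational hiding of $\COM$ for an adaptively chosen string, since $r$ is not revealed there. Under this choice, $\CDGame_{\mathrm{com}}(1^\secpar,b,\cdot)$ is literally $\CDGame_{\calZ}(1^\secpar,b,\cdot)$ for $\SQSS$:
\begin{itemize}
\item the receiver gets $\aux_2=\gamma$ before deletion;
\item on acceptance, the output $(\calR_\Del,\csh,r)$ is precisely $(\calR_\Del,\decom)$.
\end{itemize}
$\verkey$ is given to $\Del^*$ in both games. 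Hence the property follows directly from $\calZ$-deletion security of $\SQSS$.

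\textbf{Collapse binding.} This is the step needing the most care, though I expect it to be routine. Given a committer $C^*$ winning $\CollapseGame$ with noticeable advantage, I would build a committer $\hat C$ for $\COM$:
\begin{enumerate}
\item $\hat C$ internally simulates the honest $\SQSS$ receiver together with $C^*_{\com}$, so it holds the register $\ket{\qsh}$, and it outputs $\gamma$.
\item In the reveal stage, it runs $C^*_{\mathsf{rev}}$ to get $\calR_{\decom}$, containing a superposition over pairs $(\csh,r)$.
\end{enumerate}
In the scheme, $\calR_M$ is $\bot$ unless $\COM.\Ver$ accepts. Whenever it accepts, $\calR_M$ is a function of the $\csh$ part of $\calR_{\decom}$ and of $\ket{\qsh}$, the latter held by $\hat C$.

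The approach is a standard collapse-binding argument. By collapse binding of $\COM$, measuring the committed message $\csh$ (conditioned on acceptance) is indistinguishable to $\hat C$, including for unbounded $\hat C$ in the statistical case. Once $\csh$ is measured, a further measurement of $\calR_M$ is a measurement of $\SQSS.\Reconstruct(\csh,\cdot)$ applied to $\ket{\qsh}$.

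The obstacle is that $\Reconstruct$ is not deterministic given $\csh$: its output depends on a Hadamard measurement of $\ket{\qsh}$. So measuring $\csh$ alone does not obviously collapse $\calR_M$. The fix I would try is to note that $\calR_M$ is computed coherently from $(\csh,\ket{\qsh})$, and that all of $\ket{\qsh}$ is held by the honest receiver inside the game. Measuring $\calR_M$ after $\csh$ has been measured then acts only on registers the adversary $C^*$ does not control (receiver side). Its effect on $C^*$'s view is identical in the $b=0$ and $b=1$ branches, provided that in the collapse game the receiver's register is not handed back.

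If it turns out the receiver's register is handed back, the fallback is to interpret $\Reveal$'s coherent output via the controlled computation and argue with an intermediate hybrid. Either way, I would chain collapse binding of $\COM$ with this observation to bound $C^*$'s advantage by $\COM$'s, preserving the computational or statistical flavor as the theorem states.
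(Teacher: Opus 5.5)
Your proposal is correct and follows essentially the same route as the paper: hiding via a $\COM$-hiding hybrid followed by $\SQSS$ privacy; deletion hiding via $\calZ_2(\csh;r)=\COM.\Commit(\csh;r)$, so that the revealed $(\csh,r)$ is exactly $\decom$; and collapse binding by first measuring $\csh$ using collapse binding of $\COM$, then observing that once $\csh$ is fixed, measuring the opened bit amounts to a measurement on the honest receiver's share alone. Your proviso holds, because $\CollapseGame$ gives only $(\calR'_C,\calR_{\decom},\calR_M)$ to $C^*_{\mathsf{guess}}$ and never returns $\calR_{\com}$, so the fallback is unnecessary; the paper phrases the same step as the receiver's share being unentangled with the committer once the classical transcript and $\csh$ are fixed.
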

\begin{proof}
Opening correctness follows from the correctness of $\COM$ and $\SQSS$, while deletion correctness follows directly from that of $\SQSS$.

For computational hiding, the post-quantum hiding of $\COM$ lets us replace
\[
    \gamma=\COM.\Commit(\csh;r)
\]
by a commitment to $0^{|\csh|}$, even in the presence of the receiver's quantum auxiliary state.
The commitment component is then independent of $b$, so privacy of $\SQSS$ allows us to change $b$ from $0$ to $1$; reversing the first replacement proves computational hiding.

For either version of collapse binding, the corresponding property of $\COM$ lets us measure the $\csh$ component of any valid decommitment with computationally or statistically negligible disturbance, respectively.
Conditioned on this classical value and the classical commitment transcript, the receiver's quantum share is not entangled with the committer, and reconstruction fixes the opened bit using only the receiver's share.
Measuring that bit therefore causes no further disturbance to the committer's state, proving the corresponding collapse-binding property of the constructed scheme.

For deletion hiding, let $\calZ$ be the distribution that, on input $\csh$ and randomness $r$, outputs $\gamma=\COM.\Commit(\csh;r)$.
This distribution is semantically hiding by the post-quantum hiding of $\COM$.
After an accepted certificate, the publicly verifiable $\calZ$-deletion game reveals $(\calR_{\Del},\csh,r)$, which is exactly the output required in the commitment deletion-hiding game because $\decom=(\csh,r)$.
Publicly verifiable $\calZ$-deletion security therefore proves deletion hiding.
\end{proof}

Post-quantum $\LWE$ implies post-quantum computationally hiding, statistically binding non-interactive classical commitments.
Together with our LWE-based semi-quantum secret sharing scheme from \Cref{thm:sq-cd}, \Cref{thm:sq-com-cd} gives the following corollary.
\begin{corollary}\label{cor:sq-com-cd-lwe}
Assuming the post-quantum hardness of $\LWE$, there exists a computationally hiding and statistically collapse-binding semi-quantum commitment scheme with publicly verifiable certified deletion.
\end{corollary}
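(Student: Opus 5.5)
The statement to prove is \Cref{cor:sq-com-cd-lwe}. It follows by instantiating \Cref{thm:sq-com-cd} with concrete LWE-based components.

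\textbf{Choosing $\COM$.} I would take any post-quantum computationally hiding, statistically binding non-interactive commitment to classical strings from LWE. One option is Regev-style encryption under a public key that the committer samples itself, with lossy/injective arguments as needed. Another is to commit with the injective mode of a dual-mode function from \Cref{thm:dual-tcf}. For example, set $\gamma = f_{\pubparams}(m; r)$ with $\pubparams$ sampled in injective mode and included in $\gamma$.

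Hiding would then come from dual-mode indistinguishability. Switching to lossy mode makes the output statistically close to independent of $m$ via the balance/range-superposition structure. A more standard alternative is to cite LWE-based statistically binding commitments directly.

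\textbf{Upgrading statistical binding to statistical collapse binding.} I would argue directly. If $\gamma$ determines the committed string $\csh$ uniquely (perfect or statistical binding over the choice of public parameters), then any valid decommitment register has its $\csh$ component fixed. Measuring that component, or the value coherently derived from it, is then a projection onto a single basis state within the valid subspace, so it is undetectable even to unbounded committers. This is exactly the argument in the proof of \Cref{lem:dual-mode-tcfs-are-collapsing} for injective mode.

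One subtlety needs care: the collapse game measures the output $\calR_M$ of $\Reveal$, not $\csh$. This is handled inside \Cref{thm:sq-com-cd}'s proof, so I only need collapse binding of $\COM$ itself.

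\textbf{Choosing $\SQSS$ and concluding.} I would take the scheme of \Cref{thm:sq-cd}, which exists under LWE using \Cref{thm:dual-tcf} and \Cref{coro:wp-np-eps-iext}. It gives publicly verifiable $\calZ$-deletion for every adaptively semantically hiding $\calZ$.

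I must check that $\calZ_2(\csh; r) = \COM.\Commit(\csh; r)$, with trivial $\calZ_1$, is adaptively semantically hiding. Because $\calZ_1$ reveals nothing, the adversary's choice of $x$ is independent of $r$. Adaptive hiding therefore reduces to ordinary post-quantum computational hiding of $\COM$ between $0^n$ and $x$, quantified over all $x$, and non-uniform $x$ is fine.

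Plugging both components into \Cref{thm:sq-com-cd} gives computational hiding, statistical collapse binding, and publicly verifiable deletion hiding.

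The main obstacle is the first step: making sure the chosen LWE commitment is genuinely statistically binding with only negligible bad-parameter probability, and that this yields \emph{statistical} collapse binding against unbounded committers. I would handle this by choosing an injective-mode construction where binding holds for all but a negligible fraction of parameters.
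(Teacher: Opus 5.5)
Your overall route is the paper's: instantiate \Cref{thm:sq-com-cd} with an LWE-based non-interactive commitment and the secret sharing scheme of \Cref{thm:sq-cd}. Your two additional checks are correct and fill in steps the paper leaves implicit. First, perfect binding (each commitment string has at most one openable message) makes measuring the message act as the identity on the valid-opening subspace, which gives statistical collapse binding. Second, with trivial $\calZ_1$, adaptive semantic hiding of $\calZ_2(\csh;r)=\COM.\Commit(\csh;r)$ reduces to ordinary hiding of $\COM$.

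The gap is in how you propose to obtain $\COM$, which is also what you flag as the main obstacle. \Cref{thm:sq-com-cd} requires a plain-model non-interactive commitment, so no honest party samples parameters. Binding must therefore hold for every string a possibly unbounded committer can send, and ``binding for all but a negligible fraction of parameters'' buys nothing.

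If the committer samples $\pubparams$ and includes it in $\gamma$, binding fails outright, even computationally. A malicious committer samples $\pubparams$ in lossy mode together with $\td$, computes a claw $f_\pubparams(0;r_0)=f_\pubparams(1;r_1)$, and opens $\gamma$ both ways. The Regev option with a self-sampled key fails the same way: a lossy or high-noise key yields ciphertexts that open to both messages.

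Moving parameter generation to the quantum receiver does not rescue this either. A receiver that samples injective-mode $\pubparams$ keeps $\td$ and inverts $\gamma$, which destroys hiding. The commitment would also no longer be non-interactive.

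The fix is what the paper does: cite an existing plain-model, perfectly binding, non-interactive commitment from LWE (such constructions exist, e.g.\ Goyal--Hohenberger--Koppula--Waters, TCC 2017). Your perfect-binding $\Rightarrow$ statistical collapse-binding argument then applies directly.
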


\subsection{Fully Homomorphic Encryption}

Following \cite{C:BarKhu23}, fully homomorphic encryption with certified deletion is obtained by augmenting the encryption syntax with homomorphic evaluation while retaining the encryption and deletion-security requirements.

\begin{definition}\label{def:sq-fhe-cd}
A semi-quantum fully homomorphic encryption scheme with (publicly verifiable) certified deletion is a semi-quantum public-key encryption scheme satisfying \Cref{def:sq-encryption-cd}, extended in the standard way to polynomial-length messages, together with the following algorithm:
\begin{itemize}
    \item \textbf{$\Eval(\pk,C,\ket{\ct})\rightarrow\ket{\widetilde{\ct}}$} is a $\QPT$ evaluation algorithm that takes as input a public key $\pk$, a polynomial-size classical circuit $C$, and a quantum ciphertext $\ket{\ct}$, then outputs a potentially quantum evaluated ciphertext $\ket{\widetilde{\ct}}$.
\end{itemize}
In addition to the correctness, ciphertext-privacy, and deletion-security properties of \Cref{def:sq-encryption-cd}, the scheme satisfies:
\begin{itemize}
    \item \textbf{Evaluation Correctness.}
    There exists a negligible function $\mu$ such that for every polynomial-length message $x$, every polynomial-size classical circuit $C$ defined on $x$, and every $\secpar\in\mathbb{N}$,
    \[
        \Pr\left[\Dec(\sk,\ket{\widetilde{\ct}})=C(x):\begin{array}{c}
            (\pk,\sk)\gets\KeyGen(1^\secpar)\\
            (\verkey,\ket{\ct})\gets\Enc\langle S_{\mathsf C}(\pk,x),R_{\mathsf Q}\rangle(1^\secpar)\\
            \ket{\widetilde{\ct}}\gets\Eval(\pk,C,\ket{\ct})
        \end{array}\right]
        =1-\mu(\secpar).
    \]

    \item \textbf{Compactness.}
    The size of $\ket{\widetilde{\ct}}$ and the running time of $\Dec(\sk,\ket{\widetilde{\ct}})$ are bounded by a polynomial in the security parameter and output length, independently of the size or depth of $C$.
\end{itemize}
The supported circuit class contains all polynomial-size classical circuits.
The inherited certified deletion guarantee applies to ciphertexts output by the interactive encryption protocol; as in the standard blind-delegation use of FHE with certified deletion, an evaluator may evaluate coherently, return the evaluated ciphertext for decryption, and uncompute the evaluation before deleting the restored original ciphertext.
\end{definition}

Let $\FHE=(\FHE.\Gen,\FHE.\Enc,\FHE.\Eval,\FHE.\Dec)$ be a post-quantum semantically secure classical fully homomorphic encryption scheme whose message space contains the classical shares output by the polynomial-message extension of $\SQSS.\Share$.
We additionally require that $\SQSS.\Reconstruct$ is a PPT algorithm diagonal in the computational basis.
The construction from \Cref{sec:sqss} satisfies this requirement with two minor changes: additionally apply $H$ to $\ket{\qsh}$ at the end of $\Share$, and also apply it at the start of $\Del$. $y_A$ should also be included in either of the two shares instead of computing it from $\ket{\qsh}$ in the computational basis.

\begin{construction}[Semi-Quantum Fully Homomorphic Encryption with Certified Deletion]\label{constr:sq-fhe-cd}
The algorithms and interactive encryption protocol are defined as follows.
\begin{itemize}
    \item \underline{$\KeyGen(1^\secpar)$.}
    Compute
    \[
        (\pk,\sk)\gets\FHE.\Gen(1^\secpar)
    \]
    and output $(\pk,\sk)$.

    \item \underline{$\Enc\langle S_{\mathsf C}(\pk,x),R_{\mathsf Q}\rangle(1^\secpar)$.}
    Run the polynomial-message extension of the sharing protocol to obtain
    \[
        ((\verkey,\csh),\ket{\qsh})
        \gets
        \SQSS.\Share\langle S_{\mathsf C}(x),R_{\mathsf Q}\rangle(1^\secpar).
    \]
    The sender computes
    \[
        \widetilde{\csh}\gets\FHE.\Enc(\pk,\csh),
    \]
    sends $\widetilde{\csh}$ to the receiver, and outputs $\verkey$.
    The receiver outputs the fresh ciphertext
    \[
        \ket{\ct}\coloneqq(\mathsf{fresh},\ket{\qsh},\widetilde{\csh}).
    \]

    \item \underline{$\Eval(\pk,C,\ket{\ct})$.}
    If $\ket{\ct}$ is tagged $\mathsf{fresh}$, parse it as $(\mathsf{fresh},\ket{\qsh},\widetilde{\csh})$ and, for every computational-basis value $z$ of the quantum share, define the polynomial-size classical circuit
    \[
        C_z(\csh)\coloneqq C\bigl(\SQSS.\Reconstruct(\csh,\ket{z})\bigr).
    \]
    Coherently and reversibly compute $\widetilde{y}_z\gets\FHE.\Eval(\pk,C_z,\widetilde{\csh})$ into a fresh output register controlled on $z$, and tag that register $\mathsf{eval}$.
    If $\ket{\ct}$ is tagged $\mathsf{eval}$, parse it as $(\mathsf{eval},\widetilde{y})$, coherently and reversibly compute
    \[
        \widetilde{y}'\gets\FHE.\Eval(\pk,C,\widetilde{y})
    \]
    into a fresh output register, and tag that register $\mathsf{eval}$.
    In either case, output the tagged register as $\ket{\widetilde{\ct}}$ and retain the input ciphertext and work registers so that evaluation can later be uncomputed.

    \item \underline{$\Dec(\sk,\ket{\ct})$.}
    If $\ket{\ct}$ is tagged $\mathsf{fresh}$, parse it as $(\mathsf{fresh},\ket{\qsh},\widetilde{\csh})$, compute $\csh\gets\FHE.\Dec(\sk,\widetilde{\csh})$, and output $\SQSS.\Reconstruct(\csh,\ket{\qsh})$.
    If $\ket{\ct}$ is tagged $\mathsf{eval}$, parse it as $(\mathsf{eval},\widetilde{y})$, run $\FHE.\Dec(\sk,\widetilde{y})$ coherently, and measure its plaintext output in the computational basis.

    \item \underline{$\Del(\ket{\ct})$.}
    Parse a fresh ciphertext as $(\mathsf{fresh},\ket{\qsh},\widetilde{\csh})${} and output $\SQSS.\Del(\ket{\qsh})$.
    On an evaluated ciphertext, output $\bot$.

    \item \underline{$\Ver(\verkey,\cert)$.}
    Output $\SQSS.\Ver(\verkey,\cert)$.
\end{itemize}
\end{construction}

\begin{theorem}\label{thm:sq-fhe-cd}
Assume there exists a post-quantum semantically secure classical fully homomorphic encryption scheme and a 2-out-of-2 semi-quantum secret sharing scheme with publicly verifiable $\calZ$-deletion security for every semantically hiding distribution $\calZ$ whose reconstruction algorithm is PPT and diagonal in the computational basis.
Then the scheme in \Cref{constr:sq-fhe-cd} is a semi-quantum fully homomorphic encryption scheme with publicly verifiable certified deletion as in \Cref{def:sq-fhe-cd}.
\end{theorem}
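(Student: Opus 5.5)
The proof will follow the template of \Cref{thm:sq-pke-cd}, adding two FHE-specific properties: evaluation correctness and compactness. Ciphertext privacy and deletion security only involve fresh ciphertexts, so they will be inherited almost verbatim from the PKE compiler with $\FHE$ in place of $\PKE$.

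\emph{Correctness and deletion correctness.} For a fresh ciphertext these follow from $\FHE$ decryption correctness together with $\SQSS$ correctness, exactly as in \Cref{thm:sq-pke-cd}. The slightly modified $\SQSS$ (extra $H$ in $\Share$ and at the start of $\Del$, with $y_A$ placed in a share) still satisfies reconstruction and deletion correctness, since the two Hadamards cancel. I will also check that $\Reconstruct$ becomes a classical function of a computational-basis measurement of $\ket{\qsh}$: measuring $H\ket{\qsh}$ in the computational basis yields $d_A$ with $d_A\cdot(1\concat s_A)=p_B$.

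\emph{Evaluation correctness.} Write the stored quantum share as $\sum_z \beta_z\ket{z}$. Because $\Reconstruct$ is diagonal in the computational basis, for each $z$ in the support (all outcomes $d_A$ give the same reconstructed message, up to negligible error) the plaintext $\SQSS.\Reconstruct(\csh,\ket{z})$ equals $x$. FHE evaluation correctness then gives, for each $z$, that $\widetilde{y}_z$ decrypts to $C(x)$ except with negligible probability. Coherent decryption followed by measuring the plaintext therefore outputs $C(x)$ with overwhelming probability. To justify this I will bound the amplitude placed on bad branches by a union/averaging argument over $z$ weighted by $|\beta_z|^2$, using that FHE correctness holds for every fixed input. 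Repeated evaluation on $\mathsf{eval}$-tagged ciphertexts follows from FHE correctness for composed circuits.

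\emph{Compactness.} The evaluated ciphertext is a register holding FHE ciphertexts $\widetilde{y}_z$, each of size bounded by FHE compactness. Decrypting it is coherent $\FHE.\Dec$, whose cost is independent of $C$.

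\emph{Privacy and deletion security.} For ciphertext privacy I use hybrids: replace $\widetilde{\csh}$ by an encryption of zeros using FHE semantic security, then invoke $\SQSS$ privacy. For deletion security I reduce to publicly verifiable $\calZ$-deletion of $\SQSS$ with $\calZ_1$ outputting $\pk$ and $\calZ_2(\csh;r)=\FHE.\Enc(\pk,\csh)$. This choice has adaptive semantic hiding by IND-CPA of $\FHE$. The reduction uses the same bit-by-bit hybrid and embedding as the proof of \Cref{thm:sq-pke-cd}, and it recovers $\sk$ from $r$ after a valid certificate. $\Eval$ is a public operation the adversary can perform itself, so it does not change the security games.

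\emph{Main obstacle.} The step I expect to be delicate is evaluation correctness in superposition. The plan is to show that the honest quantum share is supported, up to negligible trace distance, only on $z$ with $\Reconstruct(\csh,\ket{z})=x$. I will then apply FHE correctness pointwise and use the gentle measurement lemma to argue that measuring the decrypted plaintext is nearly deterministic. This should also let the evaluator uncompute and later delete, as the definition requires.
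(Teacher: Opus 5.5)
Your proposal is correct and takes essentially the same route as the paper. The fresh-ciphertext properties (correctness, privacy, deletion correctness and deletion security) are inherited from \Cref{thm:sq-pke-cd} with $\FHE$ in place of $\PKE$. Evaluation correctness comes from expanding the quantum share in the computational basis: diagonal reconstruction plus $\SQSS$ correctness puts overwhelming weight on branches $z$ that reconstruct to $x$, and $\FHE$ correctness is then applied branch by branch. Compactness is inherited directly from $\FHE$. Your weighted averaging over $z$ and your re-check of the Hadamard-modified sharing scheme only spell out steps that the paper states in a line.
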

\begin{proof}
The fresh-ciphertext algorithms are exactly \Cref{constr:sq-pke-cd} instantiated with $\PKE=\FHE$.
Therefore correctness, ciphertext privacy, deletion correctness, and deletion security follow from \Cref{thm:sq-pke-cd}.

For evaluation correctness, write the quantum share in the computational basis.
Because $\SQSS.\Reconstruct$ is diagonal in this basis, the coherent controlled evaluation acts independently on each basis value $z$, and correctness of $\SQSS$ implies that the total weight on values satisfying $\SQSS.\Reconstruct(\csh,\ket{z})=x$ is overwhelming.
For each such value, correctness of $\FHE$ implies that decrypting $\FHE.\Eval(\pk,C_z,\widetilde{\csh})$ outputs $C_z(\csh)=C(x)$ except with negligible probability.
Consequently, coherent evaluation followed by decryption outputs $C(x)$ with overwhelming probability.
On an evaluated ciphertext, evaluation correctness follows directly from that of $\FHE$; iterating this argument handles any polynomial sequence of evaluations.
Compactness follows from the compactness of $\FHE$ because the evaluated ciphertext contains only a tag and one evaluated $\FHE$ ciphertext register, whose size is independent of the size and depth of $C$.
\end{proof}

Post-quantum $\LWE$ implies post-quantum semantically secure fully homomorphic encryption \cite{C:GenSahWat13}.
Our LWE-based semi-quantum secret sharing scheme from \Cref{thm:sq-cd}, with the basis changes described above, has a PPT reconstruction algorithm diagonal in the computational basis.
Thus, \Cref{thm:sq-fhe-cd} gives the following corollary.
\begin{corollary}\label{cor:sq-fhe-cd-lwe}
Assuming the post-quantum hardness of $\LWE$, there exists a semi-quantum fully homomorphic encryption scheme with publicly verifiable certified deletion.
\end{corollary}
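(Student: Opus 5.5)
The corollary follows by plugging the LWE-based ingredients into \Cref{thm:sq-fhe-cd}, so the plan is to verify its two hypotheses.

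\textbf{First hypothesis (classical FHE).} Post-quantum LWE yields a post-quantum semantically secure classical FHE scheme \cite{C:GenSahWat13}. Because its message space is arbitrary bit strings, it contains the classical shares $\csh=(p_B\oplus m,\td_A)$, and the same with $y_A$ appended.

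\textbf{Second hypothesis (the SQSS scheme).} I need a 2-out-of-2 semi-quantum secret sharing scheme with publicly verifiable $\calZ$-deletion for every semantically hiding $\calZ$, and with PPT reconstruction that is diagonal in the computational basis.

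The starting point is the scheme of \Cref{thm:sq-cd}. It is secure under LWE, via \Cref{thm:dual-tcf} and \Cref{coro:wp-np-eps-iext}. For the FHE reduction, the required hiding is adaptive semantic hiding of $\calZ_2(\csh;r)=\FHE.\Enc(\pk,\csh)$ with $\pk\gets\calZ_1(r)$. This follows from IND-CPA security of $\FHE$ exactly as in the proof of \Cref{thm:sq-pke-cd}.

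I then apply the modification described before \Cref{constr:sq-fhe-cd}:
\begin{itemize}
\item The receiver applies $H$ to $\ket{\qsh}$ at the end of $\Share$.
\item $\Del$ first applies $H$, then measures in the computational basis.
\item $y_A$ is placed in $\csh$.
\end{itemize}

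I will check that this preserves every property.
\begin{itemize}
\item \emph{Deletion.} $H$ is a fixed unitary applied locally by the receiver. So any adversary against the modified scheme converts to one against the original by applying $H$ to its received share, and vice versa. Hence privacy and $\calZ$-deletion security transfer verbatim.
\item \emph{Correctness of deletion.} $\Del$ undoes $H$ first, so honest certificates are unchanged.
\item \emph{Reconstruction.} In \Cref{constr:sq-cd-recon}, $\Reconstruct$ already measures $\ket{\qsh}$ in the Hadamard basis. After the modification it measures the stored state directly in the computational basis to get $d_A$. It reads $y_A$ and $\td_A$ from $\csh$, inverts classically, and outputs $d_A\cdot(1\concat s_A)\oplus\tilde m$. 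This is PPT and diagonal in the computational basis.
\end{itemize}

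\textbf{Extension to polynomial-length messages.} I apply the sequential bitwise remark. The hybrid argument over bits handles security with an adaptively chosen $\calZ$, as in the proof of \Cref{thm:sq-pke-cd}.

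Invoking \Cref{thm:sq-fhe-cd} then gives the corollary. The main point needing care is confirming that the basis change leaves deletion security intact. This holds because the transformation is a receiver-local unitary known to all parties.
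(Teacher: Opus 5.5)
Your proposal is correct and follows the paper's own argument. The paper likewise obtains the corollary by combining LWE-based FHE \cite{C:GenSahWat13} with the scheme of \Cref{thm:sq-cd} under the stated basis changes, which make $\Reconstruct$ PPT and diagonal in the computational basis, and then invoking \Cref{thm:sq-fhe-cd}. Your extra checks spell out what the paper leaves implicit: the receiver-local $H$ leaves the sender and $\Ver$ untouched, so the security games are unchanged, and $\FHE.\Enc$ supplies the adaptive semantic hiding that is actually needed.
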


\section{Proofs of No Intrusion with Classical Upload}\label{sec:poni}

In this section we recall the definition of PoNIs from \cite{EC:GoyRai26} and show how they can be generically added to our prior constructions, following the approach of ~\cite[Appendix~A]{DBLP:journals/iacr/KalaiKR26}. Thus, the auditing client does not need quantum capabilities at any point.

\begin{definition}[Semi-Quantum Encryption with Proofs of No Intrusion]\label{def:sq-encryption-poni}
A semi-quantum public-key encryption scheme with a proof of no intrusion consists of the following algorithms and protocols:
\begin{itemize}
    \item \textbf{$\KeyGen(1^\secpar)\rightarrow(\pk,\sk)$} is a classical $\PPT$ key-generation algorithm that outputs a classical public key $\pk$ and a classical secret key $\sk$.

    \item \textbf{$\mathsf{VKGen}(1^\secpar)\rightarrow (\verkey_s, \verkey_p)$} is a classical $\PPT$ verification-key-generation algorithm which outputs a secret verification key $\verkey_s$ and a public one $\verkey_p$.

    \item \textbf{$\Enc\langle S_{\mathsf C}(\pk,m,\verkey_s),R_{\mathsf Q}\rangle(1^\secpar)\rightarrow\ket{\ct}$} is an interactive encryption protocol between a classical $\PPT$ sender $S_{\mathsf C}$ and a quantum $\QPT$ receiver $R_{\mathsf Q}$, using only classical communication.
    The sender inputs $\pk$, $m$, and the secret verification key $\verkey_s$, and the receiver outputs a quantum ciphertext $\ket{\ct}$.

    \item \textbf{$\Dec(\sk,\ket{\ct})\rightarrow m$} is a $\QPT$ decryption algorithm.

    \item \textbf{$\mathsf{PoNI}\langle P(\rho),V(\verkey_p)\rangle(1^\secpar)\rightarrow(\rho',b)$} is an interactive protocol with classical communication between a $\QPT$ prover $P$ holding a state $\rho$ and a classical $\PPT$ verifier $V$ holding the public verification key $\verkey_p$.
    At the end of the protocol, the prover outputs a residual state $\rho'$, and the verifier outputs a decision bit $b\in\{\Accept,\Reject\}$.
\end{itemize}
The scheme satisfies the following properties:
\begin{itemize}
    \item \textbf{Correctness.}
    There exists a negligible function $\mu$ such that for every message $m$, every $(\pk,\sk)$ in the support of $\KeyGen(1^\secpar)$, and every $(\verkey_s, \verkey_p)$ in the support of $\mathsf{VKGen}(1^\secpar)$,
    \[
        \Pr\left[m\gets\Dec(\sk,\ket{\ct}):
        \ket{\ct}\gets\Enc\langle S_{\mathsf C}(\pk,m,\verkey_s),R_{\mathsf Q}\rangle(1^\secpar)\right]
        =1-\mu(\secpar).
    \]

    \item \textbf{Ciphertext Privacy.}
    For every pair of equal-length messages $m_0,m_1$ and every $\QPT$ adversarial receiver $R^*$,
    \begin{gather*}
        \left\{(\pk,\verkey,\calR_{\ct}):\begin{array}{c}
            (\pk,\sk)\gets\KeyGen(1^\secpar)\\
            \verkey\gets\mathsf{VKGen}(1^\secpar)\\
            \calR_{\ct}\gets\Enc\langle S_{\mathsf C}(\pk,m_0,\verkey_s),R^*\rangle(1^\secpar)
        \end{array}\right\}
        \\
        \approx_c
        \\
        \left\{(\pk,\verkey,\calR_{\ct}):\begin{array}{c}
            (\pk,\sk)\gets\KeyGen(1^\secpar)\\
            \verkey\gets\mathsf{VKGen}(1^\secpar)\\
            \calR_{\ct}\gets\Enc\langle S_{\mathsf C}(\pk,m_1,\verkey_s),R^*\rangle(1^\secpar)
        \end{array}\right\}.
    \end{gather*}

    \item \textbf{PoNI Correctness and State Preservation.}
    There exists a negligible function $\mu$ such that, for every honestly generated $(\pk,\sk)$ and $(\verkey_s, \verkey_p)$, every message $m$, and every honestly generated ciphertext state $\ket{\ct}$ that decrypts to $m$ with certainty, an honest execution
    \[
        (\rho'_{\ct},b)\gets\mathsf{PoNI}\langle P(\ket{\ct}),V(\verkey_p)\rangle(1^\secpar)
    \]
    has $b=\Accept$ and leaves $\rho'_{\ct}$ supported on the space of ciphertext states that decrypt to $m$, except with probability $\mu(\secpar)$.
    Thus, the ciphertext may change during the proof, but it remains decryptable to the same message.

    \item \textbf{PoNI Security.}
    The scheme satisfies search security as in \Cref{def:sq-poni-search} or decisional security as in \Cref{def:sq-poni-decision}.
\end{itemize}
\end{definition}

The verification key may be reused for many ciphertexts, including ciphertexts under different public keys.
The security notions permit the prover to complete polynomially many tests before its state is split between the server and an intruder, and require a final accepting proof to be incompatible with the intruder retaining useful plaintext information.

\begin{definition}[Search Security for Proofs of No Intrusion]\label{def:sq-poni-search}
Let the message space be $\{0,1\}^{\secpar}$.
The game $\mathsf{PoNI\text{-}Enc\text{-}S}_n(1^\secpar,A)$ is played by an adversary $A=(A_1,A_P,A_H)$, consisting of three algorithms with auxiliary quantum input, as follows:
\begin{enumerate}
    \item Sample $m\gets\{0,1\}^\secpar$, $(\pk,\sk)\gets\KeyGen(1^\secpar)$, and $\verkey\gets\mathsf{VKGen}(1^\secpar)$.
    \item Run the interactive encryption protocol with $A_1$ as the receiver:
    \[
        \calR\gets\Enc\langle S_{\mathsf C}(\pk,m,\verkey),A_1(\pk)\rangle(1^\secpar).
    \]
    In the publicly verifiable version, $A_1$ also receives $\verkey_p$, where $\verkey = (\verkey_s, \verkey_p)$.
    \item Starting from $\calR$, execute $\mathsf{PoNI}\langle A_1(\calR),V(\verkey)\rangle(1^\secpar)$ a total of $n$ times, updating $\calR$ after each execution.
    If the verifier rejects in any execution, output $0$.
    \item Compute $(\calH,\calP)\gets A_1(\calR)$.
    \item Run
    \[
        (\calP',b)\gets\mathsf{PoNI}\langle A_P(\calP),V(\verkey_p)\rangle(1^\secpar).
    \]
    \item Compute $m'\gets A_H(\calH,\sk,\verkey)$.
    \item Output $1$ if $b=\Accept$ and $m'=m$, and output $0$ otherwise.
\end{enumerate}
The proof of no intrusion has $n$-time search security if, for every $\QPT$ adversary $A$,
\[
    \Pr\left[\mathsf{PoNI\text{-}Enc\text{-}S}_n(1^\secpar,A)=1\right]
    =\negl.
\]
It has search security if it has $n$-time search security for every polynomially bounded $n$.
If $A_1$ and $A_P$ are required to be $\QPT$ but $A_H$ may be unbounded, then the corresponding guarantee is called everlasting search security.
\end{definition}

\begin{definition}[Decisional Security for Proofs of No Intrusion]\label{def:sq-poni-decision}
For a message bit $m\in\{0,1\}$ and a non-negative integer $n$, the game $\mathsf{PoNI\text{-}Sec}_n(1^\secpar,m,A)${} is played by a stateful adversary $A$ with auxiliary quantum input as follows:
\begin{enumerate}
    \item Sample $(\pk,\sk)\gets\KeyGen(1^\secpar)$ and $\verkey\gets\mathsf{VKGen}(1^\secpar)$.
    \item Run the interactive encryption protocol with $A$ as the receiver:
    \[
        \calR\gets\Enc\langle S_{\mathsf C}(\pk,m,\verkey),A(\pk)\rangle(1^\secpar).
    \]
    In the publicly verifiable version, $A$ also receives $\verkey_p$.
    \item Starting from $\calR$, execute $\mathsf{PoNI}\langle A(\calR),V(\verkey)\rangle(1^\secpar)$ a total of $n$ times, updating $\calR$ after each execution.
    If the verifier rejects in any execution, output $\bot$.
    \item Apply $A$ to split its state into registers $(\calR_{\Dec},\calR_P)$.
    \item Run
    \[
        (\calR'_P,b)\gets\mathsf{PoNI}\langle A(\calR_P),V(\verkey)\rangle(1^\secpar).
    \]
    If $b=\Reject$, output $\bot$; otherwise, output $(\calR_{\Dec},\sk,\verkey)$.
\end{enumerate}
The proof of no intrusion has $n$-time decisional security if, for every $\QPT$ adversary $A$,
\[
    \left\{\mathsf{PoNI\text{-}Sec}_n(1^\secpar,0,A)\right\}
    \approx_c
    \left\{\mathsf{PoNI\text{-}Sec}_n(1^\secpar,1,A)\right\}.
\]
It has decisional security if it has $n$-time decisional security for every polynomially bounded $n$.
If the two distributions are statistically indistinguishable, then the corresponding guarantee is called everlasting decisional security.
\end{definition}

Unlike certified deletion, a proof of no intrusion is non-destructive: after an accepting execution, the server retains a ciphertext that decrypts to the same message and can be tested again.

Appendix~A of \cite{DBLP:journals/iacr/KalaiKR26} gives the following generic compiler from public-key encryption with publicly verifiable certified deletion to public-key encryption with proofs of no intrusion.
Let
\[
    \begin{aligned}
        \PKE_{\mathsf{cd}}=\bigl(&\PKE_{\mathsf{cd}}.\KeyGen,\PKE_{\mathsf{cd}}.\Enc,\\
        &\PKE_{\mathsf{cd}}.\Dec,\PKE_{\mathsf{cd}}.\Del,\PKE_{\mathsf{cd}}.\Ver\bigr)
    \end{aligned}
\]
be a semi-quantum public-key encryption scheme with publicly verifiable certified deletion.
Let $\mathsf{Sig}=(\mathsf{Sig.KeyGen},\mathsf{Sig.Sign},\mathsf{Sig.Ver})$ be a post-quantum signature scheme, and let $\mathsf{SPA}_{\NP}$ be the state-preserving argument of knowledge for $\NP$ from \cite{DBLP:journals/iacr/KalaiKR26}.

\begin{construction}[Generic PoNI Compiler]\label{constr:sq-poni-compiler}
The compiled scheme is works as follows:
\begin{itemize}
    \item \underline{$\KeyGen(1^\secpar)$.}
    Output $(\pk,\sk)\gets\PKE_{\mathsf{cd}}.\KeyGen(1^\secpar)$.

    \item \underline{$\mathsf{VKGen}(1^\secpar)$.}
    Sample
    \[
        (\mathsf{pk}_{\mathsf{Sig}},\mathsf{sk}_{\mathsf{Sig}})
        \gets\mathsf{Sig.KeyGen}(1^\secpar)
    \]
    and output $(\verkey_s, \verkey_p) \coloneqq(\mathsf{sk}_{\mathsf{Sig}},\mathsf{pk}_{\mathsf{Sig}})$.

    \item \underline{$\Enc\langle S_{\mathsf C}(\pk,m,\verkey_s),R_{\mathsf Q}\rangle(1^\secpar)$.}
    Parse $\verkey_s \mathsf{sk}_{\mathsf{Sig}}$ and run
    \[
        (\verkey',\ket{\ct})
        \gets
        \PKE_{\mathsf{cd}}.\Enc\langle S_{\mathsf C}(\pk,m),R_{\mathsf Q}\rangle(1^\secpar).
    \]
    The sender computes $\sigma\gets\mathsf{Sig.Sign}(\mathsf{sk}_{\mathsf{Sig}},\verkey')$ and sends $(\verkey',\sigma)$ to the receiver.
    The receiver outputs $\ket{\ct'}\coloneqq(\ket{\ct},\verkey',\sigma)$.

    \item \underline{$\Dec(\sk,\ket{\ct'})$.}
    Parse $\ket{\ct'}=(\ket{\ct},\verkey',\sigma)$ and output $\PKE_{\mathsf{cd}}.\Dec(\sk,\ket{\ct})$.
\end{itemize}
The proof of no intrusion proceeds as follows:
\begin{enumerate}
    \item The prover parses $\ket{\ct'}=(\ket{\ct},\verkey',\sigma)$, sends $(\verkey',\sigma)$ to the verifier, and coherently computes $\PKE_{\mathsf{cd}}.\Del(\ket{\ct})$ to obtain
    \[
        \sum_{\cert}\alpha_{\cert}\ket{\cert,\mathsf{garbage}_{\cert}}.
    \]
    \item The verifier parses $\verkey_p = \mathsf{pk}_{\mathsf{Sig}}$ and rejects unless $\mathsf{Sig.Ver}(\mathsf{pk}_{\mathsf{Sig}},\sigma,\verkey')$ accepts.
    \item The parties execute $\mathsf{SPA}_{\NP}$ for the statement $\verkey'$ and the language
    \[
        \left\{\verkey':\exists\cert\text{ such that }\PKE_{\mathsf{cd}}.\Ver(\verkey',\cert)=\Accept\right\},
    \]
    using the certificate register as the prover's witness register.
    \item The prover uncomputes the coherent implementation of $\PKE_{\mathsf{cd}}.\Del$.
\end{enumerate}
\end{construction}

\begin{theorem}[Generic PoNI Compiler \cite{DBLP:journals/iacr/KalaiKR26}]\label{thm:generic-pvcd-to-poni}
Assume the existence of public-key encryption with computational (resp. everlasting) publicly verifiable certified deletion, post-quantum signatures, and the state-preserving argument of knowledge for $\NP$ from \cite{DBLP:journals/iacr/KalaiKR26}.

Then there exists a public-key encryption scheme with computational (resp. everlasting) search proofs of no intrusion. Moreover, there exists a scheme with decisional PoNIs under the same assumptions in the quantum random oracle model.
\end{theorem}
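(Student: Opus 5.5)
We propose to prove \Cref{thm:generic-pvcd-to-poni} by analyzing \Cref{constr:sq-poni-compiler} directly, handling correctness, privacy, and search security in turn.

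\textbf{Correctness and state preservation.} For PoNI correctness we use the deletion correctness of $\PKE_{\mathsf{cd}}$. The coherent run of $\PKE_{\mathsf{cd}}.\Del$ produces a superposition whose certificate register is supported, up to negligible weight, on certificates accepted under $\verkey'$. The honest prover therefore holds a valid witness superposition. By the completeness and witness preservation of $\mathsf{SPA}_{\NP}$ (\Cref{thm:witness-preserving-NP}), the verifier accepts and the joint state moves by only negligible trace distance. Uncomputing $\Del$ then returns a state negligibly close to $\ket{\ct}$, which still decrypts to $m$. The verifier's signature check passes by signature correctness.

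\textbf{Ciphertext privacy.} This follows from the ciphertext privacy of $\PKE_{\mathsf{cd}}$. The reduction samples the signature keys itself and signs $\verkey'$ after the encryption protocol.

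\textbf{Search security.} We reduce to publicly verifiable deletion security of $\PKE_{\mathsf{cd}}$ in three steps.
\begin{itemize}
\item We first handle the $n$ warm-up PoNIs. These are simulated by the reduction: it acts as verifier using only the public $\verkey'$ and $\mathsf{pk}_{\mathsf{Sig}}$, which it generates itself. This is possible because deletion is publicly verifiable and the reduction is given $\verkey'$.
\item Next we handle the final PoNI. By unforgeability of $\mathsf{Sig}$, except with negligible probability $A_P$ must present the real $\verkey'$. Since $\verkey'$ is signed, the statement proven is the real one.
\item We then extract a certificate from the final proof and run $A_H$ on $\calH$, $\sk$, $\verkey$. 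This gives a joint experiment that outputs a valid certificate and, simultaneously, the guess $m'$.
\end{itemize}
If $A$ succeeds with probability $\delta$, the reduction wins a search-style deletion game with probability $\mathrm{poly}(\delta)$ (or $\delta-\epsilon$ with $\epsilon$-extraction). For a uniform $\secpar$-bit message, this contradicts deletion security via the standard implication from indistinguishability-based to search-based security. The everlasting variant is identical, with $A_H$ unbounded.

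\textbf{Main obstacle.} The hard step is extracting a certificate from $A_P$ while keeping $\calH$ usable. The extraction runs on $\calP$ alone, so it leaves the reduced state of $\calH$ unchanged and no rewinding of $\calH$ is needed. First we would try the guaranteed-extraction approach from \cite{FOCS:LomMaSpo22} (cf.\ \Cref{coro:wp-np-eps-iext}), measuring the extracted witness. That measurement is allowed here, since we no longer care about preserving the prover's state. The error is $\epsilon$ for every inverse polynomial $\epsilon$, which yields negligible overall loss.

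\textbf{Decisional variant.} Here extraction alone is insufficient, because distinguishing advantage does not survive knowledge extraction. Following \cite{DBLP:journals/iacr/KalaiKR26}, we would move to the QROM: hash the certificate-related randomness to derive the mask, so that a distinguisher must query the oracle and the certificate can be extracted by measuring oracle queries.
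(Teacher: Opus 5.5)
The paper does not prove this theorem; it imports it from Appendix~A of \cite{DBLP:journals/iacr/KalaiKR26}, and the overview only says that security comes from a non-destructive argument of knowledge of a valid certificate. Your correctness, privacy, and search-security arguments follow that route and are sound. Public verifiability lets the reduction run the warm-up PoNIs from $\verkey'$ and $\mathsf{pk}_{\mathsf{Sig}}$. The signature pins the final statement to $\verkey'$. A uniform $\secpar$-bit message turns indistinguishability into search.

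Two tightenings are needed. First, what you need is not that the reduced state of $\calH$ is unchanged, but the joint bound
$\Pr[\mathsf{acc}\wedge\mathsf{ext}\wedge m'=m]\geq\Pr[\mathsf{acc}\wedge m'=m]-\Pr[\mathsf{acc}\wedge\neg\mathsf{ext}]\geq\delta-\epsilon-\negl$.
This holds because $\Pi_{\Ext}$ and the witness measurement act only on the $\calP$ side, so they commute with $A_H$. Second, the $\epsilon$-guaranteed extractor of \Cref{coro:wp-np-eps-iext} belongs to a modified protocol with an extra initial commitment. Either instantiate the compiler with that protocol, or use the original argument's knowledge extractor; for search, $\mathrm{poly}(\delta)$ extraction suffices.

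The genuine gap is the decisional claim. \emph{Hashing certificate-related randomness and extracting the certificate from oracle queries} fails for two reasons. It is unavailable for a generic $\PKE_{\mathsf{cd}}$, which exposes no such randomness to hash. It also targets the wrong object: a valid certificate is exactly what the adversary is supposed to hold after deletion, so extracting one from $A_H$'s queries contradicts nothing. The \cite{AC:HMNY21}-style argument works only because of construction-specific structure such as an adaptive hardcore bit.

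The standard generic route is search-to-decision on top of the scheme you just proved search-secure:
encrypt a uniform $r\in\{0,1\}^\secpar$ with it and attach $H(r)\oplus m$.
By one-way-to-hiding, any distinguishing advantage forces noticeable query weight on $r$.
Queries made before $\sk$ is released carry negligible weight by ciphertext privacy, since the reduction can verify every PoNI from public data.
So the weight sits on $A_H$'s queries, which see anything about $r$ only after an accepting final PoNI.
Measuring a random one of those queries outputs $r$ in the search game, contradicting the search security you already proved.

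Your stated reason for needing the QROM also conflicts with the tools you use. With the $\epsilon$-guaranteed extractor, conditioning on extraction success moves the decisional output $(\calR_{\Dec},\sk,\verkey)$ by at most the simulation error plus $\Pr[\mathsf{acc}\wedge\neg\mathsf{ext}]\leq\epsilon$. So your search reduction would handle the decisional game directly, exactly as the paper's forward-secrecy proof for $\DecDel$ in \Cref{sec:deldec} does. The QROM route is the fallback only when the extractor's success is merely polynomially related to the acceptance probability.
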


By plugging in the construction of semi-quantum PKE with publicly verifiable certified deletion from \Cref{sec:sqenc}, we obtain semi-quantum PKE with PoNIs. 
We also mention that the same transformation can be applied to the other primitives constructed in this section.

\begin{corollary}\label{cor:sq-pke-poni}
Assuming the post-quantum hardness of $\LWE$, there exists a semi-quantum public-key encryption scheme with computational search PoNIs.
Furthermore, there exists a semi-quantum public-key encryption scheme with decisional PoNIs under the same assumption in the quantum random oracle model.
\end{corollary}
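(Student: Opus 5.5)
The plan is to obtain \Cref{cor:sq-pke-poni} by composing \Cref{thm:generic-pvcd-to-poni} with \Cref{cor:sq-pke-cd-lwe}, after checking that every hypothesis of the compiler is met from LWE alone. Concretely, I will instantiate $\PKE_{\mathsf{cd}}$ with \Cref{constr:sq-pke-cd}, where $\SQSS$ is the scheme of \Cref{thm:sq-cd} and $\PKE$ is any LWE-based post-quantum semantically secure PKE. By \Cref{thm:sq-pke-cd} this is a semi-quantum PKE with publicly verifiable (computational) certified deletion. The remaining ingredients of \Cref{constr:sq-poni-compiler} also follow from LWE:
\begin{itemize}
    \item post-quantum signatures, e.g.\ lattice-based signatures or signatures from one-way functions, which LWE implies;
    \item the state-preserving argument $\mathsf{SPA}_{\NP}$, by \Cref{thm:witness-preserving-NP} instantiated with the TCF of \Cref{thm:dual-tcf}.
\end{itemize}

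Next I will check that the compiled scheme is still semi-quantum. In the compiled $\Enc$, the sender runs $\PKE_{\mathsf{cd}}.\Enc$ and then sends the classical pair $(\verkey',\sigma)$, so all communication is classical. The PoNI protocol consists of sending $(\verkey',\sigma)$, running $\mathsf{SPA}_{\NP}$ with a classical verifier, and a local uncomputation by the prover, so the client is classical there as well. The only semantic requirements the compiler places on $\PKE_{\mathsf{cd}}$ are met:
\begin{itemize}
    \item certificates are classical;
    \item $\Ver$ is a public $\PPT$ predicate, namely checking $f_{\pubparams_A}(b;x_b^{(A)})=y_A$ for each bit, so the language $\{\verkey' : \exists \cert,\ \Ver(\verkey',\cert)=\Accept\}$ is in $\NP$;
    \item $\Del$ is a computational-basis measurement, so its coherent implementation is simply copying the basis value, and it can be uncomputed.
\end{itemize}

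The step I expect to be the main obstacle is confirming that \Cref{thm:generic-pvcd-to-poni}, proved in \cite{DBLP:journals/iacr/KalaiKR26} for encryption with a non-interactive $\Enc$, still applies when $\Enc$ is an interactive protocol with an adversarial receiver. My plan is to inspect its security reduction. From a PoNI adversary $(A_1,A_P,A_H)$ it builds a certified-deletion adversary that uses the knowledge extractor of $\mathsf{SPA}_{\NP}$ on $A_P$ to obtain a valid certificate, and then hands the intruder's register $\calH$ together with $\sk$ to $A_H$.

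That reduction only uses the encryption oracle as a black box that produces the adversary's register and $\verkey'$. It can therefore play the receiver role in our interactive $\Enc$ by forwarding messages to $A_1$, exactly as the reduction in \Cref{thm:sq-pke-cd} embeds the $\SQSS$ challenger. The signature on $\verkey'$ is produced by the reduction itself, since $\verkey_s$ is sampled independently. The $n$ preliminary PoNI executions are simulated honestly with $\verkey_p$, and search security holds in the plain model.

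Finally, for decisional PoNIs I will invoke the QROM part of \Cref{thm:generic-pvcd-to-poni}. That part applies the same compiler with a random-oracle-based hardening of the message, which is again insensitive to how the ciphertext was uploaded.
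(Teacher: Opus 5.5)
Your proposal is correct and follows the paper's route: the corollary is obtained by plugging the semi-quantum PKE with publicly verifiable certified deletion from \Cref{sec:sqenc} (i.e.\ \Cref{cor:sq-pke-cd-lwe}) into the generic compiler of \Cref{thm:generic-pvcd-to-poni}, with decisional security coming from that theorem's QROM part. Your additional checks are details the paper leaves implicit rather than a different argument: that all communication stays classical, that certificates are classical and NP-verifiable, that $\Del$ can be run coherently and uncomputed, and that the reduction tolerates an interactive $\Enc$.
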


\section{Decryption with Deletion}\label{sec:deldec}

Finally, we show how to equip our constructions with an interactive classical decryption protocol that also deletes the data. At the end of the protocol, the client learns the encrypted message while the server cannot decrypt even if they are later given the decryption key. Putting this together with classical upload and PoNIs, our constructions allow classical management of data throughout its entire useful lifespan.

\begin{definition}
    A semi-quantum encryption scheme has \textbf{decryption-with-deletion} if there is an interactive protocol \(\DecDel\), using only classical communication, between a classical \(\PPT\) receiver (or decryptor) \(R_C\) and a quantum \(\QPT\) sender (or server) \(S_Q\), with the following syntax and properties.
    \begin{itemize}
        \item \textbf{Syntax.}
        The decryptor inputs the decryption key \(\sk\) and the verification key \(\verkey\) associated with the ciphertext.
        The server inputs a quantum ciphertext \(\ket{\ct}\) and \(\verkey\).
        At the end, the decryptor outputs a decision \(d\in\{\Accept,\Reject\}\) and a message \(m'\in\{0,1,\bot\}\), with \(m'=\bot\) on rejection.
        We denote its outputs by
        \[
            (d,m')\gets\DecDel\langle S_Q(\ket{\ct},\verkey),R_C(\sk,\verkey)\rangle(1^\secpar).
        \]

        \item \textbf{Correctness.}
        For all sufficiently large \(\secpar\in\bbN\), all \((\pk,\sk)\) in the support of \(\KeyGen(1^\secpar)\), and all \(m\in\{0,1\}\),
        \[
            \Pr\left[\begin{array}{c}d=\Accept\\m'=m\end{array}:\begin{array}{l}
                (\verkey,\ket{\ct})\gets\Enc\langle S_C(\pk,m),R_Q\rangle(1^\secpar)\\
                (d,m')\gets\DecDel\langle S_Q(\ket{\ct},\verkey),R_C(\sk,\verkey)\rangle(1^\secpar)
            \end{array}\right]\geq1-\negl.
        \]
The quantum party first \emph{receives} the ciphertext, playing \(R_Q\), and later \emph{sends} the decryption information, playing \(S_Q\).

        \item \textbf{Forward Secrecy.}
        For every \(\QPT\) adversary \((R_Q^*,S_Q^*)\), allowing auxiliary quantum input,
        \[
            \mathsf{Del\text{-}Dec\text{-}Game}_{\mathrm{enc}}(1^\secpar,0,(R_Q^*,S_Q^*))
            \approx_c
            \mathsf{Del\text{-}Dec\text{-}Game}_{\mathrm{enc}}(1^\secpar,1,(R_Q^*,S_Q^*)),
        \]
        where \(\mathsf{Del\text{-}Dec\text{-}Game}_{\mathrm{enc}}(1^\secpar,m,(R_Q^*,S_Q^*))\) is the following experiment.
        \begin{enumerate}
            \item Sample \((\pk,\sk)\gets\KeyGen(1^\secpar)\).
            \item Run \((\verkey,\calR_{\ct})\gets\Enc\langle S_{\mathsf C}(\pk,m),R_Q^*\rangle(1^\secpar)\), where \(\calR_{\ct}\) includes the receiver's entire retained state.
            If encryption aborts, output \(\bot\).
            \item Run
            \[
                (d,m')\gets\DecDel\langle S_Q^*(\calR_{\ct},\pk,\verkey),R_C(\sk,\verkey)\rangle(1^\secpar),
            \]
            and let \(\calR_{\Del}\) be the server's entire residual state, including its classical transcript, \(\pk\), and \(\verkey\).\footnote{One may also define a version where \(\verkey\) is not revealed to \(S_Q^*\).
            We do not call these versions ``publicly verifiable'' or ``privately verifiable'': the verification is interactive, and its decision need not be publicly verifiable even when \(\verkey\) is public.}
            \item If \(d=\Accept\), output \((\calR_{\Del},\sk)\); otherwise, output \(\bot\).
        \end{enumerate}
        The decryptor's output \(m'\) and its private session state are not revealed in this experiment.
        In particular, exposure of the long-term decryption key does not include the fresh trapdoor used only during \(\DecDel\).
    \end{itemize}
\end{definition}

We augment the encryption scheme in \Cref{constr:sq-pke-cd}, instantiated with the concrete sharing scheme in \Cref{constr:sq-cd-keygen}, with the following protocol.
For an honestly generated ciphertext, write
\[
    \begin{aligned}
        \ket{\ct}&=(\ket{\qsh},\widetilde{\csh}),
        &\widetilde{\csh}&\gets\PKE.\Enc(\pk,(\widetilde m,\td_A)),\\
        \verkey&=(\pubparams_A,y_A),
        &\widetilde m&=p\oplus m,
    \end{aligned}
\]
where, up to a global phase,
\[
    \ket{\qsh}=\frac{1}{\sqrt2}\left(\ket{0,x_0^{(A)}}+(-1)^p\ket{1,x_1^{(A)}}\right).
\]
Here \(p=d_B\cdot(x_0^{(B)}\oplus x_1^{(B)})\) is the phase produced by sharing.
We use the same lossy-mode claw-state preparation and inversion guarantees as in \Cref{constr:sq-cd-keygen}.
In particular, the construction and proof below require each valid lossy-mode image to have one preimage on each branch.

\begin{construction}[Decryption with Deletion]
    The algorithms \(\KeyGen,\Enc,\Dec,\Del,\Ver\) are those of \Cref{constr:sq-pke-cd} with the concrete sharing scheme above.
    The additional protocol \(\DecDel\) proceeds as follows.
    \begin{enumerate}
        \item \textbf{Server.}
        Parse \(\ket{\ct}=(\ket{\qsh},\widetilde{\csh})\) and send the classical component \(\widetilde{\csh}\) to the decryptor.
        The decryptor uses its input \(\verkey=(\pubparams_A,y_A)\) throughout the protocol and postpones decryption of \(\widetilde{\csh}\) until the last step.

        \item \textbf{Decryptor.}
        Independently sample \((\pubparams_C,\td_C)\gets\Setup_{\calF}(1^\secpar,\lossy)\) and send \(\pubparams_C\) to the server.

        \item \textbf{Server.}
        Append \(\ket{\rand_{\pubparams_C}}\), evaluate \(f_{\pubparams_C}\) coherently using the branch bit of \(\ket{\qsh}\), and measure the function-output register to obtain \(y_C\).
        Send \(y_C\) to the decryptor.
        For an honest server, the remaining state is, up to negligible error and a global phase,
        \[
            \frac{1}{\sqrt2}\left(\ket{0,x_0^{(A)},x_0^{(C)}}+(-1)^p\ket{1,x_1^{(A)},x_1^{(C)}}\right),
        \]
        where \(f_{\pubparams_C}(0;x_0^{(C)})=f_{\pubparams_C}(1;x_1^{(C)})=y_C\).

        \item \textbf{Both parties.}
        Execute \(\SPNP\),{} the simulation-extractable witness-preserving argument from \Cref{coro:wp-np-eps-iext}, for the statement \((y_A,y_C)\) and language
        \[
            \lang_{\pubparams_A,\pubparams_C}
            =\left\{(y_A,y_C):\exists(b,x^{(A)},x^{(C)})\text{ such that }
            \begin{array}{l}
                f_{\pubparams_A}(b;x^{(A)})=y_A,\\
                f_{\pubparams_C}(b;x^{(C)})=y_C
            \end{array}\right\}.
        \]
        The server acts as the prover, using its three remaining registers as the witness register, and the decryptor acts as the verifier.
        If the verifier rejects, the decryptor outputs \((\Reject,\bot)\).

        \item \textbf{Server.}
        Measure the branch bit and the \(A\)- and \(C\)-preimage registers in the Hadamard basis, obtaining \(b'\concat d_A\concat d_C\), and send this string to the decryptor.

        \item \textbf{Decryptor.}
        Compute \(\csh\gets\PKE.\Dec(\sk,\widetilde{\csh})\) and parse \(\csh=(\widetilde m,\td_A)\).
        For \(b\in\{0,1\}\), compute
        \[
            x_b^{(A)}\gets\Inv(\td_A,b,y_A)
            \qquad\text{and}\qquad
            x_b^{(C)}\gets\Inv(\td_C,b,y_C).
        \]
        Reject with output \((\Reject,\bot)\) if a message is malformed, decryption or inversion fails, or a recovered preimage fails its corresponding evaluation equation.
        Here \(b'\) must be a bit, and \(d_A,d_C\) must have the corresponding preimage lengths.
        Otherwise, set
        \[
            p_A=d_A\cdot\left(x_0^{(A)}\oplus x_1^{(A)}\right)
            \qquad\text{and}\qquad
            p_C=d_C\cdot\left(x_0^{(C)}\oplus x_1^{(C)}\right),
        \]
        and output \((\Accept,\widetilde m\oplus b'\oplus p_A\oplus p_C)\).
        This output is local to the decryptor; no further message is sent to the server.
    \end{enumerate}
    The decryptor uses a fresh \(C\)-instance for every execution and erases \(\td_C\), its generation coins, and the recovered \(C\)-preimages after the execution.
\end{construction}

\begin{theorem}
    Assuming the post-quantum security of LWE, there exists a semi-quantum encryption scheme with publicly-verifiable certified deletion which also has decryption-with-deletion.
\end{theorem}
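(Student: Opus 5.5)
We take the scheme of \Cref{constr:sq-pke-cd} instantiated with the sharing protocol of \Cref{constr:sq-cd-keygen} and augmented with $\DecDel$. Publicly verifiable certified deletion is already given by \Cref{thm:sq-pke-cd} together with \Cref{thm:sq-cd}, so it remains to prove correctness and forward secrecy of $\DecDel$.

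\textbf{Correctness.} We redo the coset-state computation from the correctness proof of \Cref{thm:sq-cd}, now for three registers. By balance (\Cref{def:tcf-balance}), after measuring $y_C$ the honest state is negligibly close to
\[
\tfrac{1}{\sqrt2}\left(\ket{0,x_0^{(A)},x_0^{(C)}}+(-1)^p\ket{1,x_1^{(A)},x_1^{(C)}}\right).
\]
Witness preservation (\Cref{thm:witness-preserving-NP}) disturbs this state only negligibly. The state is a coset state for $S=\{0,\,1\concat s_A\concat s_C\}$ with phase offset $p$. A Hadamard measurement therefore yields $b'\concat d_A\concat d_C$ satisfying
\[
b'\oplus d_A\cdot s_A\oplus d_C\cdot s_C=p .
\]
Hence $\widetilde m\oplus b'\oplus p_A\oplus p_C=m$.

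\textbf{Forward secrecy.} We reduce to the publicly verifiable $\calZ$-deletion security of \Cref{thm:sq-cd}, with $\calZ$ the PKE encryption of $\csh$ exactly as in the proof of \Cref{thm:sq-pke-cd}. Fix $\epsilon=1/\poly$. The reduction plays the encryption phase with the adversary $R_Q^*$. It then simulates the decryptor for $\DecDel$ itself, sampling $(\pubparams_C,\td_C)$ on its own. This is possible because every decryptor step before the last one needs neither $\sk$ nor $\td_A$. The final step is local and never sent to the server, so its output does not affect the server's view.

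The reduction replaces the real $\SPNP$ execution by $\Sim$ followed by $\Ext_\gamma$ with $\gamma=\epsilon^2$. By \Cref{lem:iext-close} this costs at most $\epsilon+\epsilon^2$ in distinguishing advantage. Between $\Ext_\gamma$ and $\Ext_\gamma^\dagger$, the reduction measures the extracted witness $(b,x_b^{(A)},x_b^{(C)})$ in the computational basis and keeps $\cert=(b,x_b^{(A)})$, which it submits as the deletion certificate.

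The key claim is that this measurement is undetectable. Since $x_b^{(C)}$ determines $b$, measuring $b$ together with $x_b^{(A)}$ is equivalent to measuring the $C$-preimage register, because $x_b^{(A)}$ is determined by $(b,y_A)$. The $C$-instance is collapsing by \Cref{lem:dual-mode-tcfs-are-collapsing}, and $\td_C$ is never revealed to the server: the decryptor erases it, and the game outputs only $\sk$.

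To make the collapsing reduction for $\pubparams_C$ go through, the challenger must run the rest of the experiment without $\td_C$. It can: the server's messages after the argument are never answered, and at the end the experiment reveals $(\calR_{\Del},\sk)$ but not $m'$. The collapsing challenger can also embed the extracted witness register together with the preceding simulation, since the adversary's state is returned afterward. The $A$- and $B$-instance trapdoors are irrelevant to this step because the collapsing game only concerns $\pubparams_C$. With probability at least $1-\epsilon^2$ the extracted witness is valid, so $\cert$ passes $\Ver$.

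The reduction then outputs the adversary's residual state after $\DecDel$, with $\cert$ attached. In the deletion game it receives $\csh$ and the coins $r$ of $\calZ$, so it can recompute $\sk$ and hand $(\calR_{\Del},\sk)$ to the distinguisher. Consequently, a distinguishing advantage of $3\epsilon$ against forward secrecy yields noticeable advantage against $\calZ$-deletion, a contradiction. Since $\epsilon$ is an arbitrary inverse polynomial, the advantage is negligible.

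\textbf{Main obstacle.} The hardest step is justifying the collapsing measurement inside the indistinguishable-extraction window. Two points need care. First, the decryptor must not leak $\td_C$-dependent information; here this holds because the protocol's final step sends nothing. Second, measuring the whole witness must be undetectable even though $\td_A$ is later revealed through $\sk$. This is resolved by noting that the measured witness is a deterministic function of the $C$-preimage measurement, so only $C$'s collapsing property is invoked.
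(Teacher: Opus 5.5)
Your proposal is correct and follows essentially the same route as the paper: replace the argument by $\Sim$ followed by $\Ext_\gamma$ (via \Cref{lem:iext-close}), then measure the extracted witness using only the collapsing property of the fresh $C$-instance, then submit $(b,x_b^{(A)})$ as the certificate in a reduction to certified deletion. You reduce one level down, to the $\calZ$-deletion game of the sharing scheme with $\calZ$ the PKE encryption, rather than to the PKE-level game; this makes no real difference.

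One small point concerns your collapsing step. The game does reveal the decryptor's accept bit $d$, which depends on inverting $y_C$ with $\td_C$, so simulating it without $\td_C$ needs the paper's standing assumption that every valid lossy-mode image has a preimage on each branch (a validly extracted witness then guarantees both inversions succeed).
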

\begin{proof}
    We first show correctness.
    Witness preservation leaves the honest three-register state negligibly close to its state before the argument, and the verifier accepts with overwhelming probability.
    For the ideal claw state, the amplitude of a Hadamard outcome \(b'\concat d_A\concat d_C\) is proportional to
    \[
        (-1)^{d_A\cdot x_0^{(A)}+d_C\cdot x_0^{(C)}}
        \left(1+(-1)^{p+b'+p_A+p_C}\right).
    \]
    Consequently every outcome with nonzero amplitude satisfies
    \[
        p=b'\oplus p_A\oplus p_C.
    \]
    Decryption and trapdoor inversion therefore give
    \[
        \widetilde m\oplus b'\oplus p_A\oplus p_C
        =(p\oplus m)\oplus p=m.
    \]
    The negligible state-preparation, witness-preservation, and underlying decryption errors contribute only a negligible total failure probability.

    \paragraph{Forward Secrecy.}
    We show that the adversary's distinguishing advantage between $b = 0$ and $b=1$ in $\mathsf{Del\text{-}Dec\text{-}Game}_{\mathrm{enc}}(1^\secpar,b,(R_Q^*,S_Q^*))$ must be less than $\epsilon$ for every $\epsilon = 1/\poly$. Fix any such $\epsilon$.
    
    Consider the following modified decryption procedure $\DecDel'_{\epsilon/2}$ which uses the simulator-extractor of $\SPNP$.
    Let ($\Sim$, $\Ext_\epsilon$) be the QPT simulator and size $\poly[1/\epsilon]$ unitary guaranteed by $\epsilon$-indistinguishable-extraction. Then $\DecDel'_{\epsilon/4}$ does the following:
    \begin{enumerate}
        \item Run $\DecDel$ until the witness-preserving argument.
        \item Run $\Sim$.
        \item Set $\gamma = (\epsilon/4)^2$. Run $\Ext_\gamma$ to extract a witness register $\calW$, measure whether the witness register contains a valid witness, then run $\Ext_\gamma^\dagger$. If the extraction fails, abort.
        \item Complete $\DecDel$.
    \end{enumerate}
    Next, we use $\DecDel'_{\epsilon/2}$ to define the following hybrid experiments:
    \begin{itemize}
        \item $\Hyb_0(0)$ is the original forward security game played with message bit $b=0$.
        \item $\Hyb_1(0)$ is identical to $\Hyb_0(b)$ except it replaces $\DecDel$ with $\DecDel_{\epsilon/4}$.
        \item $\Hyb_2(0)$ is identical to $\Hyb_{2}(0)$ except it measures the extracted witness in the computational basis in between $\Ext_\gamma$ and $\Ext_\gamma^\dagger$.
        \item $\Hyb_{2}(1)$ is identical to $\Hyb_{2}(0)$ except the message bit is changed to $b=1$.
        \item $\Hyb_{1}(1)$ and $\Hyb_{0}(1)$ invert hybrids 2 and 1, so that $\Hyb_0(1)$ is the forward security game played with message bit $b=1$.
    \end{itemize}

    $\Hyb_{0}(b)$ is $(\epsilon/4 + (\epsilon/4)^2)$-indistinguishable from $\Hyb_1(b)$ by \Cref{lem:iext-close} for both $b \in \{0,1\}$. Note that $\epsilon/4 + (\epsilon/4)^2 < 1/3\epsilon$ for all $\epsilon \in (0,1]$.

    $\Hyb_{1}(b)$ is indistinguishable from $\Hyb_{2}(b)$ by the collapsing property of $\pubparams_C$, for both $b\in \{0,1\}$. Note that the forward security game \emph{only} uses $\pubparams_C$, and not any information dependent on its trapdoor.

    $\Hyb_{2}(0)$ is indistinguishable from $\Hyb_{2}(1)$ by the publicly-verifiable certified deletion security of $\Enc$. Explicitly, the reduction works as follows. Take as input $\Enc(b)$ and the verification key $\verkey = (y_A, \pubparams_A)$. Use these to run $\Hyb_2(b)$ until $\sk$ is required, obtaining $(b, x_{b}^{(A)}, x_{b}^{(C)})$ from measuring the extracted witness. Note that $\sk$ is not required if the extraction fails because the experiment aborts immediately and the adversary does not get to guess. If extraction succeeds, send $(b, x_{b}^{(A)})$ to the certified deletion challenger as the certificate. Since this is a valid certificate when extraction succeeds, receive $\sk$ in return. Use $\sk$ to complete $\Hyb_2(b)$ and output the distinguishing guess. Observe that the adversary's view in this reduction is identical to $\Hyb_{2}(b)$ and so the reduction's advantage in the certified deletion game is identical; it must therefore be negligible.

    Combining these facts, the adversary's advantage in distinguishing $\Hyb_0(0)$ from $\Hyb_0(1)$ must be $\leq 2\epsilon/3 + \negl < \epsilon$ for all $\epsilon = 1/\poly$. Thus, the advantage must be negligible. 

\end{proof}

\section{Acknowledgments}

We thank Dakshita Khurana for her helpful discussions during this project.

\section{AI Disclosure}

ChatGPT was used during the preparation of this paper. The technical ideas are wholly the human authors' own.

The primary usage of ChatGPT was for giving feedback on human-generated drafts. The largest contribution of ChatGPT was in the applications section, where it was asked to import definitions and prove specific applications based on the main technical sections. These proofs were heavily human- edited or rewritten for accuracy and exposition.

\clearpage
\bibliographystyle{alpha}
\bibliography{Bib/abbrev3,Bib/crypto,Bib/project}

@string{ieee =                  {IEEE}}

@string{springer =              "Springer"}

@string{mylncs =                "{LNCS}"}

@string{mylipics =              "{LIPIcs}"}

@string{asiacrypt16name2 =      asiacryptname # "~2016, Part~II"}

@string{asiacrypt16ed =         "Jung Hee Cheon and Tsuyoshi Takagi"}

@string{asiacrypt16vol2 =       "10032"}

@string{asiacrypt16addr =       ""}

@string{asiacrypt16month =      dec}

@string{asiacrypt16pub =        springer_berlin_heidelberg}

@string{asiacrypt19name1 =      asiacryptname # "~2019, Part~I"}

@string{asiacrypt19ed =         "Steven D. Galbraith and Shiho Moriai"}

@string{asiacrypt19vol1 =       "11921"}

@string{asiacrypt19addr =       ""}

@string{asiacrypt19month =      dec}

@string{asiacrypt19pub =        springer_cham}

@string{asiacrypt21name1 =      asiacryptname # "~2021, Part~I"}

@string{asiacrypt21ed =         "Mehdi Tibouchi and Huaxiong Wang"}

@string{asiacrypt21vol1 =       "13090"}

@string{asiacrypt21addr =       ""}

@string{asiacrypt21month =      dec}

@string{asiacrypt21pub =        springer_cham}

@string{asiacrypt25name8 =      asiacryptname # "~2025, Part~VIII"}

@string{asiacrypt25ed =         "Goichiro Hanaoka and Bo-Yin Yang"}

@string{asiacrypt25vol8 =       "16252"}

@string{asiacrypt25addr =       ""}

@string{asiacrypt25month =      dec}

@string{asiacrypt25pub =        springer_singapore}

@string{cryptoaddr =            ""}

@string{crypto13name1 =         cryptoname # "~2013, Part~I"}

@string{crypto13ed =            "Ran Canetti and Juan A. Garay"}

@string{crypto13vol1 =          "8042"}

@string{crypto13month =         aug}

@string{crypto13pub =           springer_berlin_heidelberg}

@string{crypto22name1 =         cryptoname # "~2022, Part~I"}

@string{crypto22ed =            "Yevgeniy Dodis and Thomas Shrimpton"}

@string{crypto22vol1 =          "13507"}

@string{crypto22month =         aug}

@string{crypto22pub =           springer_cham}

@string{crypto23name5 =         cryptoname # "~2023, Part~V"}

@string{crypto23ed =            "Helena Handschuh and Anna Lysyanskaya"}

@string{crypto23vol5 =          "14085"}

@string{crypto23month =         aug}

@string{crypto23pub =           springer_cham}

@string{crypto24name7 =         cryptoname # "~2024, Part~VII"}

@string{crypto24ed =            "Leonid Reyzin and Douglas Stebila"}

@string{crypto24vol7 =          "14926"}

@string{crypto24month =         aug}

@string{crypto24pub =           springer_cham}

@string{crypto25name2 =         cryptoname # "~2025, Part~II"}

@string{crypto25ed =            "Yael Tauman Kalai and Seny F. Kamara"}

@string{crypto25vol2 =          "16001"}

@string{crypto25month =         aug}

@string{crypto25pub =           springer_cham}

@string{eurocrypt23name1 =      eurocryptname # "~2023, Part~I"}

@string{eurocrypt23ed =         "Carmit Hazay and Martijn Stam"}

@string{eurocrypt23vol1 =       "14004"}

@string{eurocrypt23addr =       ""}

@string{eurocrypt23month =      apr}

@string{eurocrypt23pub =        springer_cham}

@string{eurocrypt24name3 =      eurocryptname # "~2024, Part~III"}

@string{eurocrypt24name4 =      eurocryptname # "~2024, Part~IV"}

@string{eurocrypt24ed =         "Marc Joye and Gregor Leander"}

@string{eurocrypt24vol3 =       "14653"}

@string{eurocrypt24vol4 =       "14654"}

@string{eurocrypt24addr =       ""}

@string{eurocrypt24month =      may}

@string{eurocrypt24pub =        springer_cham}

@string{eurocrypt25name3 =      eurocryptname # "~2025, Part~III"}

@string{eurocrypt25ed =         "Serge Fehr and Pierre-Alain Fouque"}

@string{eurocrypt25vol3 =       "15603"}

@string{eurocrypt25addr =       ""}

@string{eurocrypt25month =      may}

@string{eurocrypt25pub =        springer_cham}

@string{eurocrypt26name1 =      eurocryptname # "~2026, Part~I"}

@string{eurocrypt26ed =         "Joan Daemen and Emmanuel Thom{\'e}"}

@string{eurocrypt26vol1 =       "16541"}

@string{eurocrypt26addr =       ""}

@string{eurocrypt26month =      may}

@string{eurocrypt26pub =        springer_cham}

@string{focspub =               "{IEEE} Computer Society Press"}

@string{focs86name =            "27th " # focsname}

@string{focs86addr =            ""}

@string{focs86month =           oct}

@string{focs19name =            "60th " # focsname}

@string{focs19ed =              "David Zuckerman"}

@string{focs19addr =            ""}

@string{focs19month =           nov}

@string{focs22name =            "63rd " # focsname}

@string{focs22addr =            ""}

@string{focs22month =           oct # "~/~" # nov}

@string{icalppubv2 =            dagstuhl}

@string{icalp23name =           "ICALP 2023" # icalpname}

@string{icalp23ed =             "Kousha Etessami and Uriel Feige and Gabriele Puppis"}

@string{icalp23vol =            "261"}

@string{icalp23addr =           ""}

@string{icalp23month =          jul}

@string{itcspub_v3 =            mylipics}

@string{itcs25name =            "ITCS 2025" # itcsname_v4}

@string{itcs25ed =              "Raghu Meka"}

@string{itcs25vol =             "325"}

@string{itcs25addr =            ""}

@string{itcs25month =           jan}

@string{pkc13name =             "PKC~2013" # pkcname_v2}

@string{pkc13ed =               "Kaoru Kurosawa and Goichiro Hanaoka"}

@string{pkc13vol =              "7778"}

@string{pkc13addr =             ""}

@string{pkc13month =            feb # "~/~" # mar}

@string{pkc13pub =              springer_berlin_heidelberg}

@string{tcc20name3 =            "TCC~2020" # tccname  # ", Part~III"}

@string{tcc20ed =               "Rafael Pass and Krzysztof Pietrzak"}

@string{tcc20vol3 =             "12552"}

@string{tcc20addr =             ""}

@string{tcc20month =            nov}

@string{tcc20pub =              springer_cham}

@string{tcc23name4 =            "TCC~2023" # tccname  # ", Part~IV"}

@string{tcc23ed =               "Guy N. Rothblum and Hoeteck Wee"}

@string{tcc23vol4 =             "14372"}

@string{tcc23addr =             ""}

@string{tcc23month =            nov # "~/~" # dec}

@string{tcc23pub =              springer_cham}

@string{tcc24name2 =            "TCC~2024" # tccname  # ", Part~II"}

@string{tcc24name3 =            "TCC~2024" # tccname  # ", Part~III"}

@string{tcc24ed =               "Elette Boyle and Mohammad Mahmoody"}

@string{tcc24vol2 =             "15365"}

@string{tcc24vol3 =             "15366"}

@string{tcc24addr =             ""}

@string{tcc24month =            dec}

@string{tcc24pub =              springer_cham}

@InProceedings{AC:AbbKat25,
  author =       "Kasra Abbaszadeh and
                  Jonathan Katz",
  title =        "Non-Interactive Zero-Knowledge Arguments with Certified Deletion",
  pages =        "381--410",
  editor =       asiacrypt25ed,
  booktitle =    asiacrypt25name8,
  volume =       asiacrypt25vol8,
  address =      asiacrypt25addr,
  month =        asiacrypt25month,
  publisher =    asiacrypt25pub,
  series =       mylncs,
  year =         2025,
  doi =          "10.1007/978-981-95-5125-5_13",
}

@InProceedings{AC:HMNY21,
  author =       "Taiga Hiroka and
                  Tomoyuki Morimae and
                  Ryo Nishimaki and
                  Takashi Yamakawa",
  title =        "Quantum Encryption with Certified Deletion, Revisited: Public Key, Attribute-Based, and Classical Communication",
  pages =        "606--636",
  editor =       asiacrypt21ed,
  booktitle =    asiacrypt21name1,
  volume =       asiacrypt21vol1,
  address =      asiacrypt21addr,
  month =        asiacrypt21month,
  publisher =    asiacrypt21pub,
  series =       mylncs,
  year =         2021,
  doi =          "10.1007/978-3-030-92062-3_21",
}

@InProceedings{AC:CCKW19,
  author =       "Alexandru Cojocaru and
                  L{\'e}o Colisson and
                  Elham Kashefi and
                  Petros Wallden",
  title =        "{QFactory}: Classically-Instructed Remote Secret Qubits Preparation",
  pages =        "615--645",
  editor =       asiacrypt19ed,
  booktitle =    asiacrypt19name1,
  volume =       asiacrypt19vol1,
  address =      asiacrypt19addr,
  month =        asiacrypt19month,
  publisher =    asiacrypt19pub,
  series =       mylncs,
  year =         2019,
  doi =          "10.1007/978-3-030-34578-5_22",
}

@InProceedings{AC:Unruh16,
  author =       "Dominique Unruh",
  title =        "Collapse-Binding Quantum Commitments Without Random Oracles",
  pages =        "166--195",
  editor =       asiacrypt16ed,
  booktitle =    asiacrypt16name2,
  volume =       asiacrypt16vol2,
  address =      asiacrypt16addr,
  month =        asiacrypt16month,
  publisher =    asiacrypt16pub,
  series =       mylncs,
  year =         2016,
  doi =          "10.1007/978-3-662-53890-6_6",
}

@InProceedings{C:BarKhu25,
  author =       "James Bartusek and
                  Dakshita Khurana",
  title =        "On the Power of Oblivious State Preparation",
  pages =        "575--607",
  editor =       crypto25ed,
  booktitle =    crypto25name2,
  volume =       crypto25vol2,
  address =      cryptoaddr,
  month =        crypto25month,
  publisher =    crypto25pub,
  series =       mylncs,
  year =         2025,
  doi =          "10.1007/978-3-032-01878-6_19",
}

@InProceedings{C:BarRai24,
  author =       "James Bartusek and
                  Justin Raizes",
  title =        "Secret Sharing with Certified Deletion",
  pages =        "184--214",
  editor =       crypto24ed,
  booktitle =    crypto24name7,
  volume =       crypto24vol7,
  address =      cryptoaddr,
  month =        crypto24month,
  publisher =    crypto24pub,
  series =       mylncs,
  year =         2024,
  doi =          "10.1007/978-3-031-68394-7_7",
}

@InProceedings{C:BarKhuPor23,
  author =       "James Bartusek and
                  Dakshita Khurana and
                  Alexander Poremba",
  title =        "Publicly-Verifiable Deletion via Target-Collapsing Functions",
  pages =        "99--128",
  editor =       crypto23ed,
  booktitle =    crypto23name5,
  volume =       crypto23vol5,
  address =      cryptoaddr,
  month =        crypto23month,
  publisher =    crypto23pub,
  series =       mylncs,
  year =         2023,
  doi =          "10.1007/978-3-031-38554-4_4",
}

@InProceedings{C:BarKhu23,
  author =       "James Bartusek and
                  Dakshita Khurana",
  title =        "Cryptography with Certified Deletion",
  pages =        "192--223",
  editor =       crypto23ed,
  booktitle =    crypto23name5,
  volume =       crypto23vol5,
  address =      cryptoaddr,
  month =        crypto23month,
  publisher =    crypto23pub,
  series =       mylncs,
  year =         2023,
  doi =          "10.1007/978-3-031-38554-4_7",
}

@InProceedings{C:HMNY22,
  author =       "Taiga Hiroka and
                  Tomoyuki Morimae and
                  Ryo Nishimaki and
                  Takashi Yamakawa",
  title =        "Certified Everlasting Zero-Knowledge Proof for {QMA}",
  pages =        "239--268",
  editor =       crypto22ed,
  booktitle =    crypto22name1,
  volume =       crypto22vol1,
  address =      cryptoaddr,
  month =        crypto22month,
  publisher =    crypto22pub,
  series =       mylncs,
  year =         2022,
  doi =          "10.1007/978-3-031-15802-5_9",
}

@InProceedings{C:GenSahWat13,
  author =       "Craig Gentry and
                  Amit Sahai and
                  Brent Waters",
  title =        "Homomorphic Encryption from Learning with Errors: Conceptually-Simpler, Asymptotically-Faster, Attribute-Based",
  pages =        "75--92",
  editor =       crypto13ed,
  booktitle =    crypto13name1,
  volume =       crypto13vol1,
  address =      cryptoaddr,
  month =        crypto13month,
  publisher =    crypto13pub,
  series =       mylncs,
  year =         2013,
  doi =          "10.1007/978-3-642-40041-4_5",
}

@InProceedings{EC:GoyRai26,
  author =       "Vipul Goyal and
                  Justin Raizes",
  title =        "Proofs of No Intrusion",
  pages =        "510--540",
  editor =       eurocrypt26ed,
  booktitle =    eurocrypt26name1,
  volume =       eurocrypt26vol1,
  address =      eurocrypt26addr,
  month =        eurocrypt26month,
  publisher =    eurocrypt26pub,
  series =       mylncs,
  year =         2026,
  doi =          "10.1007/978-3-032-25291-3_18",
}

@InProceedings{EC:CGJL25,
  author =       "Orestis Chardouvelis and
                  Vipul Goyal and
                  Aayush Jain and
                  Jiahui Liu",
  title =        "Quantum Key Leasing for {PKE} and {FHE} with a Classical Lessor",
  pages =        "248--277",
  editor =       eurocrypt25ed,
  booktitle =    eurocrypt25name3,
  volume =       eurocrypt25vol3,
  address =      eurocrypt25addr,
  month =        eurocrypt25month,
  publisher =    eurocrypt25pub,
  series =       mylncs,
  year =         2025,
  doi =          "10.1007/978-3-031-91131-6_9",
}

@InProceedings{EC:HKMNPY24,
  author =       "Taiga Hiroka and
                  Fuyuki Kitagawa and
                  Tomoyuki Morimae and
                  Ryo Nishimaki and
                  Tapas Pal and
                  Takashi Yamakawa",
  title =        "Certified Everlasting Secure Collusion-Resistant Functional Encryption, and More",
  pages =        "434--456",
  editor =       eurocrypt24ed,
  booktitle =    eurocrypt24name3,
  volume =       eurocrypt24vol3,
  address =      eurocrypt24addr,
  month =        eurocrypt24month,
  publisher =    eurocrypt24pub,
  series =       mylncs,
  year =         2024,
  doi =          "10.1007/978-3-031-58734-4_15",
}

@InProceedings{EC:BGKMRR24,
  author =       "James Bartusek and
                  Vipul Goyal and
                  Dakshita Khurana and
                  Giulio Malavolta and
                  Justin Raizes and
                  Bhaskar Roberts",
  title =        "Software with Certified Deletion",
  pages =        "85--111",
  editor =       eurocrypt24ed,
  booktitle =    eurocrypt24name4,
  volume =       eurocrypt24vol4,
  address =      eurocrypt24addr,
  month =        eurocrypt24month,
  publisher =    eurocrypt24pub,
  series =       mylncs,
  year =         2024,
  doi =          "10.1007/978-3-031-58737-5_4",
}

@InProceedings{EC:AKNYY23,
  author =       "Shweta Agrawal and
                  Fuyuki Kitagawa and
                  Ryo Nishimaki and
                  Shota Yamada and
                  Takashi Yamakawa",
  title =        "Public Key Encryption with Secure Key Leasing",
  pages =        "581--610",
  editor =       eurocrypt23ed,
  booktitle =    eurocrypt23name1,
  volume =       eurocrypt23vol1,
  address =      eurocrypt23addr,
  month =        eurocrypt23month,
  publisher =    eurocrypt23pub,
  series =       mylncs,
  year =         2023,
  doi =          "10.1007/978-3-031-30545-0_20",
}

@InProceedings{FOCS:Zhang22,
  author =       "Jiayu Zhang",
  title =        "Classical Verification of Quantum Computations in Linear Time",
  pages =        "46--57",
  booktitle =    focs22name,
  address =      focs22addr,
  month =        focs22month,
  publisher =    focspub,
  year =         2022,
  doi =          "10.1109/FOCS54457.2022.00012",
}

@InProceedings{FOCS:LomMaSpo22,
  author =       "Alex Lombardi and
                  Fermi Ma and
                  Nicholas Spooner",
  title =        "Post-Quantum Zero Knowledge, Revisited or: How to Do Quantum Rewinding Undetectably",
  pages =        "851--859",
  booktitle =    focs22name,
  address =      focs22addr,
  month =        focs22month,
  publisher =    focspub,
  year =         2022,
  doi =          "10.1109/FOCS54457.2022.00086",
}

@InProceedings{FOCS:GheVid19,
  author =       "Alexandru Gheorghiu and
                  Thomas Vidick",
  title =        "Computationally-Secure and Composable Remote State Preparation",
  pages =        "1024--1033",
  editor =       focs19ed,
  booktitle =    focs19name,
  address =      focs19addr,
  month =        focs19month,
  publisher =    focspub,
  year =         2019,
  doi =          "10.1109/FOCS.2019.00066",
}

@InProceedings{FOCS:GolMicWig86,
  author =       "Oded Goldreich and
                  Silvio Micali and
                  Avi Wigderson",
  title =        "Proofs that Yield Nothing But their Validity and a Methodology of Cryptographic Protocol Design (Extended Abstract)",
  pages =        "174--187",
  booktitle =    focs86name,
  address =      focs86addr,
  month =        focs86month,
  publisher =    focspub,
  year =         1986,
  doi =          "10.1109/SFCS.1986.47",
}

@InProceedings{ICALP:GheMetPor23,
  author =       "Alexandru Gheorghiu and
                  Tony Metger and
                  Alexander Poremba",
  title =        "Quantum Cryptography with Classical Communication: Parallel Remote State Preparation for Copy-Protection, Verification, and More",
  pages =        "67:1--67:17",
  editor =       icalp23ed,
  booktitle =    icalp23name,
  volume =       icalp23vol,
  address =      icalp23addr,
  month =        icalp23month,
  publisher =    icalppubv2,
  series =       mylipics,
  year =         2023,
  doi =          "10.4230/LIPIcs.ICALP.2023.67",
}

@InProceedings{ITCS:Zhang25,
  author =       "Jiayu Zhang",
  title =        "Formulations and Constructions of Remote State Preparation with Verifiability, with Applications",
  pages =        "96:1--96:19",
  editor =       itcs25ed,
  booktitle =    itcs25name,
  volume =       itcs25vol,
  address =      itcs25addr,
  month =        itcs25month,
  publisher =    itcspub_v3,
  year =         2025,
  doi =          "10.4230/LIPIcs.ITCS.2025.96",
}

@InProceedings{PKC:KatThiZho13,
  author =       "Jonathan Katz and
                  Aishwarya Thiruvengadam and
                  Hong-Sheng Zhou",
  title =        "Feasibility and Infeasibility of Adaptively Secure Fully Homomorphic Encryption",
  pages =        "14--31",
  editor =       pkc13ed,
  booktitle =    pkc13name,
  volume =       pkc13vol,
  address =      pkc13addr,
  month =        pkc13month,
  publisher =    pkc13pub,
  series =       mylncs,
  year =         2013,
  doi =          "10.1007/978-3-642-36362-7_2",
}

@InProceedings{TCC:CGLR24,
  author =       "Alper {\c C}akan and
                  Vipul Goyal and
                  Chen-Da Liu-Zhang and
                  Jo{\~a}o Ribeiro",
  title =        "Unbounded Leakage-Resilience and Intrusion-Detection in a Quantum World",
  pages =        "159--191",
  editor =       tcc24ed,
  booktitle =    tcc24name2,
  volume =       tcc24vol2,
  address =      tcc24addr,
  month =        tcc24month,
  publisher =    tcc24pub,
  series =       mylncs,
  year =         2024,
  doi =          "10.1007/978-3-031-78017-2_6",
}

@InProceedings{TCC:AnaHuHua24,
  author =       "Prabhanjan Ananth and
                  Zihan Hu and
                  Zikuan Huang",
  title =        "Quantum Key-Revocable Dual-Regev Encryption, Revisited",
  pages =        "257--288",
  editor =       tcc24ed,
  booktitle =    tcc24name3,
  volume =       tcc24vol3,
  address =      tcc24addr,
  month =        tcc24month,
  publisher =    tcc24pub,
  series =       mylncs,
  year =         2024,
  doi =          "10.1007/978-3-031-78020-2_9",
}

@InProceedings{TCC:AnaPorVai23,
  author =       "Prabhanjan Ananth and
                  Alexander Poremba and
                  Vinod Vaikuntanathan",
  title =        "Revocable Cryptography from Learning with Errors",
  pages =        "93--122",
  editor =       tcc23ed,
  booktitle =    tcc23name4,
  volume =       tcc23vol4,
  address =      tcc23addr,
  month =        tcc23month,
  publisher =    tcc23pub,
  series =       mylncs,
  year =         2023,
  doi =          "10.1007/978-3-031-48624-1_4",
}

@InProceedings{TCC:CheHerVu23,
  author =       "C{\'e}line Chevalier and
                  Paul Hermouet and
                  Quoc-Huy Vu",
  title =        "Semi-quantum Copy-Protection and More",
  pages =        "155--182",
  editor =       tcc23ed,
  booktitle =    tcc23name4,
  volume =       tcc23vol4,
  address =      tcc23addr,
  month =        tcc23month,
  publisher =    tcc23pub,
  series =       mylncs,
  year =         2023,
  doi =          "10.1007/978-3-031-48624-1_6",
}

@InProceedings{TCC:BroIsl20,
  author =       "Anne Broadbent and
                  Rabib Islam",
  title =        "Quantum Encryption with Certified Deletion",
  pages =        "92--122",
  editor =       tcc20ed,
  booktitle =    tcc20name3,
  volume =       tcc20vol3,
  address =      tcc20addr,
  month =        tcc20month,
  publisher =    tcc20pub,
  series =       mylncs,
  year =         2020,
  doi =          "10.1007/978-3-030-64381-2_4",
}

@Misc{EPRINT:DalSpo22,
  author =       "Marcel Dall'Agnol and
                  Nicholas Spooner",
  title =        "On the necessity of collapsing",
  year =         2022,
  howpublished = "Cryptology ePrint Archive, Report 2022/786",
  url =          "https://eprint.iacr.org/2022/786",
}

@inproceedings{Mahadev18,
  author    = {Urmila Mahadev},
  title     = {Classical Verification of Quantum Computations},
  booktitle = {2018 IEEE 59th Annual Symposium on Foundations of Computer Science (FOCS)},
  year      = {2018},
  pages     = {259--267},
  publisher = {IEEE},
  address   = {Paris, France},
  doi       = {10.1109/FOCS.2018.00033}
}

@inproceedings{BCMVV18,
  author    = {Zvika Brakerski and
               Paul F. Christiano and
               Urmila Mahadev and
               Umesh V. Vazirani and
               Thomas Vidick},
  editor    = {Mikkel Thorup},
  title     = {A Cryptographic Test of Quantumness and Certifiable Randomness from
               a Single Quantum Device},
  booktitle = {59th {IEEE} Annual Symposium on Foundations of Computer Science, {FOCS}
               2018, Paris, France, October 7-9, 2018},
  pages     = {320--331},
  publisher = {{IEEE} Computer Society},
  year      = {2018},
  url       = {https://doi.org/10.1109/FOCS.2018.00038},
  doi       = {10.1109/FOCS.2018.00038},
  bibsource = {dblp computer science bibliography, https://dblp.org}
}

@article{DBLP:journals/iacr/KalaiKR26,
  author       = {Yael Tauman Kalai and
                  Dakshita Khurana and
                  Justin Raizes},
  title        = {How to Classically Verify a Quantum Cat without Killing It},
  journal      = {{IACR} Cryptol. ePrint Arch.},
  volume       = {2026},
  pages        = {210},
  year         = {2026},
  url          = {https://eprint.iacr.org/2026/210},
  bibsource    = {dblp computer science bibliography, https://dblp.org}
}

@article{eprint:LST26,
  author       = {Zhengnan Lai and
                  Nicholas Spooner and
                  Max Tromanhauser},
  title        = {How to Define Expected Quantum Polynomial-Time Zero Knowledge Simulation},
  journal      = {{IACR} Cryptol. ePrint Arch.},
  volume       = {2026},
  pages        = {1500},
  year         = {2026},
  url          = {https://eprint.iacr.org/2026/1500},
  bibsource    = {dblp computer science bibliography, https://dblp.org}
}

@article{TOC:Aar05,
  author       = {Scott Aaronson},
  title        = {Limitations of Quantum Advice and One-Way Communication},
  journal      = {Theory Comput.},
  volume       = {1},
  number       = {1},
  pages        = {1--28},
  year         = {2005},
  url          = {https://doi.org/10.4086/toc.2005.v001a001},
  doi          = {10.4086/TOC.2005.V001A001},
  bibsource    = {dblp computer science bibliography, https://dblp.org}
}

@article{TIT:Win99,
  author       = {Andreas J. Winter},
  title        = {Coding theorem and strong converse for quantum channels},
  journal      = {{IEEE} Trans. Inf. Theory},
  volume       = {45},
  number       = {7},
  pages        = {2481--2485},
  year         = {1999},
  url          = {https://doi.org/10.1109/18.796385},
  doi          = {10.1109/18.796385},
  bibsource    = {dblp computer science bibliography, https://dblp.org}
}

@inproceedings{CRYPTO:KLYY26,
  author       = {Fuyuki Kitagawa and
                  Jiahui Liu and
                  Shota Yamada and
                  Takashi Yamakawa},
  editor       = {Nadia Heninger and
                  Mike Rosulek},
  title        = {A Unified Approach to Quantum Key Leasing with a Classical Lessor},
  booktitle    = {Advances in Cryptology - {CRYPTO} 2026 - 46th Annual International
                  Cryptology Conference, Santa Barbara, CA, USA, August 17-20, 2026,
                  Proceedings, Part {V}},
  series       = {Lecture Notes in Computer Science},
  volume       = {16804},
  pages        = {90--121},
  publisher    = {Springer},
  year         = {2026},
  url          = {https://doi.org/10.1007/978-3-032-35409-9\_4},
  doi          = {10.1007/978-3-032-35409-9\_4},
  bibsource    = {dblp computer science bibliography, https://dblp.org}
}

\end{document}